
\documentclass[draftcls,journal,onecolumn]{IEEEtran}
\usepackage{amssymb,amsmath,amsthm,subfig,cite}
\usepackage{tikz}
\usetikzlibrary{shadows,positioning,shapes.symbols}


\DeclareMathOperator{\col}{col}
\DeclareMathOperator{\row}{row}
\DeclareMathOperator{\diag}{diag}
\DeclareMathOperator{\GL}{GL}
\DeclareMathOperator{\rank}{rank}
\DeclareMathOperator{\shape}{shape}
\DeclareMathOperator{\RCF}{RCF}
\newcommand{\len}{s} 
\newcommand{\mat}[1]{\begin{bmatrix} #1 \end{bmatrix}}
\newcommand{\submodule}[2]{\left[\!\!\left[\genfrac{}{}{0pt}{}{#1}{#2}\right]\!\!\right]_q}
\newcommand{\gauss}[2]{{\genfrac{[}{]}{0pt}{}{#1}{#2}}_q}

\newtheoremstyle{plain}
  {\medskipamount}{\medskipamount}{\rmfamily}{\parindent}{\bfseries}{:}{0.5em}
  {\thmname{#1}\thmnumber{\ #2}\thmnote{\ (#3)}}
\newenvironment{exam}{\myexample}{\qed\endmyexample}

\makeatletter\@ifundefined{endIEEEproof}{}{\def\qed{\endIEEEproof}}\makeatother
\newtheorem{thm}{Theorem}
\newtheorem{prop}{Proposition}
\newtheorem{lem}{Lemma}
\newtheorem{cor}{Corollary}
\newtheorem{defi}{Definition}

\ifCLASSOPTIONonecolumn
\tikzset{sbox/.style={minimum width=2.2ex,minimum height=2.2ex,inner sep=0pt},
lbox/.style={fill=white,inner sep=1pt}}
\else
\tikzset{sbox/.style={minimum width=2.1ex,minimum height=2.1ex,inner sep=0pt},
lbox/.style={fill=white,inner sep=1pt}}
\fi


\newcommand{\CC}{\mathbb{C}}
\newcommand{\ZZ}{\mathbb{Z}}

\newcommand{\FF}{\mathbb{F}}
\newcommand{\calA}{\mathcal{A}}

\newcommand{\calC}{\mathcal{C}}

\newcommand{\calG}{\mathcal{G}}

\newcommand{\calO}{\mathcal{O}}

\newcommand{\calR}{\mathcal{R}}
\newcommand{\calS}{\mathcal{S}}
\newcommand{\calT}{\mathcal{T}}

\newcommand{\calX}{\mathcal{X}}
\newcommand{\calY}{\mathcal{Y}}

\title{Communication over Finite-Chain-Ring\\Matrix Channels}

\author{\IEEEauthorblockN{Chen~Feng,
Roberto~W.~N\'obrega,
Frank~R.~Kschischang~\IEEEmembership{Fellow,~IEEE},
Danilo~Silva}
\thanks{Manuscript received April 8, 2013; revised February 18, 2014.
This paper was presented in part at the IEEE International
Symposium on Information Theory, Cambridge, MA, July 2012.}%
\thanks{C.~Feng and F.~R.~Kschischang
are with the Dept.~of
Elec. \& Comp. Eng., U. of Toronto, Canada,
\texttt{\{cfeng@eecg,frank@comm\}.utoronto.ca},
R.~W.~N\'obrega and D.~Silva are with the Dept.~of
Elec. Eng, Federal U. of Santa Catarina, Brazil,
\texttt{\{rwnobrega,danilo\}@eel.ufsc.br}.
The work of R.~W.~N\'obrega and D.~Silva was supported in part by
CNPq-Brazil.} }
\IEEEpubid{0000--0000/00\$00.00~\copyright~2014 IEEE}

\begin{document}
\maketitle

\begin{abstract}
Though network coding is traditionally performed over finite
fields, recent work on nested-lattice-based network coding
suggests that, by allowing network coding over certain finite
rings, more efficient physical-layer network coding schemes
can be constructed.  This paper considers the problem of
communication over a finite-ring matrix channel $Y =
AX + BE$, where $X$ is the channel input, $Y$ is the channel
output, $E$ is random error, and $A$ and $B$ are random
transfer matrices.  Tight capacity results are obtained and
simple polynomial-complexity capacity-achieving coding schemes
are provided under the assumption that $A$ is uniform over
all full-rank matrices and $BE$ is uniform over all rank-$t$
matrices, extending the work of Silva, Kschischang and K\"{o}tter
(2010), who handled the case of finite fields.  This extension
is based on several new results, which may be of independent
interest, that generalize concepts and methods from matrices
over finite fields to matrices over finite chain rings.
\end{abstract}

\begin{IEEEkeywords}
Lattice network coding, finite chain rings, matrix normal form,
matrix channels, channel capacity.
\end{IEEEkeywords}

\section{Introduction}

\IEEEPARstart{M}{atrix}  channels provide a useful abstraction
for studying
error control for linear network coding schemes.  Transmitted and
received packets, drawn from some ambient message space $\Omega$, can be
gathered into the rows of a transmitted matrix $X$ and a received matrix
$Y$, respectively, while error packets injected into the network can be
described by the rows of an error matrix $E$.  Due to the nature of
linear network coding, the linear transformation of transmitted packets
$X$ and the linear propagation of error packets $E$ can be modelled as
a multiplicative-additive matrix channel (MAMC), defined via
\begin{equation}
  Y = AX + BE
\label{eqn:MatrixChannel}
\end{equation}
for appropriate transfer matrices $A$, $B$. One typically assumes that
$A$, $B$, and $E$ are random matrices (drawn according to certain
distributions) and independent of $X$.  This type of stochastic model is
appropriate in situations where random network coding is performed and
the error matrix $E$ arises due to decoding errors, rather than from the
malicious actions of an adversary.

When the ambient space $\Omega$ is a vector space over a finite field,
tight capacity bounds and simple, asymptotically capacity-achieving,
coding schemes are developed in \cite{SKK10}, under certain
distributions of $A$, $B$, and $E$.  Similar work along this line can be
found, e.g., in \cite{MU13,Jafari11,Yang10,Nobrega11}.  Prior work on
matrix channels for linear network coding has mainly focused on the
finite-field case.

In this paper, we consider a more general ambient space $\Omega$ of the
form
\begin{equation}
\Omega = T / \langle d_1 \rangle \times T / \langle d_2 \rangle \times
         \cdots \times T / \langle d_m \rangle,
\label{eq:ambient1}
\end{equation}
where $T$ is a sub-ring of $\CC$ forming a principal ideal domain and
$d_1, d_2, \ldots, d_m \in T$ are nonzero non-unit elements. To handle
such an ambient space, we need to generalize the work of \cite{SKK10}
from finite fields to finite chain rings.  The motivation for
considering this generalization arises from nested-lattice
physical-layer network coding
\cite{NG09,NWS07,FSK-submitted,TNBH12,QJ12}, in which the ambient space
$\Omega$ is given precisely in the form of \eqref{eq:ambient1}.  As in
\cite{SKK10}, we gather insight by first studying two variations: the
noise-free multiplicative matrix channel (MMC) $Y=AX$, and the
multiplication-free additive matrix channel (AMC) $Y=X+BE$.

The essential step in handling the MMC over finite fields is based on
the concept of reduced row echelon form (RREF) \cite{SKK10}.  Due to the
presence of zero divisors, the extension to finite chain rings of this
concept is not straightforward.  Whereas over a finite field any echelon
form of a matrix will have the same number of nonzero rows (equal to the
matrix rank), this is not the case for matrices over finite chain rings.
To address this difficulty, several possible extensions of the RREF have
been proposed in the literature, including the Howell form \cite{How86,
Stor00} and the $p$-basis \cite{VSR96}.  In this paper, we use the row
canonical form defined in the dissertation of Kiermaier
\cite{Kiermaier12}, which is itself a variant of the matrix canonical
form described in an exercise in \cite{McDonald74}, and traces back to
earlier ideas of Fuller \cite{F55} and Birkhoff \cite{B35}; see
Section~\ref{sec:canonical} for more details.  This row canonical form is
particularly suitable for studying matrix channels with an ambient space
of the form \eqref{eq:ambient1}.  We provide a new elementary proof for
the existence and uniqueness of this row canonical form.  Based on these
results, we introduce a notion of (combinatorially dominant)
\emph{principal} row canonical forms, which allows us to obtain simple,
capacity-achieving, coding schemes for the MMC.

\IEEEpubidadjcol

The key step in handling the AMC over finite fields is counting the
number of matrices of a given rank $t$. The rank $t$ may be regarded as
a measure of ``noise level'' of the matrix $BE$.  For matrices over
finite chain rings, the concept of ``rank'' is more subtle, and must be
suitably generalized. We first show how the concept of ``shape''---the
appropriate chain-ring-theoretic generalization of dimension---can be
used to indicate the noise level.  We then derive an enumeration result
that counts the number of matrices of a given shape. This enables us to
obtain capacity results and simple capacity-achieving coding schemes for
the MMC.

Building upon the generalizations for the two special cases, we derive
tight capacity bounds and simple, polynomial-complexity, asymptotically
capacity-achieving coding schemes for the MAMC model related to
(\ref{eqn:MatrixChannel}). We also consider several possible extensions
of the MAMC model.

The remainder of this paper is organized as follows.
Section~\ref{sec:motivating} motivates the study
of matrix channels over finite rings.
Section~\ref{sec:preli} reviews some basic facts about finite chain
rings, modules and matrices over finite chain rings.
Section~\ref{sec:canonical} introduces the row canonical form.
Section~\ref{sec:matrix-constraint} presents several enumeration results
and construction methods for matrices over finite chain rings.  These
new results provide us with essential algebraic tools for extending the
work of \cite{SKK10}.    Section~\ref{sec:channel} introduces a
channel-decomposition technique that connects the matrix channels
described in Section~\ref{sec:motivating} to the
algebraic tools developed in Sections~\ref{sec:canonical} and
\ref{sec:matrix-constraint}.  Three basic channel models (MMC, AMC, and
MAMC) are addressed in Sections~\ref{sec:mmc}, \ref{sec:amc} and
\ref{sec:ammc}, respectively, where capacity and coding results are
presented.  Section~\ref{sec:extension} presents possible extensions.
Finally, Section~\ref{sec:conclusion} concludes the paper.

\section{Motivating Examples}\label{sec:motivating}


In this section, we explain how finite rings arise naturally in the
context of nested-lattice-based physical-layer network coding (PNC).  We
then introduce an end-to-end matrix model for wireless relay networks
based on such PNC schemes.

\begin{figure}[h]
\centering
\subfloat[Transmitted constellation]{
\scalebox{0.75}{
\includegraphics{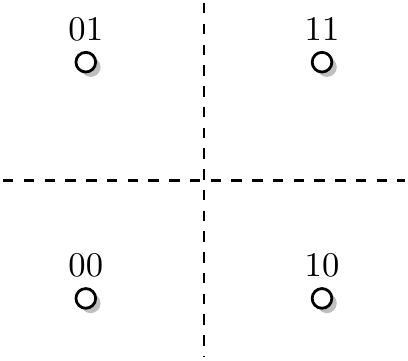}
}
\label{fig:transmitted}
}\qquad
\subfloat[Received constellation]{
\scalebox{0.75}{
\includegraphics{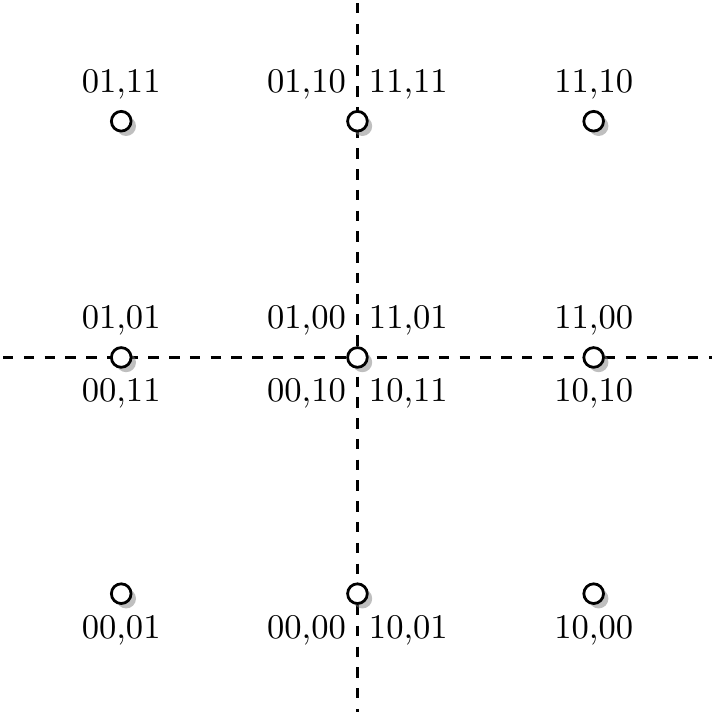}
}
\label{fig:received}
}
\caption{Transmitted and received constellations.}
\end{figure}

We begin with the role of finite rings.  As a simple starting point,
consider a PNC building block where a relay attempts to decode, at the
output of a Gaussian multiple access channel with complex-valued channel
gains, a function $f$ of messages $w_1=(w_{11},w_{12})$ and
$w_2=(w_{21},w_{22})$ sent from two transmitters, where each transmitter
uses a quaternary phase-shift-keying (QPSK) signal constellation with
Gray mapping as shown in Fig.~\ref{fig:transmitted}.  Here $w_{ij} \in
\{ 0,1 \}$.  Assume that the channel gains (at the relay) are $h_1 = 1$
and $h_2 = i$.  Then Fig.~\ref{fig:received} shows the nominal received
constellation (which is perturbed by Gaussian noise), from which the
relay must decode.  Some points in the received constellation
correspond to more than one combination of transmitted messages; for
example, $(w_1,w_2)=(01,10)$ overlaps $(w_1,w_2)=(11,11)$.  Clearly
these overlapping points  must correspond to the same value $f$, since
otherwise the relay cannot possibly form $f$ correctly.  Interestingly,
in order to achieve this, one can interpret the messages $\{w_{j1}
w_{j2}\}$ as elements in the finite ring $\ZZ_2[i] = \{ w_{j1} + w_{j2}
i \mid w_{j1}, w_{j2} \in \ZZ_2 \}$.  For example, $01$ and $10$ are
interpreted as $0+i$ and $1+0i$, respectively.  Now, consider  the
function $f: \ZZ_2[i] \times \ZZ_2[i] \to \ZZ_2[i]$ given by $f(w_1,
w_2) = w_1 + i w_2$.  In this case we have
\[
f(01, 10) = (0+i) + i(1+0i) = 0+0i = f(11,11),
\]
i.e., the points $(01,10)$ and $(11,11)$ have the same function value
$00$.  Moreover, this happens for all the overlapping points in
Fig.~\ref{fig:received} and for other channel gains as well. As such,
the finite ring $\ZZ_2[i]$ seems to be a ``good match'' for a QPSK
constellation. In fact, for \emph{every} nested-lattice-based
constellation, there is a matching finite ring, as we have shown in our
previous work \cite{FSK-submitted}.

\begin{figure}[h]
\centering\scalebox{0.75}{
\includegraphics{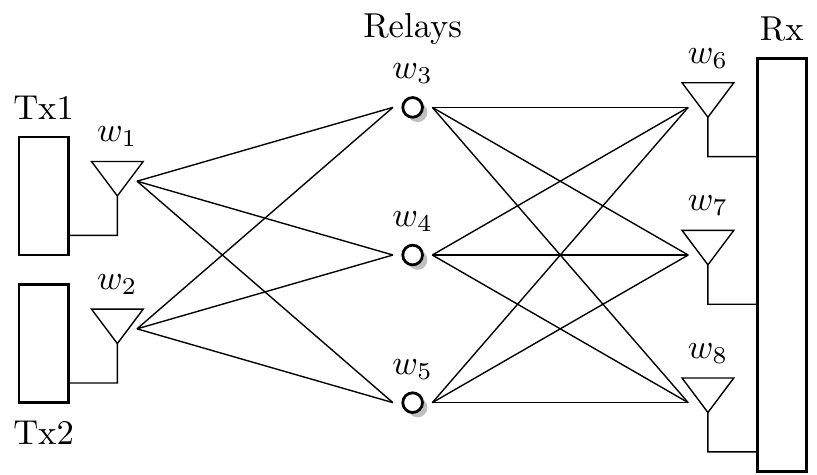}
}
\caption{A wireless relay network with three relays.}
\label{fig:relay}
\end{figure}

Next, we introduce an end-to-end matrix model that allows us to study
wireless relay networks with PNC.  Fig.~\ref{fig:relay} illustrates a
wireless relay network consisting of two transmitters, three relays, and
a single receiver (with three antennas).  Suppose that the network
employs (nested-lattice-based) PNC and the packets are over some finite
ring $R$.  Let $w_1, w_2$ be the packets at the transmitters, and let
$w_6, w_7, w_8$ be the packets at the receiver. Using PNC, each relay
node first decodes a linear combination $w_j$ ($j = 3, 4, 5$) of the
packets $w_1, w_2$, and then transmits this combination simultaneously.
Hence, we have $w_j = a_{1j} w_1 + a_{2j} w_2$ for some $a_{1j}, a_{2j}
\in R$, where $j = 3, 4, 5$.  Similarly, $w_j = a_{3j} w_3 + a_{4j} w_4
+ a_{5j} w_5$, where $j = 6, 7, 8$.  Clearly, the relation between the
transmitted packets and the received packets is given by $Y = AX$, where
\[
X = \mat{w_1 \\ w_2}, \ Y = \mat{w_6 \\ w_7 \\ w_8}
\]
and
\[
A = \mat{a_{36} & a_{46} & a_{56} \\ a_{37} & a_{47} & a_{57} \\
a_{38} & a_{48} & a_{58}}
\mat{a_{13} & a_{23} \\ a_{14} & a_{24} \\
a_{15} & a_{25}} \in R^{3 \times 2}.
\]
This gives rise to a matrix channel for the receiver.

Note that relays may sometimes introduce decoding errors.  Suppose that
the relay at the bottom of Fig.~\ref{fig:relay} makes a decoding error,
i.e., $w_5 = a_{15} w_1 + a_{25} w_2 + e$, where $e$ represents the
error packet. In this case, the receiver observes $Y = AX + Z$, where
$A$ is the same as before, and
\[
Z = \mat{a_{56} \\ a_{57} \\ a_{58}} e.
\]

The above example can be generalized to a large network.  Suppose that
we now have $n$ transmitters, $N$ relays, and $N$ receivers (each with a
single antenna).  Suppose that these receivers are connected to a
central processor (similar to the architecture of small cells or
cloud-based radio access networks).  Clearly, the central processor
observes a matrix channel $Y = AX + Z$, where $A$ is of size $N \times
n$.

To sum up, the matrix model $Y = AX + Z$ (over some finite ring)
provides a general abstraction for studying wireless relay networks with
nested-lattice-based PNC.

\section{Preliminaries}\label{sec:preli}

In this section, we present some basic results for finite chain rings
and modules and matrices over finite chain rings.  This section
establishes notation and the results that will be used later for the study
of matrix channels over finite rings;
nevertheless, this material is standard; see e.g.,
\cite{McDonald74,M84,Brown93,NS00,HL00,Nechaev.08,KS11} for more details.
To make the paper more self-contained, Appendix~\ref{sec:rings} reviews
some basic facts about rings and ideals.

\subsection{Finite Chain Rings}

All rings in this paper will be commutative with identity $1 \ne 0$.  A
ring $R$ is called a \emph{chain ring} if the ideals of $R$ satisfy the
chain condition: for any two ideals $I, J$ of $R$, either $I \subseteq
J$ or $J \subseteq I$.  If $R$ is a chain ring with finitely many
elements, then $R$ is called a \emph{finite chain ring}. Clearly, a
finite chain ring has a unique maximal ideal, and hence is local.  It is
known \cite{McDonald74} that a finite ring is a chain ring if and only
if it is a local principal ideal ring (PIR); thus, in a finite chain
ring, all ideals are principal.  Examples of finite chain rings include
$\ZZ_{p^n}$ (the ring of integers modulo $p^n$ where $p$ is a prime) and
Galois rings.

Let $R$ be a finite chain ring, and let $\pi \in R$ be any
generator of the maximal ideal of $R$.  Then $R/\langle \pi
\rangle$ is the residue field of $R$.  It can be shown (see,
e.g., \cite{McDonald74}) that every ideal $I$ of $R$,
including the zero ideal $\langle 0 \rangle$, is generated by
a power of $\pi$, i.e., $I = \langle \pi^{l} \rangle$ for some
$l \ge 0$.  It follows that $\pi$ is nilpotent; we denote by
$\len$ the \emph{nilpotency index} of $\pi$, i.e., the smallest
positive integer such that $\pi^{\len} = 0$.  There are, then,
exactly $\len + 1$ distinct ideals of $R$, namely, $R=\langle
\pi^0 \rangle, \langle \pi^1 \rangle, \ldots, \langle
\pi^{\len} \rangle = \{ 0 \} $ which form a chain (with
respect to set inclusion):
\[
  R = \langle \pi^0 \rangle \supset \langle \pi^1 \rangle
      \supset  \cdots \supset \langle \pi^{\len -1} \rangle
      \supset \langle \pi^{\len} \rangle = \{ 0 \}.
\]
Thus, $\len$ is often called the \emph{chain length} of $R$.
We refer to $R$ as a $(q, \len)$ chain ring if $R$ has a
residue field of size $q$ and a chain length of $\len$.

\begin{exam}\label{ex:fcr}
The ideals of $\ZZ_8$ form a chain with respect to set
inclusion:
\[
  R = \langle 1 \rangle \supset \langle 2 \rangle
      \supset \langle 4 \rangle
      \supset \langle 0 \rangle = \{ 0 \}.
\]
Thus, $\ZZ_8$ is a finite chain ring with chain length
$\len=3$.  Since the residue field $\ZZ_8 / \langle 2 \rangle$
is isomorphic to $\FF_2$, $\ZZ_8$ is a $(2,3)$ chain ring.
\end{exam}

Now let $\calR(R, \pi) \subseteq R$ be a complete set of
residues with respect to $\pi$ and, without loss of
generality, assume that $0 \in \calR(R, \pi)$.  Every element
$a \in R$ then has a unique representation, called the {\em
$\pi$-adic decomposition} of $a$ (with respect to $\calR(R,
\pi)$), in the form
\begin{equation}\label{eq:pi-adic}
   a = a_0 + a_1 \pi + \cdots + a_{\len -1} \pi^{\len -1},
\end{equation}
where $a_0, \ldots, a_{\len - 1} \in \calR(R, \pi)$.  It
follows from the uniqueness of \eqref{eq:pi-adic} that the
size of $R$ is $q^\len$, i.e., the number of elements in a
$(q, \len)$ chain ring is $q^\len$.  Thus, like a finite
field, a finite chain ring has a cardinality that is an
integer power of a prime number.

The \emph{degree} of a nonzero element $a_0 + a_1 \pi + \cdots
+ a_{\len-1}\pi^{\len-1} \in R$, denoted by $\deg(a)$, is
defined as the \emph{least} index $j$ for which $a_j \ne 0$.
By convention, the degree of $0$ is defined as $\len$.  All
elements of the same degree are associates in $R$.  Further,
$a$ divides $b$ if and only if $\deg(a) \le \deg(b)$.
Finally, $\deg(a+b) \geq \min\{ \deg(a), \deg(b) \}$, i.e.,
adding two elements never results in an element of lower
degree.

\begin{exam}
Let $\calR(\ZZ_8, 2) = \{ 0, 1 \}$.  The $2$-adic
decomposition of $5 \in \ZZ_8$ is $5 = 1 + 0 \cdot 2 + 1 \cdot
2^2$.  The elements in $\ZZ_8$ of degree $0$ (respectively,
$1$, $2$, and $3$) are $\{ 1, 3, 5, 7\}$ (respectively, $\{ 2,
6 \}$, $\{ 4 \}$, and $\{ 0 \}$).
\end{exam}

Finally, we present two methods for constructing finite chain
rings.

If $R$ is itself a $(q, \len)$ chain ring with maximal ideal
$\langle \pi \rangle$, then the quotient $R / \langle \pi^l
\rangle$ ($0 < l < \len$) is a $(q, l)$ chain ring.  This
method constructs new finite chain rings from existing ones.

If $T$ is a principal ideal domain (PID), and $p$ is a prime in $T$, then $T / \langle
p \rangle$ is a field, since $\langle p \rangle$ is a maximal
ideal of $T$.  Let $q$ be the size of $T / \langle p \rangle$
and suppose that $q$ is finite.  Then the quotient $T /
\langle p^l \rangle$ is a $(q, l)$ ($l > 0$) chain ring.  This
method constructs finite chain rings from PIDs.

\subsection{Modules over Finite Chain Rings}

A module is to a ring as a vector space is to a field.  More
formally, an $R$-module $M$ is an abelian group $(M, +)$
together with an action of $R$ on $M$ satisfying the
following conditions for all $m,n \in M$ and for all $a,b \in
R$:
\begin{enumerate}
\item $1 m = m$ and $(a b) m = a( b m)$
\item $(a + b) m = am + bm$
\item $a(m + n) = a m + an$
\end{enumerate}

When $R$ is a finite chain ring, an $R$-module is always
isomorphic to a direct product of various ideals of $R$;  this
structure can be described by a ``shape.'' An \emph{$\len$-shape}
$\mu = (\mu_1, \mu_2, \ldots, \mu_\len)$ is simply a sequence of
non-decreasing non-negative integers, i.e., $0 \le \mu_1 \le
\mu_2 \le \cdots \le \mu_\len$.  We denote by $|\mu|$ the sum
of its components, i.e., $|\mu| = \sum_{i = 1}^\len \mu_i$.
For later notational convenience, we define the ``zeroth
component'' of a shape as $\mu_0=0$.

An $\len$-shape $\kappa = (\kappa_1, \ldots, \kappa_{\len})$
is said to be a \emph{subshape} of $\mu = (\mu_1, \ldots,
\mu_\len)$, written $\kappa \preceq \mu$, if $\kappa_i \le
\mu_i$ for all $i = 1, \ldots, \len$.  Thus, for example, $(1,
1, 3) \preceq (2, 4, 4)$.  The number of subshapes of the
$\len$-shape $(m, \ldots, m)$ is given by $\binom{m +
\len}{\len}$, which implies that the number of subshapes of
$\mu = (\mu_1, \ldots, \mu_\len)$ is upper-bounded by
$\binom{\mu_s + \len}{\len}$.

Two $\len$-shapes can be added together to form a new
$\len$-shape simply by adding componentwise.  Thus, for
example, $(1, 1, 3) + (2, 4, 4) = (3, 5, 7)$. Also, for a
shape $\mu = (\mu_1, \ldots, \mu_\len)$ and a positive integer
$m$ we define $\mu/m = (\mu_1/m, \ldots, \mu_\len/m)$ (which
is an $\len$-tuple, but not necessarily a shape).  For
convenience, we will sometimes identify the integer $t$ with
the $\len$-shape $(t, \ldots, t)$.  Thus, for example, $\mu
\preceq t$ means $\mu_i \le t$ for all $i$, $\kappa = t$ means
$\kappa_i = t$ for all $i$, and $\mu - t = (\mu_1 - t, \ldots,
\mu_\len - t)$, assuming $ t \preceq \mu$.

Let $R$ be a $(q,s)$ chain ring with maximal ideal $\langle
\pi \rangle$.  For any $\len$-shape $\mu$, we define the
$R$-module $R^{\mu}$ as
\ifCLASSOPTIONonecolumn
\begin{equation}\label{eq:shape-def}
R^{\mu} \triangleq
  \underbrace{\langle 1\rangle\times\cdots\times\langle 1\rangle}_{\mu_1}
  \times
  \underbrace{\langle\pi\rangle\times\cdots\times\langle\pi\rangle}_{\mu_{2}-\mu_1}
  \times \cdots
  \times
  \underbrace{\langle \pi^{\len -1} \rangle \times \cdots \times \langle \pi^{\len -1} \rangle}_{\mu_{\len} - \mu_{\len - 1}}.
\end{equation}
\else  
\begin{multline}\label{eq:shape-def}
  R^{\mu} \triangleq
  \underbrace{\langle 1\rangle\times\cdots\times\langle 1\rangle}_{\mu_1}
  \times
  \underbrace{\langle\pi\rangle\times\cdots\times\langle\pi\rangle}_{\mu_{2}-\mu_1}
  \times \cdots \\
  \times
  \underbrace{\langle \pi^{\len -1} \rangle \times \cdots \times \langle \pi^{\len -1} \rangle}_{\mu_{\len} - \mu_{\len - 1}}.
\end{multline}
\fi
Since a positive integer $t$ is identified with the shape
$(t,\ldots,t)$, it is indeed true that $R^t$ denotes the
$t$-fold Cartesian product of $R$ with itself.

The module $R^{\mu}$ can be viewed as a collection of
$\mu_\len$-tuples whose components are drawn from $R$ subject
to certain constraints imposed by $\mu$.  Specifically, while
the first $\mu_{1}$ components can be any element of $R$, the
next $\mu_2 - \mu_1$ components must be multiples of $\pi$,
and so on.  Since each ideal $\langle \pi^i \rangle$ in
(\ref{eq:shape-def}) contains $q^{s - i}$ elements ($0 \le i <
s$), it follows that the size of $R^{\mu}$ is $|R^{\mu}| =
q^{|\mu|}$.

\begin{exam}
Let $R = \ZZ_8$, and let $\mu = (2, 4, 4)$. Then
\[
R^{\mu} = \underbrace{\langle 1 \rangle \times \langle 1 \rangle}_2 \times
          \underbrace{\langle 2 \rangle \times \langle 2 \rangle}_{4 - 2}.
\]
Note that the first two components of $R^{\mu}$ can each be
chosen in $2^3$ ways, while the last two components can each
be chosen in only $2^2$ ways.  Hence, the size of $R^{\mu}$ is
$2^{10}$.
\end{exam}

For every $\len$-shape $\mu$, $R^{\mu}$ is a finite
$R$-module.  Conversely, the following theorem establishes
that every finite $R$-module is isomorphic to $R^{\mu}$ for
some unique $\len$-shape $\mu$.
\begin{thm}\cite[Theorem~2.2]{HL00}
For any finite $R$-module $M$ over a $(q,s)$ chain ring $R$,
there is a unique $\len$-shape $\mu$ such that $M \cong
R^{\mu}$.
\label{thm:ModuleShape}
\end{thm}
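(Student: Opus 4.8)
The plan is to establish existence and uniqueness separately. For existence the strategy is to decompose $M$ into cyclic $R$-modules, observe that each cyclic $R$-module is, up to isomorphism, one of the ideals $\langle\pi^i\rangle$, and then read off $\mu$ by counting multiplicities. For uniqueness the strategy is to exhibit an explicit isomorphism invariant of a finite $R$-module that, applied to $R^{\mu}$, returns $\mu$.

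\emph{Existence.} Being finite, $M$ is finitely generated over $R$, and $R$ is a principal ideal ring; hence $M$ decomposes as a direct sum of cyclic modules, $M\cong\bigoplus_{j=1}^{N}R/\langle\pi^{a_j}\rangle$ with $0\le a_j\le\len$. (If one prefers to avoid quoting the structure theorem, this is proved by induction on $|M|$: pick $m\in M$ whose annihilator $\langle\pi^{c}\rangle$ has $c$ as large as possible, show $Rm$ is a direct summand of $M$ by the usual maximality argument, and recurse on a complement.) Summands with $a_j=0$ vanish and may be discarded. For $1\le a\le\len$ the map $R\to\langle\pi^{\len-a}\rangle$, $r\mapsto\pi^{\len-a}r$, is $R$-linear and onto, and $\pi^{\len-a}r=0$ iff $\deg(r)\ge a$, i.e.\ iff $r\in\langle\pi^{a}\rangle$; hence $R/\langle\pi^{a}\rangle\cong\langle\pi^{\len-a}\rangle$ with $0\le\len-a\le\len-1$. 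Writing $n_i$ for the number of surviving summands isomorphic to $\langle\pi^i\rangle$ ($0\le i\le\len-1$), we obtain $M\cong\prod_{i=0}^{\len-1}\langle\pi^i\rangle^{n_i}$. Now set $\mu_k=\sum_{i=0}^{k-1}n_i$ for $k=1,\dots,\len$; then $\mu_1\le\mu_2\le\cdots\le\mu_\len$ (the $n_i$ are nonnegative) and $\mu_k-\mu_{k-1}=n_{k-1}$, so regrouping the factors of the product by exponent exhibits precisely the module $R^{\mu}$ of \eqref{eq:shape-def}. Thus $M\cong R^{\mu}$.

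\emph{Uniqueness.} Let $\bar R=R/\langle\pi\rangle$ be the residue field. For a finite $R$-module $M$ and $0\le i<\len$, the quotient $\pi^i M/\pi^{i+1}M$ is annihilated by $\pi$ and hence a $\bar R$-vector space; set $d_i(M)=\dim_{\bar R}\!\bigl(\pi^i M/\pi^{i+1}M\bigr)$. Since any isomorphism $M\to M'$ carries $\pi^i M$ onto $\pi^i M'$, each $d_i$ is an isomorphism invariant. Applying this to \eqref{eq:shape-def}: multiplication by $\pi^i$ sends a factor $\langle\pi^e\rangle$ to $\langle\pi^{i+e}\rangle$, so $\pi^i R^{\mu}/\pi^{i+1}R^{\mu}$ is a product of copies of $\langle\pi^{i+e}\rangle/\langle\pi^{i+e+1}\rangle$ taken over the factors $\langle\pi^e\rangle$ of $R^{\mu}$, and each such copy is $\cong\bar R$ exactly when $i+e<\len$. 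Hence $d_i(R^{\mu})$ equals the number of factors with $e\le\len-1-i$, which is $\sum_{e=0}^{\len-1-i}n_e=\mu_{\len-i}$. Consequently $R^{\mu}\cong R^{\nu}$ forces $\mu_{\len-i}=\nu_{\len-i}$ for all $i$, i.e.\ $\mu=\nu$; together with existence this moreover recovers $\mu$ from $M$ via $\mu_k=d_{\len-k}(M)$. (As a consistency check, multiplication by $\pi$ induces a surjection $\pi^i M/\pi^{i+1}M\twoheadrightarrow\pi^{i+1}M/\pi^{i+2}M$, so $d_{i+1}(M)\le d_i(M)$ and the recovered tuple is indeed an $\len$-shape.)

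\emph{The main obstacle.} Everything above is routine bookkeeping except the cyclic decomposition used in the existence step --- equivalently, the assertion that a cyclic submodule $Rm$ whose annihilator is as small as that of $M$ itself is a direct summand of $M$. Over a ring with zero divisors this cannot be obtained by a dimension count and requires the classical Pr\"ufer-style argument (as in the basis theorem for finite abelian groups, run inside $R$); this is where the real content lies. One may instead simply cite the structure theorem for finitely generated modules over principal ideal rings, or derive the same decomposition by reducing a defining relation matrix of $M$ to diagonal (Smith) form, which anticipates the normal-form machinery of Section~\ref{sec:canonical}.
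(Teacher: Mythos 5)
The paper itself gives no proof of this theorem --- it is quoted from \cite[Theorem~2.2]{HL00} --- so there is no internal argument to compare against. Your proof is correct and self-contained. For existence, the decomposition $M\cong\bigoplus_j R/\langle\pi^{a_j}\rangle$ together with the isomorphism $R/\langle\pi^a\rangle\cong\langle\pi^{\len-a}\rangle$ (whose kernel computation $\pi^{\len-a}r=0\Leftrightarrow r\in\langle\pi^a\rangle$ is right) reduces $M$ to a product of the ideals $\langle\pi^i\rangle$, and your definition $\mu_k=\sum_{i=0}^{k-1}n_i$ makes the multiplicities $\mu_1,\ \mu_2-\mu_1,\ \dots,\ \mu_\len-\mu_{\len-1}$ match \eqref{eq:shape-def} exactly. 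For uniqueness, the invariant $d_i(M)=\dim_{\bar R}\bigl(\pi^i M/\pi^{i+1}M\bigr)$ is the standard one; your computation $d_i(R^\mu)=\mu_{\len-i}$ is correct, and the monotonicity $d_{i+1}\le d_i$ (from the $\pi$-induced surjection) confirms the recovered tuple is a genuine $\len$-shape, which is a nice sanity check. You also correctly identify where the real work lives, namely the cyclic decomposition over a ring with zero divisors; since the paper already invokes the Smith normal form over a PIR in Section~\ref{sec:matrix} (citing \cite{Brown93}), your suggested alternative --- present $M$ by a relation matrix, reduce it to $\diag(\pi^{l_1},\dots,\pi^{l_r})$, and read off the cyclic factors --- is the route most consonant with the machinery the paper already sets up, and it avoids re-proving the Pr\"ufer maximality lemma from scratch.
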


We call the unique shape $\mu$ given in
Theorem~\ref{thm:ModuleShape} \emph{the shape of} $M$, and write $\mu
= \shape M$.\footnote{Some authors (like Honold \emph{et
al.}~\cite{HL00}) use a different convention and define the
shape of an $R$-module to be the conjugate (in the
integer-partition-theoretic sense) of the shape as defined in
this paper.} It is known \cite{HL00} that if $M'$ is a
submodule of $M$, then $\shape M' \preceq \shape M$, i.e., the
shape of a submodule is a subshape of the module.  It is also
known \cite{HL00} that the number of submodules of $R^{\mu}$
whose shape is $\kappa$ is given by
\begin{equation}
    \submodule{\mu}{\kappa}
     = \prod_{i = 1}^\len q^{(\mu_i - \kappa_i) \kappa_{i-1}}
       \gauss{\mu_i - \kappa_{i-1}}{\kappa_i - \kappa_{i-1}},
    \label{eqn:ModuleCount}
\end{equation}
where
\[
\gauss{m}{k} \triangleq \prod_{i = 0}^{k-1} \frac{q^m - q^i}{q^k - q^i}
\]
is the Gaussian coefficient.  In particular, when the chain
length $\len = 1$, $R$ becomes the finite field $\FF_q$ of $q$
elements, and $\submodule{\mu}{\kappa}$ becomes
$\gauss{\mu_1}{\kappa_1}$, which is the number of
$\kappa_1$-dimensional subspaces of $\FF_q^{\mu_1}$.

\subsection{Matrices over Finite Chain Rings}\label{sec:matrix}

We turn now to matrices over finite chain rings.  Let $R$ be a
$(q, \len)$ chain ring with maximal ideal $\langle \pi
\rangle$.  The set of all $n \times m$ matrices with entries
from $R$ will be denoted by $R^{n \times m}$.  If $A \in R^{n
\times m}$, we denote by $A[i, j]$ the entry of $A$ in the
$i$th row and $j$th column, where $1 \leq i \leq n$ and $1
\leq j \leq m$. We will let $A[i_1{:}i_2, j_1 {:} j_2]$
denote the submatrix of $A$ formed by rows $i_1$ to $i_2$ and
by columns $j_1$ to $j_2$, where $1 \le i_1 \le i_2 \le n$ and
$1 \le j_1 \le j_2 \le m$.  Finally, we will let $A[i,{:}]$
denote the $i$th row of $A$ and $A[{:},j]$ denote the $j$th
column $A$.

A square matrix $U \in R^{n \times n}$ is \emph{invertible} if
$UV = VU = I_n$ for some $V \in R^{n \times n}$, where $I_n$
denotes the $n \times n$ identity matrix.  The set of
invertible matrices in $R^{n \times n}$, denoted as
$\GL_n(R)$, forms a group---the so-called \emph{general linear
group}---under matrix multiplication.

Two matrices $A, B \in R^{n \times m}$ are said to be
\emph{left-equivalent} if there exists a matrix $U \in
\GL_n(R)$ such that $UA = B$. Two matrices $A, B \in R^{n
\times m}$ are said to be \emph{equivalent} if there exist
matrices $U \in \GL_n(R)$ and $V \in \GL_m(R)$ such that $U A
V = B$.

A matrix $D \in R^{n \times m}$ is called a \emph{diagonal
matrix} if $D[i, j] = 0$ whenever $i \ne j$. A diagonal matrix
$D$, which need not be square, can be written as $D =
\diag(d_1, \ldots, d_r)$, where $r = \min\{n, m\}$, and $d_i =
D[i, i]$ for $i = 1, \ldots, r$.

Let $A \in R^{n \times m}$. A diagonal matrix $D= \diag(d_1,
\ldots, d_r) \in R^{n \times m}$ ($r = \min\{n, m\}$) is
called a \emph{Smith normal form} of $A$, if $D$ is equivalent
to $A$ and $d_1 \mid d_2 \mid \cdots \mid d_r$ in $R$.  It is
known \cite{Brown93} that every matrix over a PIR (in
particular, a finite chain ring) has a Smith normal form whose
diagonal entries are unique up to equivalence of associates.
In this paper, we shall require the diagonal entries $d_1,
\ldots, d_r$ in the Smith normal form $D$ to be powers of
$\pi$, i.e.,
\[
 (d_1, \ldots, d_r) = (\pi^{l_1}, \ldots, \pi^{l_r}),
\]
where $0\le l_1 \le \ldots \le l_r \le \len$ since $d_1 \mid
d_2 \mid \cdots \mid d_r$.  With this constraint, once $\pi$
is fixed, every matrix $A \in R^{n \times m}$ has a unique
Smith normal form.

\begin{exam}\label{ex:Smith}
Consider the two matrices
\[
    A = \mat{4 & 6 & 2 & {1}\\ 0 & 0 &0 & 2\\ 2 & 4 & 6 &1 \\2 & 0 & 2 & 1}, \
    S =  \mat{{1} & 0 & 0 & 0\\
      0 & {2} & 0 & 0 \\ 0 & 0 & {4} & 0\\ 0 & 0 & 0 & 0}
\]
over $\ZZ_8$. It is easy to check that
\[
    A  = \mat{1 & 2 & 0 & 0 \\ 2 & 0 & 1 & 0\\
    1 & 1 & 0 & 0\\ 1 & 1 & 1 & 1}
    \mat{{1} & 0 & 0 & 0\\
          0 & {2} & 0 & 0 \\ 0 & 0 & {4} & 0\\ 0 & 0 & 0 & 0}
         \mat{0 & 2 & 2 & 1\\ 1 & 1 & 2 & 0\\
         0 & 1 & 1 & 0\\ 0 & 0 & 1 & 0} = U S V.
\]
Since $U$ and $V$ are invertible, $S$ is equivalent to $A$.
Since the diagonal entries of $S$ satisfy $1 \mid 2 \mid 4
\mid 0$ in $\ZZ_8$, $S$ is the Smith normal form of $A$.
\end{exam}

For any $A \in R^{n \times m}$, we denote by $\row A$
and $\col A$ the row span and column span of $A$, respectively.
By using the Smith normal form, it is easy to see that
$\row A$ is isomorphic, as an $R$-module, to
$\col A$.
It is also easy to see that
left-equivalent matrices have identical row spans and
equivalent matrices have isomorphic row spans.

The \emph{shape} of a matrix $A$ is defined as the shape of
the row span of $A$, i.e.,
\[
    \shape A = \shape(\row A).
\]
Clearly, $\shape A = \shape(\col A)$.
Moreover, $\shape A = \mu$ if and only if the Smith normal form of $A$
is given by
\[
\diag(\underbrace{1, \ldots, 1}_{\mu_1}, \underbrace{\pi, \ldots, \pi}_{\mu_2 - \mu_1},
\ldots, \underbrace{\pi^{s-1}, \ldots, \pi^{s-1}}_{\mu_s - \mu_{s-1}},
\underbrace{0, \ldots, 0}_{r - \mu_s}),
\]
where $r = \min\{n, m \}$.
In particular, a matrix $U \in R^{n \times n}$ is invertible if and only if
$\shape U = (n, \ldots, n)$.

\begin{exam}
Since $D = \diag(1, 2, 4, 0)$ is the Smith normal form of $A$ in
Example~\ref{ex:Smith}, $\shape A = (1, 2, 3)$.
\end{exam}

As one might expect, matrix shape has a number of properties
similar to matrix rank.

\begin{prop}\label{prop:shape-properties}
Let $A \in R^{n \times m}$ and $B \in R^{m \times k}$. Then
\begin{enumerate}
\item $\shape A = \shape A^T$, where $A^T$ is the transpose of $A$.
\item For any $P \in \GL_n(R), Q \in \GL_m(R)$, $\shape A = \shape PAQ$.
\item $\shape AB \preceq \shape A$, $\shape AB \preceq \shape B$.
\item For any submatrix $C$ of $A$, $\shape C \preceq \shape A$.
\item $\shape A \preceq \min\{ n, m \}$.
\end{enumerate}
\end{prop}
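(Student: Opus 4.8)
The plan is to derive all five items from two facts already in hand: (i) for every matrix $A$ one has $\shape A = \shape(\row A) = \shape(\col A)$, and (ii) if $M'$ is a submodule of a finite $R$-module $M$, then $\shape M' \preceq \shape M$. With these, Properties 1 and 2 are immediate. For Property 1, the rows of $A^T$ are the transposes of the columns of $A$, so $\row(A^T) \cong \col(A)$ as $R$-modules, whence $\shape A^T = \shape(\row A^T) = \shape(\col A) = \shape A$. For Property 2, $A$ and $PAQ$ are equivalent, hence have isomorphic row spans, and therefore the same shape.

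For Property 3 I would apply fact (ii) once on rows and once on columns. Every row of $AB$ is an $R$-linear combination of the rows of $B$, so $\row(AB)$, being the $R$-span of those rows, is a submodule of $\row B$; hence $\shape AB \preceq \shape B$. Symmetrically, every column of $AB$ is an $R$-linear combination of the columns of $A$, so $\col(AB)$ is a submodule of $\col A$, giving $\shape AB = \shape(\col AB) \preceq \shape(\col A) = \shape A$.

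Property 4 follows by splitting the passage from $A$ to a submatrix $C$ into two steps. First delete the rows that are not retained, obtaining $C'$: since each row of $C'$ is a row of $A$, $\row C'$ is a submodule of $\row A$, so $\shape C' \preceq \shape A$. Then delete the columns that are not retained, obtaining $C$ from $C'$: since each column of $C$ is a column of $C'$, $\col C$ is a submodule of $\col C'$, so $\shape C = \shape(\col C) \preceq \shape(\col C') = \shape C' \preceq \shape A$. Property 5 is of the same flavor: $\row A$ is a submodule of $R^m = R^{(m,\ldots,m)}$, so $\shape A \preceq (m,\ldots,m) = m$, and $\col A$ is a submodule of $R^n$, so $\shape A = \shape(\col A) \preceq n$; combining gives $\shape A \preceq \min\{n,m\}$.

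None of the steps presents a genuine obstacle; the only point needing care is to consistently pick the row span or the column span so that the natural inclusion is an honest submodule inclusion (for instance, using $\col$ rather than $\row$ for the ``$\preceq \shape A$'' half of Property 3, and likewise in Properties 4 and 5), after which fact (ii) applies verbatim. As an alternative, each item could instead be read off from the Smith-normal-form characterization of $\shape$, but the submodule argument is shorter and more uniform.
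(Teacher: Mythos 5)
Your proof is correct, and for items 1--3 it coincides with the paper's argument. For items 4 and 5 you take a slightly different, more uniform route. For item 4 the paper writes $C = E_1 A E_2$ with row/column selection matrices and then invokes item 3, whereas you decompose the passage to a submatrix into a row-deletion step (direct submodule inclusion $\row C' \subseteq \row A$) followed by a column-deletion step (direct inclusion $\col C \subseteq \col C'$); the two arguments are essentially the same decomposition, just phrased via matrix products versus direct inclusions. For item 5 the difference is more substantive: the paper reads the bound off the Smith normal form, which has at most $\min\{n,m\}$ nonzero diagonal entries, while you instead observe that $\row A$ is a submodule of the ambient free module $R^m$ (giving $\shape A \preceq m$) and $\col A$ a submodule of $R^n$ (giving $\shape A \preceq n$). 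Your version keeps the whole proof running on a single engine (the fact that the shape of a submodule is a subshape), which is aesthetically cleaner; the paper's SNF version has the minor advantage of not needing to know that $\shape R^m = (m,\ldots,m)$, though that is immediate from the definition in \eqref{eq:shape-def}. Both are valid.
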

\begin{proof}
1) Since $\row A \cong \col A$, we have $\row A \cong \row
A^T$.  Hence, $\shape A = \shape A^T$.  2) Since $A$ is
equivalent to $P A Q$ for any invertible $P$ and $Q$, $\shape
A = \shape  PAQ$.  3) Since $\row AB$ is a submodule of $\row
B$, we have $\shape AB \preceq \shape B$.  Similarly, since
$\col AB$ is a submodule of $\col A$, we have $\shape AB
\preceq \shape A$.  4) Note that any submatrix $C$ of
$A$ is equal to $E_1 A E_2$ for some $E_1 \in R^{k \times n}$
(selecting $k$ rows) and $E_2 \in R^{m \times l}$ (selecting $l$
columns).  Hence, $\shape C = \shape E_1 A E_2 \preceq \shape
A$. 5) Since the Smith normal form of $A$ has at most $\min\{ n, m \}$
nonzero diagonal entries, we have $\shape A \preceq \min\{ n, m \}$.
\end{proof}

For convenience, we say a matrix $A \in R^{n \times m}$
have \emph{rank} $t$,
if $\shape A = t$. Note that the rank of a matrix is not
always defined.
A matrix $A \in R^{n \times m}$ is called \emph{full rank} if
$\rank A = \min\{ n, m\}$.  A matrix $A \in R^{n \times m}$ is
called \emph{full row rank} if $\rank A = n$ (which requires
$n \le m$). The number of full-row-rank matrices in $R^{n
\times m}$ is $q^{s n m} \prod_{i = 0}^{n - 1} (1 - q^{i-m})$.
A matrix is \emph{full column rank} if
its transpose is full row rank. Full-column-rank matrices
have the following property.

\begin{lem}\label{lem:column-rank}
Let $A$ be a full-column-rank matrix. Then $A B$ is a zero matrix
if and only if $B$ is a zero matrix.
\end{lem}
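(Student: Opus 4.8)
The plan is to exploit the Smith normal form of a full-column-rank matrix to reduce the claim to the trivial case of a diagonal matrix with unit-or-nilpotent entries, then argue degree-by-degree. One direction is immediate: if $B = 0$ then $AB = 0$. So the real content is the forward implication: $AB = 0 \Rightarrow B = 0$.

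First I would write $A = U D V$ with $U \in \GL_n(R)$, $V \in \GL_m(R)$, and $D = \diag(\pi^{l_1}, \ldots, \pi^{l_m}) \in R^{n \times m}$ the Smith normal form (here $n \ge m$ since $A$ is full column rank). Because $A$ is full column rank, $\shape A = (m, \ldots, m)$, which by the characterization of shape via the Smith normal form forces all $m$ diagonal entries to be units; that is, $l_1 = \cdots = l_m = 0$, so $D = \mat{I_m \\ 0}$ up to the unit factors, and in particular $D$ has a left inverse $D^{-} = \mat{I_m & 0} \in R^{m \times n}$ with $D^{-} D = I_m$. Then $AB = 0$ gives $U D V B = 0$, hence $D(VB) = 0$ (left-multiply by $U^{-1}$), hence $VB = D^{-} D (VB) = 0$, hence $B = V^{-1}(VB) = 0$. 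This is clean and avoids any degree bookkeeping.

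Alternatively — and perhaps more in the spirit of the surrounding development, in case one wants to avoid invoking the left inverse explicitly — I would note that $A$ full column rank means $A^T$ is full row rank, and a full-row-rank matrix over a chain ring is right-invertible (this follows from its Smith normal form $\mat{I_m & 0}$ up to units); dualizing, $A$ has a left inverse $A^{-}$ with $A^{-} A = I_m$, and then $AB = 0 \Rightarrow A^{-}AB = 0 \Rightarrow B = 0$ directly.

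I do not anticipate a genuine obstacle here: the statement is essentially the observation that full column rank is equivalent to possessing a left inverse. The only point requiring a sentence of care is the step identifying "full column rank" with "all Smith invariants are units," which is exactly the displayed characterization of $\shape$ via Smith normal form stated just before the lemma (together with Proposition~\ref{prop:shape-properties}(1) to transpose between row and column versions). Everything else is one-line matrix algebra with invertible changes of basis.
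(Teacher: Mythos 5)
Your proof is correct and follows essentially the same route as the paper: both reduce to the Smith normal form, note that full column rank forces the form $\mat{I_m \\ 0}$, and then cancel the invertible factors to conclude $B=0$. You spell out the left-inverse step that the paper leaves implicit, but there is no substantive difference in approach.
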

\begin{proof}
The ``if" part is trivial, so we turn to the ``only if" part. Let $A \in R^{n \times m}$.
Suppose that $AB = 0$ for some matrix $B \in R^{m \times k}$. We will show that $B$
is a zero matrix. Since $A$ is full column rank,
its Smith normal form $S$ must have the form
\[
  S = \mat{I_m \\ 0_{(n - m) \times m}}
\]
and $A = U S V$ for some invertible matrices $U$ and $V$. Thus, we have
\[
  AB = U \mat{I \\ 0} V B = 0,
\]
which implies $B = 0$.
\end{proof}

\section{Row Canonical Form}\label{sec:canonical}

The main algebraic tools for studying matrix channels over finite fields
include Gaussian elimination and reduced row echelon forms.
The generalization of these tools to finite chain rings is,
however, not straightforward.
Consider the $3 \times 4$ matrix
\[
A = \mat {2 & 1 & 1 & 2\\6 & 3 & 7 & 2 \\ 6 & 7 & 1 & 0}
\]
over $\ZZ_{8}$. On the one hand, we have
\[
\mat{1 & 0 & 0\\1 & 1 & 0 \\ 1 & 0  & 1} A = \mat {2 & 1 & 1 & 2\\ 0 & 4 & 0 & 4 \\ 0 & 0 & 2 & 2}.
\]
On the other hand, we have
\[
\mat{1 & 0 & 0\\1 & 0 & 1 \\ 7 & 1 & 2} A = \mat {2 & 1 & 1 & 2 \\ 0 & 0 & 2 & 2\\ 0 & 0 & 0 & 0}.
\]
In both cases we have transformed $A$ to echelon form using
elementary row operations.
Recall that, over finite fields, the rank of a matrix is
precisely the number of nonzero rows in its echelon form.
This property, however, does not hold for matrices over finite chain rings.

To address this difficulty, several possible generalizations
of reduced row echelon forms have been proposed in the literature,
including the Howell form \cite{How86, Stor00}, the matrix canonical form \cite{F55, McDonald74},
and the $p$-basis \cite{VSR96}.
In this section, we will describe a row canonical form that is particularly
suitable for studying matrix channels over finite chain rings.
This row canonical form is essentially the same as
the reduced row echelon form defined in Kiermaier's thesis \cite[Definition~2.2.2]{Kiermaier12} (written in German),
which itself is
a variant of the matrix canonical form in \cite[p.~329, Exercise XVI.7]{McDonald74}.
It appears that the key idea behind these forms was proposed by Fuller \cite{F55}
based on an earlier result of Birkhoff \cite{B35}.
We provide in this section a new elementary proof for the existence
and uniqueness of the row canonical form.

Throughout this section, $R$ is a $(q, \len)$ chain ring with
maximal ideal $\langle \pi \rangle$. We fix a complete set of
residues $\calR(R,\pi)$ (including $0$), i.e., a
representation of the residue field $R/\langle \pi \rangle$,
and, for $1< l < \len$, we choose the complete set of residues
for $\pi^l$ as
\[
  \calR(R, \pi^l) =  \left\{ \sum_{i=0}^{l-1} a_i \pi^{i} \colon
   a_0, \ldots, a_{l-1} \in  \calR(R, \pi) \right\}.
\]
Finally, we set $\calR(R, \pi^0) = \{ 0 \}$.

\subsection{Definitions}
We start with a few definitions.

Let $A$ be matrix with entries from $R$. The $i$th row of
$A$ is said to occur \emph{above} the $(i')$th row of $A$ (or
the $(i')$th row occurs \emph{below} the $i$th row) if $i <
i'$.  Similarly the $j$th column of $A$ is said to occur
\emph{earlier} than the $(j')$th column (or the $(j')$th
column occurs \emph{later} than the $j$th column) if $j < j'$.
This terminology extends to the entries of $A$:  $A[i,j]$ is
above $A[i',j']$ if $i<i'$ and $A[i,j]$ is earlier than
$A[i',j']$ if $j<j'$.  If $P$ is some property obeyed by at
least one of the entries in the $i$th row of $A$, then the
\emph{first} entry in row $i$ with property $P$ occurs earlier
than every other entry in row $i$ having property $P$.

The \emph{pivot} of a nonzero row of a matrix is the first
entry among the entries having least degree in that row.  For
example, $6$ and $2$ are the entries of least degree in the
row $[0 ~ 4 ~ 6 ~ 2]$ over $\ZZ_8$, and $6$ occurs earlier.
Thus, $6$ is the pivot of the row $[0 ~ 4 ~ 6 ~ 2]$.  Note
that the pivot of a row is not necessarily the first nonzero
entry of the row.

\begin{defi}
A matrix $A$ is in \emph{row canonical form} if it satisfies
the following conditions.
\begin{enumerate}
\item Nonzero rows of $A$ are above any zero rows.

\item If $A$ has two pivots of the same degree, the one that
occurs earlier is above the one that occurs later.  If $A$ has
two pivots of different degree, the one with smaller degree is
above the one with larger degree.

\item Every pivot is of the form $\pi^l$ for some $l  \in \{
0, \ldots, \len - 1\}$.

\item For every pivot (say $\pi^l$), all entries below and in
the same column as the pivot are zero, and all entries above and
in the same column as the pivot are elements of $\calR(R, \pi^l)$.
\end{enumerate}
\end{defi}

\begin{exam}\label{ex:canonical-form}
Consider the matrix
\[
    A = \mat{0 & 2 & 0 & \bar{1} \\
             \bar{2} & 2 & 0 & 0 \\
              0 & 0 &\bar{2} & 0 \\
             0 & \bar{4} & 0 & 0 \\
                   0 & 0 & 0 & 0}
\]
over $\ZZ_8$ with $\pi=2$ and $\ZZ_8/\langle 2 \rangle = \{
0,1 \}$, in which the pivots have been identified with an
overline. Clearly, $A$ satisfies all of the conditions to be
in row canonical form.
\end{exam}

The following facts follow immediately from the definition
of row canonical form.

\begin{prop}
\label{prop:rcfprops}
Let $A \in R^{n \times m}$ be a matrix in row canonical form, let $p_k$
be the pivot of the $k$th row,
let $c_k$ be the index of the column containing $p_k$.
(If the $k$th row is zero, let $p_k=0$ and $c_k=0$.)
Let $d_k = \deg(p_k)$,
and let $w=(w_1,\ldots,w_m)$ be
an arbitrary element of $\row A$.
\begin{enumerate}
\item Any column of $A$ contains at most one pivot.
\item If $A$ has more than one row, deleting a row of $A$
results in a matrix also in row canonical form.
\item $i \geq k$ implies $\deg(A[i,j]) \geq d_k $.
\item ($i \geq k$ and $j < c_k$) or ($i > k$ and $j \leq c_k$)
implies $\deg(A[i,j]) > d_k$.
\item $p_1$ divides $w_1, w_2, \ldots, w_m$.
\item $j < c_1$ implies $\deg(w_j) > d_1$.
\end{enumerate}
\end{prop}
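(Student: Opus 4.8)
The plan is to verify each of the six items directly from the definition of row canonical form, proceeding roughly in the stated order since the later items build on the earlier ones.

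\textbf{Items 1--2.} For item~1, if two pivots were in the same column $j$, condition~4 of the definition applied to whichever pivot occurs above forces the other pivot entry (being below and in the same column) to be zero, contradicting that a pivot is nonzero. Item~2 is immediate: deleting a row preserves conditions 1--4 verbatim, since each condition is a statement about the rows that remain (the ordering of pivots, the form of pivots, and the column constraints are all inherited by any subset of rows in their original relative order).

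\textbf{Items 3--4.} These are the heart of the combinatorial structure and I expect item~4 to be the main obstacle, since it requires carefully combining the pivot-ordering condition (condition~2) with the column-zero condition (condition~4). For item~3, fix $k$ and let $j$ be arbitrary. If $d_k = s$ (i.e.\ the $k$th row is zero), then all rows $i \ge k$ are zero by condition~1, so $\deg(A[i,j]) = s = d_k$. Otherwise $p_k = \pi^{d_k}$ is a genuine pivot; for $i \ge k$, row $i$ either is zero (degree $s \ge d_k$) or has a pivot $p_i$, and I claim $\deg(A[i,j]) \ge \deg(p_i) \ge d_k$. The first inequality holds because $p_i$ has least degree among entries of row $i$; the second holds because condition~2 forces pivots to be listed in non-decreasing order of degree, so $i \ge k$ gives $\deg(p_i) \ge \deg(p_k) = d_k$. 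For item~4, I again reduce to the case where $p_k$ is a genuine pivot (if row $k$ is zero the hypothesis $i>k$ or $i\ge k$ with the zero-row structure quickly forces the conclusion, or the degree is already $s$ and one checks the column index cases cannot both be vacuous — actually it is cleanest to handle the zero-row case by noting rows at or below a zero row are zero, so entries have degree $s$, and if $d_k=s$ the strict inequality $\deg > d_k$ is impossible, meaning we must argue no such entry position arises; I will instead assume WLOG via item~2 that $A$ has no zero rows when proving 3--6 about $\row A$, or handle it by the observation that $w \in \row A$ is unaffected by appended zero rows). For a genuine pivot $p_k$ in column $c_k$: consider the pivot $p_i$ of row $i$ in column $c_i$. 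By item~3, $\deg(A[i,j]) \ge d_k$ already; I must upgrade to strict inequality under the stated index conditions. If $\deg(p_i) > d_k$ we are done since $\deg(A[i,j]) \ge \deg(p_i) > d_k$. So suppose $\deg(p_i) = d_k$. Since $i \ge k$ (in both cases of the hypothesis) and pivots of equal degree are ordered by column (condition~2), we get $c_i \ge c_k$, with $c_i > c_k$ when $i > k$ (distinct rows have pivots in distinct columns by item~1). Now $A[i,j]$ with $j < c_k \le c_i$ lies strictly earlier than the pivot $p_i$ of row $i$; but $p_i$ is the \emph{first} entry of least degree in row $i$, so any earlier entry has strictly larger degree, giving $\deg(A[i,j]) > \deg(p_i) = d_k$. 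This settles the case ``$i \ge k$ and $j < c_k$''. For ``$i > k$ and $j \le c_k$'': if $j < c_k$ we are in the previous case; if $j = c_k$, then since $i > k$ the entry $A[i,c_k]$ lies strictly below the pivot $p_k$ in the same column, hence is zero by condition~4, so $\deg(A[i,j]) = s > d_k$ (using $d_k \le s-1$ by condition~3).

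\textbf{Items 5--6.} These follow from items 3--4 applied to the first row. Any $w \in \row A$ is an $R$-linear combination $w = \sum_i r_i A[i,{:}]$. For item~5: each entry $A[i,j]$ has degree $\ge d_1$ by item~3 (with $k=1$), hence is divisible by $\pi^{d_1} = p_1$; therefore every entry $w_j$, being a combination of multiples of $p_1$, is a multiple of $p_1$, i.e.\ $p_1 \mid w_j$ for all $j$. For item~6: if $j < c_1$, item~4 (with $k=1$, using the first alternative since $i \ge 1$ always) gives $\deg(A[i,j]) > d_1$ for every $i$; since degree of a sum is at least the minimum degree of the summands and $\deg(r_i A[i,j]) \ge \deg(A[i,j]) > d_1$, we conclude $\deg(w_j) > d_1$. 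The only subtlety is the zero-row bookkeeping, which I will dispatch at the outset by remarking that appending or deleting zero rows changes neither $\row A$ nor the validity of conditions 3--6, so it suffices to treat the case where $A$ has no zero rows, in which every $p_k$ is a genuine pivot with $d_k \le s-1$.
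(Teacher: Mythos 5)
Your proof is correct and follows essentially the same line of reasoning as the paper: item~1 via condition~4, item~2 by inspection, item~3 from the pivot-ordering condition, item~4 by combining item~3 with the ``earliest entry of least degree'' property of pivots, and items 5--6 by applying 3--4 entrywise to a linear combination of rows. The only cosmetic differences are that you dispatch the subcase $i>k$, $j=c_k$ of item~4 by invoking condition~4 (the entry below a pivot is zero) rather than the pivot-ordering argument the paper reuses there, and you spell out the zero-row degenerate cases more explicitly; neither changes the substance of the argument.
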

The proof is provided in Appendix~\ref{sec:canonical-proofs}.
For any $A \in R^{n \times m}$, we say a matrix $B \in R^{n
\times m}$ is a \emph{row canonical form of $A$}, if (i) $B$
is in row canonical form, and (ii) $B$ is left-equivalent to
$A$.  We will show that any $A \in R^{n \times m}$ has a
unique row canonical form.  For this reason, we denote by
$\RCF(A)$ \emph{the} row canonical form of $A$.

\subsection{Existence and Uniqueness}
First, we demonstrate the existence of a row canonical form for
any matrix $A$ by presenting a simple algorithm that performs
elementary row operations to reduce $A$ into row canonical
form.  Here, the allowable elementary row operations (over
$R$) are:
\begin{itemize}
\item Interchange two rows.
\item Add a multiple of one row to another.
\item Multiply a row by a unit in $R$.
\end{itemize}
Each of these operations is invertible, and so a matrix
obtained from $A$ by any sequence of these operations will
have the same row span as $A$.

The algorithm proceeds in a series of steps.  In the $k$th
step, the algorithm selects the $k$th pivot, moves it to the
$k$th row, and uses elementary row operations to reduce into
row canonical form the submatrix consisting of the top $k$
rows.  The pivot selection procedure operates on any given set
of rows.  If the rows are all zero, the procedure should
return with the result that no pivot can be found.  Otherwise,
among all entries of least degree in the given rows, an entry
must be chosen that occurs as early as possible.  This entry
must certainly be the pivot of its row.  The procedure should
return the row and column index of the selected element.

Now we are ready to describe the algorithm in detail.  In step
$k=1$, apply pivot selection to all of the rows of $A$.  If no
pivot can be found, then $A$ is a zero matrix, and is already
in row canonical form.  Otherwise, we call this pivot the
\emph{first pivot} and place it in the first row by an
interchange of rows (if necessary).  If this pivot is not of
the form $\pi^l$ ($l = 0, \ldots, \len - 1$), we multiply the
first row by a suitable unit so that the first pivot is a
power of $\pi$.  Note that nonzero entries in the same column
below the first pivot have degrees no less than the pivot,
which means that they are all multiples of the first pivot.
By a sequence of elementary row operations, these entries can
be cancelled, so that we arrive at a matrix, say $A_1$, in
which the first row is in row canonical form and all entries
in the same column below the first pivot are zero.  We can now
increment $k$ and proceed to the next step.

For $k \geq 2$, we apply pivot selection to the rows of
$A_{k-1}$, excluding the first $k-1$ rows.  If no pivot can be
found, then the remaining rows are all zero and $A_{k-1}$ is
in row canonical form.  Otherwise we call this pivot the $k$th
pivot and place it in the $k$th row by an exchange of rows (if
necessary).  As in the first step, if this pivot is not an
integer power of $\pi$, we multiply the $k$th row by a
suitable unit so that the $k$th pivot is a power of $\pi$, say
$\pi^l$.  Nonzero entries in the same column below the $k$th
pivot can be cancelled using elementary row operations.  A
nonzero entry, say $a$, in the same column above the $k$th
pivot has $\pi$-adic decomposition
\begin{align*}
a & =  a_0 + \cdots + a_{\len - 1}\pi^{\len - 1} \\
  & =  a_0 + \cdots + a_{l-1} \pi^{l-1} + \pi^l (a_l + \cdots
        + a_{\len-1}\pi^{\len - l - 1}).
\end{align*}
Thus by subtracting $(a_l + \cdots + a_{\len - 1}\pi^{\len - l
- 1})$ times the $k$th row from the row containing $a$, we
change $a$ to $a_0 + \cdots + a_{l-1} \pi^{l-1} \in \calR(R,
\pi^l)$, without affecting the pivot of that row.  Reducing
all nonzero entries in the same column as the $k$th pivot in
this way, we arrive at a matrix, say $A_k$, in which the top
$k$ rows are in row canonical form and all entries in the same
column below the first, second, \ldots, $k$th pivots are zero.

The above algorithm stops when no more pivots can be found.
Note that, at the end of the $k$th step, the matrix $A_k$ is
left-equivalent to $A$ and the submatrix formed by the top $k$
rows of $A_k$ is in row canonical form.  It follows that the
final matrix must be in row canonical form.

Therefore, we have the following result.
\begin{prop}
For any $A \in R^{n \times m}$, the algorithm described above
computes a row canonical form of $A$.
\end{prop}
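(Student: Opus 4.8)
The statement to be proved is simply that the algorithm described in the preceding paragraphs terminates and produces a matrix that is (i) left-equivalent to $A$ and (ii) in row canonical form. Since the text has already described the algorithm step by step, the proof should be a matter of organizing that description into a clean induction. The plan is to prove, by induction on the step counter $k$, the following loop invariant: after the $k$th step the algorithm holds a matrix $A_k$ such that $A_k$ is left-equivalent to $A$, the top $k$ rows of $A_k$ are in row canonical form, and every entry lying in the same column as, and below, any of the first $k$ pivots is zero. Termination is immediate because each step either halts (when pivot selection fails) or increases $k$, and $k$ cannot exceed $\min\{n,m\}$ since, by Proposition~\ref{prop:rcfprops}(1), each column hosts at most one pivot and each step consumes a fresh row.

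First I would establish the base case $k=1$. If pivot selection on all of $A$ fails, then $A=0$, which is trivially in row canonical form. Otherwise, the selected entry is the pivot of its row; moving it to row $1$, scaling that row by a unit so the pivot becomes $\pi^l$, and clearing the column below it are all elementary row operations, hence preserve the row span and left-equivalence to $A$. One checks directly that the single top row then satisfies all four conditions of the definition, and the below-pivot entries are zero by construction; this is exactly the content already spelled out in the paragraph beginning ``In step $k=1$.''

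Next comes the inductive step. Assume the invariant holds for $A_{k-1}$. Pivot selection is applied to rows $k,\ldots,n$; if it fails, those rows are all zero, so by condition (1) and the invariant the whole matrix $A_{k-1}$ is already in row canonical form and we are done. Otherwise, let $\pi^l$ be the new pivot after the row swap and unit scaling. The key points to verify — and the only places that require a small argument rather than bookkeeping — are: (a) the new pivot, when inserted as the $k$th row, respects condition (2), i.e. it is correctly ordered by degree and, within equal degree, by column position, relative to the first $k-1$ pivots; this follows from the pivot-selection rule (least degree, then earliest column) together with Proposition~\ref{prop:rcfprops}(3)–(4) applied to $A_{k-1}$; (b) clearing the entries below the $k$th pivot is legitimate because, having least degree among the rows processed, the pivot divides every entry in its column below it (using the divisibility criterion $\deg(a)\le\deg(b)\iff a\mid b$ from the preliminaries); and (c) the reduction of an entry $a$ above the pivot to an element of $\calR(R,\pi^l)$, via subtracting $(a_l+\cdots+a_{\len-1}\pi^{\len-l-1})$ times the $k$th row, changes only $a$'s high-order $\pi$-adic digits and therefore does not disturb the pivot of that upper row (its degree and position are unchanged), so conditions (1)–(3) for the already-canonical top $k-1$ rows survive. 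All of these operations are elementary, so left-equivalence to $A$ is maintained, and the invariant is re-established for $A_k$.

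Finally, when the algorithm halts after some step $k^\ast$, pivot selection on the remaining rows has failed, so rows $k^\ast+1,\ldots,n$ are zero; combined with the invariant for $A_{k^\ast}$ this says the full matrix is in row canonical form and left-equivalent to $A$, which is precisely the claim. I do not anticipate a genuine obstacle here — the real work was done in setting up the algorithm — but the step most prone to a slip is (c) above: one must be careful that modifying an upper entry's $\pi$-adic digits above index $l$ genuinely leaves that row's pivot (its first least-degree entry) untouched, and that it does not create a new violation of condition (4) in some other column. Checking this amounts to observing that the subtracted multiple of the $k$th row is itself a multiple of $\pi^l$ wherever the $k$th row is nonzero away from column $c_k$ — by Proposition~\ref{prop:rcfprops}(4) those entries already have degree $>l$ — so the degrees of all entries in the modified row can only go up, never down, preserving both the pivot and all previously arranged canonical-form structure.
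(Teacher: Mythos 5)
Your proof is correct and takes essentially the same approach as the paper, which simply asserts the loop invariant (after step $k$, $A_k$ is left-equivalent to $A$ and its top $k$ rows are in row canonical form, with zeros below the first $k$ pivots) and lets termination finish the argument; you supply the inductive verification the paper leaves implicit. One small wording slip: in your step (c), ``degrees can only go up, never down'' is a bit too strong (an entry of degree $\geq d_k$, e.g.\ a zero entry of degree $\len$, can drop to degree exactly $d_k$ after the subtraction), but your conclusion still holds because the subtracted term is a multiple of $\pi^{d_k}$ with $d_k \geq d_i$, so entries of row $i$ in columns earlier than $c_i$ retain degree $> d_i$ and the pivot position and value of row $i$ are unchanged.
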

A simple count shows that this algorithm requires
\[
\calO(nm \min\{ n, m \})
\]
basic operations over $R$.

\begin{exam}\label{ex:canonical}
Consider the matrix
\[
A = \mat{4 & 6 & 2 & \bar{1}\\
         0 & 0 & 0 & 2\\
         2 & 4 & 6 & 1 \\
         2 & 0 & 2 & 1}
\]
over $\ZZ_8$.  There are three $1$s in the last column of $A$,
namely, $A[1, 4]$, $A[3, 4]$ and $A[4,4]$, which are the
elements of least degree in $A$.  We can choose any of them as
the first pivot.  Here, we choose $A[1, 4]$ (indicated by an
overline).  After some elementary row operations, we can make
the entries below the pivot zero to obtain
\[
A_1 = \mat{ 4 & 6 & 2 & {1} \\
            0 & 4 & 4 & 0   \\
      \bar{6} & 6 & 4 & 0   \\
            6 & 2 & 0 & 0 }.
\]
Now consider the submatrix formed by omitting the first row of
$A_1$.  There are four entries of least degree, namely,
$A_1[3,1] = 6$, $A_1[3, 2] = 6$, $A_1[4, 1] = 6$, and $A_1[4,
2] = 2$, among which $A_1[3,1]$ and  $A_1[4, 1]$ are valid
choices for the second pivot.  Here, we choose $A_1[3,1]$
(indicated by an overline).  We interchange the second row and
third row of $A_1$, and then multiply the new second row by
$3$, obtaining
\[
A_1' = \mat{ 4 & 6 & 2 & {1}\\
       \bar{2} & 2 & 4 &  0 \\
             0 & 4 & 4 &  0 \\
             6 & 2 & 0 & 0 }.
\]
By some elementary row operations, we can make the entries
below the second pivot zero.  After that, we subtract $2$
times the second row from the first row, obtaining
\[
A_2 = \mat{ 0 & 2 & 2 & {1} \\
          {2} & 2 & 4 & 0   \\
            0 & \bar{4} & 4 &0 \\
            0 & 4 & 4 & 0 }.
\]
Clearly, the submatrix formed by the top two rows of $A_2$ is
in row canonical form.  Next, consider the submatrix formed by
omitting the top two rows of $A_2$.  We choose the entry
$A_2[3, 2]$ (indicated by an overline) as the third pivot. We
subtract the third row from the fourth row and obtain
\[
A_3 = \mat{ 0 & 2 & 2 & \bar{1} \\
      \bar{2} & 2 & 4 & 0       \\
            0 & \bar{4} & 4 &0  \\
            0 & 0& 0 & 0 }.
\]
Clearly, the submatrix formed by the top three rows of $A_3$
is in row canonical form (with all the pivots indicated).
Since no more pivots can be found, our algorithm outputs
$A_3$, which is indeed in row canonical form.
\end{exam}

As expected, the row canonical form is unique.
\begin{prop}\label{prop:uniqueness}
For any $A \in R^{n \times m}$, the row canonical form of $A$ is unique.
\end{prop}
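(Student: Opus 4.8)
The plan is to prove uniqueness by showing that the row canonical form is completely determined by the row span $\row A$ (together with the fixed choice of residue systems $\calR(R,\pi^l)$). Since left-equivalent matrices have the same row span, and since the row span determines the form, any two row canonical forms of $A$ must coincide. So the real task is: given a submodule $M = \row A \subseteq R^m$, reconstruct the unique matrix in row canonical form whose rows span $M$.

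First I would pin down the pivot columns and pivot degrees. Using Proposition~\ref{prop:rcfprops}(5)--(6), the first pivot $p_1 = \pi^{d_1}$ and its column $c_1$ are forced: $d_1$ is the minimum over all $w \in M$ of $\min_j \deg(w_j)$ (the smallest degree appearing anywhere in $M$), and $c_1$ is the earliest column in which an element of $M$ attains that degree. Moreover $p_1$ is forced to be exactly $\pi^{d_1}$ (not merely an associate) by condition~3 of the definition. This shows the first row's pivot position and value depend only on $M$. The natural device for iterating this is to consider, for the first row $r_1$ of a row canonical form $B$, the ``punctured'' submodule obtained by looking at $\{w \in M : w \text{ has a representation using only rows } 2,3,\ldots\}$; concretely, by Proposition~\ref{prop:rcfprops}(4), every $w \in M$ has a unique scalar $c \in \calR(R,\pi^{s-d_1})$ (determined by the $c_1$-th coordinate of $w$) such that $w - c\, r_1$ has a zero in column $c_1$ and lies in the span of rows $2,\ldots,n$ of $B$; that span is $\RCF$ of the smaller matrix $B[2{:}n,{:}]$ by Proposition~\ref{prop:rcfprops}(2). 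So I would set up an induction on the number of nonzero rows: the submodule $M' = \{w - (\text{its forced multiple of } r_1) : w \in M\}$ depends only on $M$, and equals the row span of the remaining rows, which by induction has a unique row canonical form; combined with the already-forced first row $r_1$, this determines $B$ entirely.

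The one subtlety is recovering $r_1$ itself, not just its pivot: the entries of $r_1$ in columns other than $c_1$ are constrained to lie in $\calR(R,\pi^{d_1})$ by condition~4, but why are they determined by $M$? The answer is that once $M'$ (hence rows $2,\ldots,n$ of $B$) is known, $r_1$ is the \emph{unique} element of $M$ of the form ``(something with $c_1$-th coordinate equal to $\pi^{d_1}$) $+$ (combination of rows $2,\ldots,n$)'' whose entries above each lower pivot lie in the prescribed residue sets — and this uniqueness is exactly the reduction step already performed by the algorithm (subtracting off multiples of lower rows to force residue-set membership), run in reverse. I would phrase this cleanly as: $r_1 \in M$, its $c_1$-th entry is $\pi^{d_1}$, and for $j \neq c_1$ its $j$-th entry is the unique element of $\calR(R,\pi^{d_j})$ congruent mod $\langle \pi^{d_j}\rangle$ to the $j$-th entry after subtracting the forced lower-row combination — here $d_j$ denotes the degree of the pivot in column $j$ if one exists, and $d_j = s$ (forcing the entry to be $0$) otherwise, using Proposition~\ref{prop:rcfprops}(4) to guarantee consistency.

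I expect the main obstacle to be bookkeeping rather than a conceptual gap: making precise the claim that the forced multiple of $r_1$ to subtract from a given $w \in M$ is well-defined and independent of which row canonical form $B$ we started from. The clean way around this is to avoid referring to $B$ at all when defining the invariants — define $d_1, c_1$ purely from $M$ as above, then define $M'$ as the image of $M$ under the $R$-linear-ish map $w \mapsto w - \rho(w_{c_1})\, r_1$ where $\rho : R \to \calR(R,\pi^{s-d_1})$ is the canonical residue map and $r_1$ is the (to-be-shown-unique) first row — but since $r_1$ is what we're trying to determine, it is cleaner still to quotient: let $M'' = M \cap \{w : w_{c_1} \in \langle \pi^{s-d_1+1}\rangle = \langle 0 \rangle\}$... actually the cleanest formulation is to induct on $m$ or on $|M|$ and argue that two row canonical forms $B, B'$ with $\row B = \row B'$ must have (by the forced-pivot argument) the same first row, then $B[2{:}n,{:}]$ and $B'[2{:}n,{:}]$ are both row canonical forms with equal row span (namely $\{w \in \row B : w_{c_1} = 0\}$ after noting every element of that set lies in the span of rows $2,\ldots,n$), so they agree by induction. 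I would write it in this last form, as it sidesteps constructing $M'$ explicitly and leans entirely on Proposition~\ref{prop:rcfprops}, with the base case (one nonzero row, or the zero matrix) being immediate.
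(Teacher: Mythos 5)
Your high-level strategy — show that the row canonical form is determined by $\row A$ alone, with the pivot position/value forced and the remaining rows handled by induction on the submodule $\{w \in \row A : w_{c_1} = 0\}$ — is the same strategy as the paper's, and your characterizations of $d_1$ (smallest degree occurring in $M$) and $c_1$ (earliest column attaining it) are correct applications of Proposition~\ref{prop:rcfprops}(5)--(6). The correct identification of $\{w : w_{c_1} = 0\}$ with $\row B[2{:}n,{:}]$ is also the right move.

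However, there is a real gap at the step you yourself flag as ``the one subtlety,'' and your ``cleanest formulation'' at the end does not close it — in fact it makes matters worse by reordering the argument. Establishing that two RCFs $B,C$ of $A$ have the same first pivot (position and value) does \emph{not} establish that they have the same first \emph{row}: the entries of row 1 in columns other than $c_1$ are only constrained to lie in $\calR(R,\pi^{d_k})$ for each lower pivot of degree $d_k$, plus $B[1,{:}]\in\row A$, and one must actually prove these constraints pin down the row uniquely. The paper's proof handles this with a careful inductive $\Delta$-argument: \emph{after} establishing $B[2{:}n,{:}] = C[2{:}n,{:}]$ by induction, it sets $\Delta = B[1,{:}] - C[1,{:}] \in \row B[2{:}n,{:}]$, writes $\Delta = \sum_{i\ge 2}c_i B[i,{:}]$, and then, working down the pivot columns $c_2, c_3,\ldots$ one at a time, uses the residue-set condition to show each $c_i B[i,{:}] = 0$, hence $\Delta = 0$. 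Your proposal asserts this uniqueness (``$r_1$ is the unique element of $M$ of the form\ldots'') and appeals to ``the reduction step of the algorithm, run in reverse,'' but that is a heuristic, not a proof; and the detailed version you sketch contains two errors: the non-pivot-column entries of $r_1$ are \emph{not} constrained to $\calR(R,\pi^{d_1})$ (the relevant residue set is $\calR(R,\pi^{d_k})$ for the pivot $\pi^{d_k}$ of the column, as you partially correct later), and for columns with no pivot the convention ``$d_j = \len$ forcing the entry to be $0$'' is simply false — $\calR(R,\pi^{\len}) = R$ imposes no constraint, and indeed RCFs routinely have nonzero entries in pivot-free columns (see the paper's Example~6). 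Finally, the ordering in your ``cleanest formulation'' (first rows agree $\Rightarrow$ tails agree by induction) is logically backwards: determining the first row beyond its pivot requires knowing the pivot structure, and most naturally the content, of rows $2,\ldots,n$, which is exactly what the paper establishes first via the induction. The missing ingredient is the $\Delta$-argument with the residue-set contradiction, and it is the essential nontrivial step of the proof.
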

The proof is provided in Appendix~\ref{sec:canonical-proofs}.

\section{Matrices under Row Constraints}
\label{sec:matrix-constraint}

In this section, we study a class of matrices in $R^{n
\times m}$ whose rows are constrained to be elements of
$R^{\mu}$.
We provide
several new counting results
and a construction of principal row canonical forms
for this class of matrices. These results
are of primary importance to our
study of capacities and coding schemes in later sections.

\subsection{$\pi$-adic Decomposition}

Let $R^{n \times \mu}$ denote the set of matrices in $R^{n
\times m}$ whose rows are elements of $R^{\mu}$.  Then the
size of $R^{n \times \mu}$ is
\begin{equation}\label{eq:count-all-matrices}
    | R^{n \times \mu} | = |R^{\mu}|^n = q^{n|\mu|},
\end{equation}
since there are $|R^{\mu}| = q^{|\mu|}$ choices for each row.
Taking the logarithm on both sides of
(\ref{eq:count-all-matrices}), we obtain
\begin{equation}\label{eq:log-ambient}
        \log_q |R^{n \times \mu}| = n |\mu|.
\end{equation}

\begin{figure}[t]
  \centering
\ifCLASSOPTIONonecolumn
  \setlength{\tabcolsep}{1pt}
  \begin{tabular}{ccc}
     \includegraphics{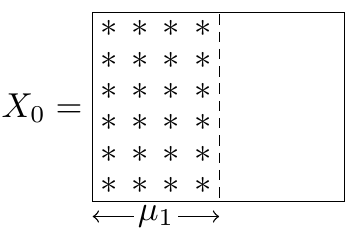} &
     \includegraphics{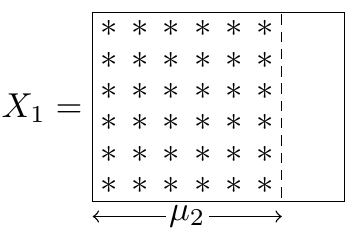} &
     \includegraphics{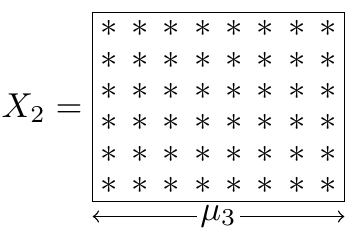}
   \end{tabular}
\else
  \scalebox{0.75}{\setlength{\tabcolsep}{1pt}\begin{tabular}{ccc}
     \includegraphics{fig3a} &
     \includegraphics{fig3b} &
     \includegraphics{fig3c}
   \end{tabular}}
\fi
\caption{Illustration of a $\pi$-adic decomposition for
         $\len = 3$ and $\mu=(4, 6, 8)$.}
\label{fig:constraints}
\end{figure}

Every matrix $X \in R^{n \times \mu}$ can be constructed based
on its $\pi$-adic decomposition
\[
   X = X_0 + \pi X_1 + \cdots + \pi^{\len - 1} X_{\len - 1},
\]
with each auxiliary matrix $X_i$ ($i = 0, \ldots, \len - 1$)
satisfying:
\begin{enumerate}
\item $X_i[1{:} n, 1 {:} \mu_{i+1}]$ is an arbitrary
matrix over $\calR(R, \pi)$, and
\item all other entries in $X_i$ are zero.
\end{enumerate}

The construction is illustrated in Fig.~\ref{fig:constraints}.
Clearly, this construction provides a one-to-one mapping from
sequences of $n |\mu|$ $q$-ary symbols to matrices in $R^{n
\times \mu}$.

\subsection{Row Canonical Forms in $\calT_{\kappa}(R^{n\times\mu})$}
\label{sec:construction}

Let $\calT_{\kappa}(R^{n \times \mu})$ denote the set of
matrices in $R^{n \times \mu}$  whose shape is $\kappa$.  Then
$|\calT_{\kappa}(R^{n \times \mu})|=0$ unless $\kappa \preceq
n$ and $\kappa \preceq \mu$ (written $\kappa \preceq n,\mu$ for short).
The first constraint comes from the fact that the row
canonical form of a matrix in $R^{n \times \mu}$ has at most
$n$ nonzero rows.  The second constraint comes from the fact
that $\row A$ is a submodule of $R^{\mu}$, for any $A \in R^{n
\times \mu}$.  Hence, we will assume that $\kappa \preceq n,
\mu$ in the rest of this paper.  As we will see, the set
$\calT_{\kappa}(R^{n \times \mu})$, together with the row
canonical forms in $\calT_{\kappa}(R^{n \times \mu})$, plays a
crucial role in our coding schemes.

We now enumerate the row canonical forms in
$\calT_{\kappa}(R^{n \times \mu})$.  We need the following
lemma.
\begin{lem}\label{lem:correspondence} There is a
one-to-one correspondence between row canonical forms in
$\calT_{\kappa}(R^{n \times \mu})$ and submodules of $R^{\mu}$
with shape $\kappa$.  \end{lem}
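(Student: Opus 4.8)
The plan is to exploit the fact, established in the previous section, that every matrix $A \in R^{n\times m}$ has a \emph{unique} row canonical form $\RCF(A)$, and that left-equivalent matrices have identical row spans. So the natural map to consider is
\[
\Phi \colon \calT_{\kappa}(R^{n\times\mu}) \to \{\text{submodules of } R^{\mu} \text{ with shape } \kappa\}, \qquad A \mapsto \row A.
\]
This is well-defined: if $A \in \calT_{\kappa}(R^{n\times\mu})$ then $\row A$ is a submodule of $R^{\mu}$ (since each row of $A$ lies in $R^{\mu}$) and $\shape(\row A) = \shape A = \kappa$ by definition of matrix shape. I would restrict attention to row canonical forms, i.e. to the subset of $\calT_{\kappa}(R^{n\times\mu})$ consisting of matrices already in row canonical form, and show $\Phi$ restricted to this subset is a bijection onto the target set.

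\emph{Injectivity.} Suppose $A$ and $A'$ are both in row canonical form, both in $\calT_{\kappa}(R^{n\times\mu})$, and $\row A = \row A'$. Then each row of $A'$ lies in $\row A$, so there is a matrix $U$ with $A' = UA$; similarly $A = U'A'$, so $A$ and $A'$ are left-equivalent (one should note that the transition matrices here can be taken in $\GL_n(R)$, or argue directly that two row-canonical matrices with the same row span are equal — the cleanest route is to invoke uniqueness of the row canonical form: both $A$ and $A'$ are row canonical forms of, say, $A$ itself, hence $A = A'$). This gives injectivity.

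\emph{Surjectivity.} Given any submodule $M \subseteq R^{\mu}$ with $\shape M = \kappa$, pick any generating set of $M$ with at most $n$ elements — this is possible since $M$ is generated by $|\kappa| \le |\kappa|$ elements and in fact by at most $\kappa_{\len}$ generators, and $\kappa \preceq n$ ensures $\kappa_{\len}\le n$ — and stack them (padding with zero rows) into a matrix $A_0 \in R^{n\times\mu}$ with $\row A_0 = M$. Then $\shape A_0 = \shape M = \kappa$, so $A_0 \in \calT_{\kappa}(R^{n\times\mu})$, and $\RCF(A_0)$ is a matrix in row canonical form with the same row span $M$ and the same shape $\kappa$ (row span is preserved by left equivalence); its rows still lie in $R^{\mu}$ since $M \subseteq R^{\mu}$. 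Hence $\RCF(A_0) \in \calT_{\kappa}(R^{n\times\mu})$ maps to $M$ under $\Phi$.

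\emph{Main obstacle.} The delicate point is the bookkeeping for surjectivity: one must be sure that a submodule of shape $\kappa$ can actually be generated by $n$ elements so that it fits inside $R^{n\times\mu}$, and that reducing to row canonical form does not push any row outside $R^{\mu}$ (it does not, because the row span is unchanged and stays inside $R^{\mu}$). The number of generators needed is the length of the conjugate partition of $\kappa$, equivalently the number of nonzero parts, which is at most $\kappa_{\len} \le n$ — this is exactly why the hypothesis $\kappa \preceq n$ is in force. Everything else is a direct appeal to the existence/uniqueness of $\RCF$ and to $\shape A = \shape(\row A)$.
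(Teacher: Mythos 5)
Your proposal takes essentially the same route as the paper's proof: the map from row canonical forms in $\calT_\kappa(R^{n\times\mu})$ to their row modules, injectivity via the uniqueness of the row canonical form, and surjectivity by stacking generators of a given submodule $M$ into a matrix in $R^{n\times\mu}$ (the paper stacks \emph{all} of $M$, you stack a generating set of size $\leq \kappa_\len \leq n$, but the truncation-to-$n$-rows step is the same) and passing to its row canonical form. One caveat worth flagging: the ``cleanest route'' you suggest for injectivity is circular as stated — calling $A'$ a row canonical form \emph{of} $A$ already presupposes the left-equivalence you are trying to establish — and what is really needed, which both you and the paper use implicitly, is the standard local-ring fact (via Nakayama) that two $n\times m$ matrices over a finite chain ring with identical row modules are left-equivalent.
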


The proof is provided in Appendix~\ref{sec:appendix-proofs}.
By Lemma~\ref{lem:correspondence},  the number of row
canonical forms in $\calT_{\kappa}(R^{n \times \mu})$ is
$\submodule{\mu}{\kappa}$.  It is helpful to bound this number
as well as the logarithm of this number.  Combining
(\ref{eqn:ModuleCount}) and the fact that
\[
    q^{k(m - k)} \le \gauss{m}{k} \le4 q^{k(m - k)}
\]
(see, e.g., \cite[Lemma~4]{KK07}), we have
\begin{equation}\label{eq:num-module}
        q^{\sum_{i = 1}^\len \kappa_i(\mu_i - \kappa_i)} \le \submodule{\mu}{\kappa}
        \le 4^s q^{\sum_{i = 1}^\len \kappa_i (\mu_i - \kappa_i)}.
\end{equation}
Taking logarithms, we obtain
\begin{equation}\label{eq:log-module}
        \sum_{i = 1}^\len \kappa_i(\mu_i - \kappa_i) \le \log_q\submodule{\mu}{\kappa}
        \le \sum_{i = 1}^\len \kappa_i (\mu_i - \kappa_i)  + \len \log_q 4.
\end{equation}

\begin{exam}\label{ex:count-matrix}
Let $R = \ZZ_4$, and let $n = 2$, $\mu = (2, 3)$, $\kappa =
(1, 2)$.  Then by Lemma~\ref{lem:correspondence}, there are
$18$ row canonical forms in $\calT_{\kappa}(R^{n \times
\mu})$.  These $18$ row canonical forms can be classified into
$4$ categories based on the positions of their pivots:
\[
    \mat{1 & * & *\\0 & 2 & *} \ \mat{0 & 1 & *\\2 & 0 & *} \ \mat{1 & * & 0\\ 0 & 0 & 2} \
    \mat{* & 1 & 0\\ 0 & 0 & 2}  .
\]
The first category contains $8$ row canonical forms, namely,
\begin{align*}
    &\mat{1 & 0 & 0\\0 & 2 & 0} \ \mat{1 & 0 & 0\\0 & 2 & 2} \ \mat{1 & 0 & 2\\0 & 2 & 0} \
    \mat{1 & 0 & 2\\0 & 2 & 2}\\
    &\mat{1 & 1 & 0\\0 & 2 & 0} \ \mat{1 & 1 & 0\\0 & 2 & 2} \ \mat{1 & 1 & 2\\0 & 2 & 0} \
    \mat{1 & 1 & 2\\0 & 2 & 2}.
\end{align*}
The second category contains $4$ row canonical forms, namely,
\[
    \mat{0 & 1 & 0\\2 & 0 & 0} \ \mat{0 & 1 & 2\\2 & 0 & 0} \ \mat{0 & 1 & 0\\2 & 0 & 2} \
    \mat{0 & 1 & 2\\2 & 0 & 2}.
\]
The third category contains $4$ row canonical forms, namely,
\[
    \mat{1 & 0 & 0\\ 0 & 0 & 2} \ \mat{1 & 1 & 0\\ 0 & 0 & 2} \ \mat{1 & 2 & 0\\ 0 & 0 & 2} \
    \mat{1 & 3 & 0\\ 0 & 0 & 2}.
\]
The fourth category contains $2$ row canonical forms, namely,
\[
    \mat{0 & 1 & 0\\ 0 & 0 & 2} \ \mat{2 & 1 & 0\\ 0 & 0 & 2}.
\]
Clearly, the first category contains a significant portion of
all possible row canonical forms.
\end{exam}

Motivated by the above example, we introduce principal row
canonical forms that make up a significant portion of all
possible row canonical forms in $\calT_{\kappa}(R^{n \times
\mu})$.

A row canonical form in $\calT_{\kappa}(R^{n \times \mu})$ is
called \emph{principal} if its diagonal entries $d_1, d_2,
\ldots, d_r$ ($r = \min\{ n, m \}$) have the following form:
\begin{equation}\label{eq:principal-form}
d_1, \ldots, d_r \!=\! \underbrace{1, \ldots, 1}_{\kappa_1}, \underbrace{\pi, \ldots, \pi}_{\kappa_2 - \kappa_1}, \ldots, \underbrace{\pi^{\len - 1}, \ldots, \pi^{\len - 1}}_{\kappa_\len - \kappa_{\len - 1}},
\underbrace{0, \ldots, 0}_{r - \kappa_\len}.
\end{equation}
Clearly, the first category in Example~\ref{ex:count-matrix}
contains all principal row canonical forms for
$\calT_{\kappa}(\ZZ_4^{n \times \mu})$ with $n = 2$, $\mu =
(2, 3)$ and $\kappa = (1, 2)$.

\begin{prop}\label{prop:principal}
Every principal row canonical form $X \in \calT_{\kappa}(R^{n
\times \mu})$ can be constructed based on its $\pi$-adic
decomposition
\[
X = X_0 + \pi X_1 + \cdots + \pi^{\len - 1} X_{\len - 1},
\]
with each auxiliary matrix $X_i$ ($i = 0, \ldots, \len - 1$)
satisfying the following conditions:
\begin{enumerate}
\item $X_i[1 {:} \kappa_{i + 1}, 1 {:} \kappa_{i + 1}] = \diag(\underbrace{0, \ldots, 0}_{\kappa_i},
\underbrace{1, \ldots, 1}_{\kappa_{i+1} - \kappa_i})$,
\item $X_i[1{:} \kappa_{i+1}, \kappa_{i+1}+1 {:}
\mu_{i+1}]$ can be any matrix over $\calR(R, \pi)$, and
\item all other entries in $X_i$ are zero.
\end{enumerate}
\end{prop}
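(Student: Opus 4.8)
The plan is to establish a bijection between principal row canonical forms in $\calT_{\kappa}(R^{n \times \mu})$ and tuples $(X_0, \ldots, X_{\len-1})$ of auxiliary matrices satisfying conditions 1)--3), by showing that the $\pi$-adic decomposition of a principal row canonical form is forced into exactly this shape, and conversely that any such tuple reassembles into a principal row canonical form. I would proceed in two directions. For the forward direction, let $X \in \calT_{\kappa}(R^{n \times \mu})$ be a principal row canonical form with $\pi$-adic decomposition $X = X_0 + \pi X_1 + \cdots + \pi^{\len-1} X_{\len-1}$, where (by the standing conventions of Section~\ref{sec:canonical}) each entry of each $X_i$ lies in $\calR(R,\pi)$. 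By the definition of principal (see \eqref{eq:principal-form}), the pivots are $d_1, \ldots, d_r = 1, \ldots, 1, \pi, \ldots, \pi, \ldots$, with $\kappa_{i+1} - \kappa_i$ copies of $\pi^i$, and the pivot $d_k$ sits on the diagonal in position $(k,k)$. I would first argue that, because $X \in R^{n\times\mu}$, the $j$th column is constrained to $\langle \pi^{e(j)} \rangle$ where $e(j)$ is the smallest index with $\mu_{e(j)} \ge j$; in particular, for $\kappa_i < j \le \kappa_{i+1}$ we have $\mu_i \ge \mu_i \ge \kappa_i \ge \ldots$, wait --- more carefully, since $\kappa \preceq \mu$, a column index $j$ with $j \le \kappa_{i+1}$ satisfies $j \le \mu_{i+1}$, so that column is allowed to contain a pivot of degree $i$ (i.e.\ $\pi^i$), consistent with the principal form.

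The heart of the forward direction is to translate the four row-canonical-form conditions, together with principality, into the per-slice statements 1)--3). I would use the structure of the pivot columns: by condition 4) of the row-canonical-form definition, in the column containing the pivot $p_k = \pi^{i}$ (with $\kappa_i < k \le \kappa_{i+1}$), all entries below are $0$ and all entries above lie in $\calR(R,\pi^{i})$. Since $p_k = \pi^i$ itself has $\pi$-adic decomposition with a single $1$ in slice $i$ and zeros elsewhere, and the above-entries lie in $\calR(R,\pi^i) = \{\sum_{j=0}^{i-1} a_j \pi^j\}$ which has \emph{zero} slice-$i$ component, the $(\cdot, k)$ column of $X_i$ has a single $1$ in row $k$ and is otherwise zero --- giving the diagonal pattern in condition 1). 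For the remaining (non-pivot) columns, I would invoke Proposition~\ref{prop:rcfprops}(3)--(4): entries strictly below-and-left of a pivot, or in rows below a pivot within its own column, have degree strictly larger, while rows $k$ with $\kappa_i < k \le \kappa_{i+1}$ (pivot degree $i$) have all entries of degree $\ge i$ by part (3), so slices $X_0, \ldots, X_{i-1}$ vanish on those rows. Combined with the column constraint $j \le \mu_{i+1}$ for the nonzero region, this pins down that the only free entries of $X_i$ are $X_i[1{:}\kappa_{i+1},\, \kappa_{i+1}+1{:}\mu_{i+1}]$, which are unconstrained over $\calR(R,\pi)$ --- condition 2) --- and everything else is zero --- condition 3).

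For the converse, I would take an arbitrary tuple $(X_0,\ldots,X_{\len-1})$ satisfying 1)--3), form $X = \sum_i \pi^i X_i$, and verify directly that $X$ is a principal row canonical form of shape $\kappa$: the four row-canonical-form conditions are checked row by row using that row $k$ (for $\kappa_i < k \le \kappa_{i+1}$) has its pivot $\pi^i$ in column $k$ (from the $\diag$ block of $X_i$ in condition 1), with strictly-larger-degree or zero entries to its left and below by 1) and 3), and entries of $\calR(R,\pi^i)$ above it since higher slices $X_{i'}$, $i' > i$, cannot touch that column below row $\kappa_{i'} \ge \kappa_{i+1} \ge k$ --- wait, they touch rows $1{:}\kappa_{i'+1}$, so one must check the slice-$i$ component of entries above the pivot is zero, which holds because $X_i$ restricted to those columns $1{:}\kappa_{i+1}$ is exactly the diagonal block, hence zero above the diagonal. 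Finally $\shape X = \kappa$ because the Smith normal form of a principal row canonical form is, by construction, $\diag(1,\ldots,1,\pi,\ldots,\pi,\ldots,0,\ldots)$ with multiplicities $\kappa_{i+1}-\kappa_i$, matching the characterization of shape via Smith normal form given in Section~\ref{sec:matrix}. I expect the main obstacle to be the bookkeeping in the forward direction: carefully disentangling, for each slice $X_i$ and each row block, exactly which $\pi$-adic components of which entries are forced to vanish, keeping the interaction between the column constraint (from $\mu$), the degree constraints (from row canonical form), and the pivot placement (from principality) all consistent. This is essentially a careful index-chase rather than a deep argument, so I would organize it as a short lemma per slice and then assemble.
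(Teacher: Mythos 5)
Your proposal is correct and matches the paper's proof in essence: the paper also argues both directions (their Claim~(i) is your ``converse,'' reconstructing a principal row canonical form from a tuple $(X_0,\ldots,X_{\len-1})$, and their Claim~(ii) is your ``forward'' direction, showing the $\pi$-adic slices of a principal RCF are forced into conditions 1)--3)), in both cases by examining $\pi$-adic slices entry by entry against the four row-canonical-form conditions and the $\mu$-constraint. The only differences are expository: you lean explicitly on Proposition~\ref{prop:rcfprops}(3)--(4) where the paper cites the RCF conditions directly, and your converse-direction self-correction ends up verifying only the slice-$i$ vanishing above the pivot --- you still need the parallel observation that for $l' > i$ the block $X_{l'}[1{:}\kappa_{l'+1},1{:}\kappa_{l'+1}]$ has zeros in columns $1,\ldots,\kappa_{l'} \supseteq \{1,\ldots,\kappa_{i+1}\}$, which gives the higher-slice vanishing as well --- but this is a small bookkeeping gap, not a conceptual one.
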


The proof is provided in Appendix~\ref{sec:appendix-proofs}.
The construction is illustrated in
Fig.~\ref{fig:construction}.  Clearly, this construction
provides a one-to-one mapping from sequences of $\sum_{i =
1}^{\len}  \kappa_i(\mu_i - \kappa_i)$ $q$-ary symbols to
principal row canonical forms in $\calT_{\kappa}(R^{n \times
\mu})$.  Note that the number of principal row canonical forms
in $\calT_{\kappa}(R^{n \times \mu})$ is $q^{\sum_{i =
1}^{\len}  \kappa_i(\mu_i - \kappa_i)}$, which is comparable
to the number of row canonical forms in $\calT_{\kappa}(R^{n
\times \mu})$ in total.

\begin{figure}[t]
  \centering
\ifCLASSOPTIONonecolumn
  \setlength{\tabcolsep}{1pt}
  \begin{tabular}{ccc}
     \includegraphics{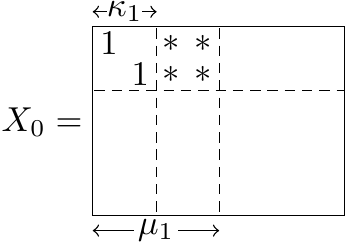} &
     \includegraphics{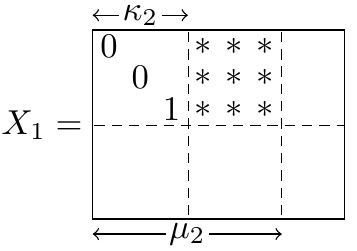} &
     \includegraphics{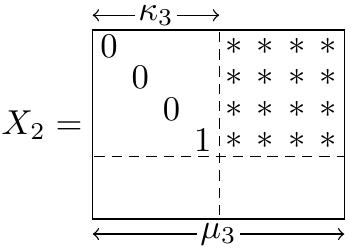}
   \end{tabular}
\else
  \scalebox{0.75}{\setlength{\tabcolsep}{1pt}\begin{tabular}{ccc}
     \includegraphics{fig4a} &
     \includegraphics{fig4b} &
     \includegraphics{fig4c}
   \end{tabular}}
\fi
  \caption{Illustration of the construction of principal row canonical forms
   for $\calT_{\kappa}(R^{n \times \mu})$ with
   $\len = 3$, $n=6$, $\mu=(4,6,8)$, and $\kappa = (2,3,4)$.}
  \label{fig:construction}
\end{figure}

\subsection{General Matrices in $\calT_\kappa(R^{n \times \mu})$}

Next, we count the number of matrices in
$R^{n \times \mu}$ of shape $\kappa$,
which is a central result in this section.
The proof is provided in Appendix~\ref{sec:appendix-proofs}.

\begin{thm}\label{thm:MatrixCount}
The size of $\calT_{\kappa}(R^{n \times \mu})$ is given by
\begin{equation}
|\calT_{\kappa}(R^{n \times \mu})| = |R^{n \times \kappa}|   \prod_{i = 0}^{\kappa_\len - 1} (1 - q^{i - n})
\submodule{\mu}{\kappa}.
\label{eqn:MatrixCount}
\end{equation}
\end{thm}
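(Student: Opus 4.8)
The plan is to establish (\ref{eqn:MatrixCount}) by a fibration argument: I will count matrices in $\calT_\kappa(R^{n\times\mu})$ according to their row span, which (by definition of shape) ranges over the submodules of $R^\mu$ of shape $\kappa$. There are $\submodule{\mu}{\kappa}$ such submodules by (\ref{eqn:ModuleCount}), so the claimed identity is equivalent to showing that each fiber---the set of matrices $A\in R^{n\times\mu}$ with $\row A$ equal to a \emph{fixed} submodule $V\subseteq R^\mu$ of shape $\kappa$---has size $|R^{n\times\kappa}|\prod_{i=0}^{\kappa_\len-1}(1-q^{i-n})$, independent of $V$. Equivalently, writing $M$ for a fixed matrix in row canonical form with $\row M=V$ (which exists and is unique by Lemma~\ref{lem:correspondence} and the construction in Section~\ref{sec:canonical}), I must count the matrices $A$ with $A=CM$ for some matrix $C$ such that $\row(CM)=\row M$. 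A cleaner route: since $M$ has exactly $\kappa_\len$ nonzero rows, say the submatrix $M'\in R^{\kappa_\len\times m}$ formed by those rows, every $A$ with $\row A\subseteq V$ is $A=CM'$ for a unique $C\in R^{n\times\kappa_\len}$ (uniqueness because the nonzero rows of a row canonical form are $R$-linearly independent in the appropriate sense---each pivot column pins down one coordinate of $C$). So the fiber over $V$ is in bijection with the set of $C\in R^{n\times\kappa_\len}$ for which $\row(CM')=V$, i.e.\ $\row C$, as a submodule of $R^{\kappa_\len}$, must be all of $R^{\kappa_\len}$ in the appropriate shape-graded sense; concretely I expect the condition to be exactly that $C$, viewed as a matrix whose $j$th column ``lives in'' the ideal dictated by the degree of the $j$th pivot of $M$, is full column rank modulo $\pi$ after rescaling.

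The key computational step is therefore to identify the right bijection and evaluate the two factors. I would proceed as follows. First, I fix the row canonical form $M$ of shape $\kappa$, and observe that its nonzero rows $M[1,{:}],\ldots,M[\kappa_\len,{:}]$ have pivots $\pi^{d_1},\ldots,\pi^{d_{\kappa_\len}}$ in distinct columns $c_1,\ldots,c_{\kappa_\len}$, where the multiset $\{d_k\}$ consists of $\kappa_1$ zeros, $\kappa_2-\kappa_1$ ones, and so on. Second, for $A\in R^{n\times m}$ with $\row A\subseteq\row M$, I write $A=CM'$ and use the pivot columns to read off $C$: looking at column $c_k$ of $A$ and using Proposition~\ref{prop:rcfprops}(4), one extracts $C[{:},k]$ up to the ambiguity $\pi^{\len-d_k}$, i.e.\ $C[{:},k]$ is determined as an element of $R/\langle\pi^{\len-d_k}\rangle$; hence the pair $(A,C)$ is genuinely in bijection with $A\in R^{n\times\mu}$ with $\row A\subseteq V$ once we agree $C$ has its $k$th column drawn from a complete residue system mod $\pi^{\len-d_k}$. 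The number of such ``reduced'' $C$ with $\row(CM')=V$ equals (number of all reduced $C$) times (fraction that are ``full rank'' onto $V$), and I expect: the number of all reduced $C$ is $\prod_k q^{n(\len-d_k)} = q^{n\sum_k(\len-d_k)} = q^{n(\len\kappa_\len-\sum_k d_k)} = q^{n|\kappa|} = |R^{n\times\kappa}|$, using $\sum_k d_k = \sum_i i\,(\kappa_{i+1}-\kappa_i)\cdot[\text{reindex}] = \len\kappa_\len-|\kappa|$; and the surjectivity fraction is $\prod_{i=0}^{\kappa_\len-1}(1-q^{i-n})$, which is precisely the probability that a uniformly random $n\times\kappa_\len$ matrix over the residue field $\FF_q$ has full column rank $\kappa_\len$ (using the earlier count of full-row-rank matrices applied to $C^T$). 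Matching the residue-field reduction of ``$\row(CM')=V$'' to ``$\bar C$ has full column rank'' is the crux: a standard Nakayama/lifting argument shows $\row(CM')=\row M'$ iff the reduction $\bar C\in\FF_q^{n\times\kappa_\len}$ has rank $\kappa_\len$, because $M'$ is in row canonical form so its rows reduce to a ``staircase'' that stays independent after quotienting by $\pi$.

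Assembling, the fiber over $V$ has size $|R^{n\times\kappa}|\prod_{i=0}^{\kappa_\len-1}(1-q^{i-n})$, and summing over the $\submodule{\mu}{\kappa}$ choices of $V$ gives (\ref{eqn:MatrixCount}). I expect the main obstacle to be the bookkeeping in the bijection $A\leftrightarrow(V,C)$: one must be careful that the column-wise ambiguities $\pi^{\len-d_k}$ in recovering $C$ exactly account for the passage from counting arbitrary $A\in R^{n\times\mu}$ with $\row A\subseteq V$ (of which there are $|R^{n\times\kappa}|$, since such $A$ is an arbitrary element of $V^n\cong(R^\kappa)^n$) down to the ``full-rank'' ones, and that this is uniform in $V$. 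A secondary subtlety is justifying that every $A$ with $\row A$ \emph{equal to} $V$ (not just contained in it) corresponds to the full-column-rank $\bar C$; this is where the row-canonical-form structure of $M'$, via Proposition~\ref{prop:rcfprops}, does the real work, guaranteeing that $\row(CM')$ recovers the full shape $\kappa$ exactly when $\bar C$ has rank $\kappa_\len$. Everything else---the evaluation of $q^{n|\kappa|}$ and of the full-rank probability---is a routine consequence of the enumeration facts already recorded in Sections~\ref{sec:preli} and \ref{sec:matrix-constraint}.
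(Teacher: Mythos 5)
Your strategy is the same as the paper's: partition $\calT_\kappa(R^{n\times\mu})$ by row span (equivalently, by row canonical form), observe via Lemma~\ref{lem:correspondence} that there are $\submodule{\mu}{\kappa}$ classes, and show each class has the stated size. What differs is the bookkeeping inside the fiber. The paper (Lemmas~\ref{lem:decomp1} and~\ref{lem:fixed-RCF}) shows that each $A$ with $\RCF(A)=B$ admits exactly $q^{n\sum_k d_k}$ full-column-rank decompositions $A=P_1\tilde B$, and then \emph{divides} the total count of full-column-rank matrices in $R^{n\times\kappa_\len}$ by that multiplicity. You instead choose one decomposition per $A$ up front --- the ``reduced'' $C$ with column $k$ ranging over $\calR(R,\pi^{\len-d_k})$ --- so that the count $q^{n\sum_k(\len-d_k)}=q^{n|\kappa|}=|R^{n\times\kappa}|$ and the full-rank fraction $\prod_{i=0}^{\kappa_\len-1}(1-q^{i-n})$ appear as a clean product rather than as a quotient. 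This is a genuinely tidier factoring, and the arithmetic $\sum_k d_k=\len\kappa_\len-|\kappa|$ checks out.

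Two of your flagged gaps are real but fill in exactly along the lines of the paper's Lemma~\ref{lem:decomp1}. First, that the reduced $C$ are a transversal for the kernel of $C\mapsto CM'$: the paper proves this by writing $\tilde B=DB'$ with $D=\diag(\pi^{d_1},\dots,\pi^{d_{\kappa_\len}})$ and $B'$ full row rank, so that $CM'=C'M'$ iff $CD=C'D$ iff $C\equiv C'$ columnwise mod $\pi^{\len-d_k}$; your ``each pivot column pins down one coordinate of $C$'' is the right intuition but is not quite a proof (the pivot columns of $M'$ are not lower-triangular, so a naive readout of $C$ from $A[\,{:}\,,c_k]$ does not directly work). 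Second, the ``Nakayama/lifting'' claim that $\row(CM')=V$ iff $\bar C$ has full column rank: the forward direction is the standard lifting; for the backward direction, note that there exists at least one full-column-rank $P_1$ with $P_1M'=CM'$ (from splitting an invertible $P$ with $\RCF$), and the kernel characterization above forces $C\equiv P_1\pmod{\pi}$ since every $\len-d_k\geq 1$, so $\bar C=\bar P_1$ is automatically full rank. With these two points made precise --- both of which are exactly the content of the paper's Lemma~\ref{lem:decomp1} --- your argument is complete and correct, and the product presentation of the fiber size is arguably cleaner than the paper's quotient presentation.
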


In particular, when the
chain length $\len = 1$, $R$ becomes $\FF_q$, and this
counting result becomes $\prod_{i = 0}^{\kappa_1- 1} (q^n -
q^{i}) \gauss{\mu_1}{\kappa_1}$, which is the number of $n
\times \mu_1$ matrices of rank $\kappa_1$.  We note that
Theorem~\ref{thm:MatrixCount} generalizes a theorem of
\cite{McDonald70} from square matrices to general matrices and
from Galois rings to finite chain rings.

Taking logarithms on both sides of (\ref{eqn:MatrixCount}), we have
\ifCLASSOPTIONonecolumn
\[
    \log_q |\calT_\kappa(R^{n \times \mu})| =\log_q \submodule{\mu}{\kappa} + \log_q |R^{n    \times \kappa}|
    + \log_q \prod_{i = 0}^{\kappa_\len - 1}(1 - q^{i - n}).
\]
\else 
\begin{multline*}
    \log_q |\calT_\kappa(R^{n \times \mu})| =\log_q \submodule{\mu}{\kappa} + \log_q |R^{n    \times \kappa}| \\
    + \log_q \prod_{i = 0}^{\kappa_\len - 1}(1 - q^{i - n}).
\end{multline*}
\fi
Combining this with (\ref{eq:log-ambient}) and (\ref{eq:log-module}), we obtain
\begin{multline}
        \sum_{i = 1}^\len \kappa_i(n + \mu_i - \kappa_i) + \log_q \prod_{i = 0}^{\kappa_\len - 1}(1 - q^{i - n}) \\
        \le \log_q |\calT_\kappa(R^{n \times \mu})| \le \\
        \sum_{i = 1}^\len \kappa_i(n + \mu_i - \kappa_i)
        + \log_q \prod_{i = 0}^{\kappa_\len - 1}(1 - q^{i - n}) + \len \log_q 4.
        \label{eq:log-matrix}
\end{multline}

\subsection{Notational Summary}

Table~\ref{table:notations}
summarizes the notation
that will be used extensively in the study of matrix channels.
Also listed are finite-field counterparts, which facilitates
comparisons of this work with \cite{SKK10}.

\begin{table}[hp]
\centering
\caption{Notational Summary}
\vspace{0.5em}
\small
{
\renewcommand{\arraystretch}{1.2}
\ifCLASSOPTIONonecolumn
\begin{tabular}{c|c|c}
\hline
 notation & meaning   & finite-field counterpart  \\
\hline \hline
$\mu$ & shape & rank \\ \hline
$R^{\mu}$ & $R$-module & vector space $\FF_q^m$ \\ \hline
$R^{n \times \mu}$ & set of matrices with rows from
$R^{\mu}$ & $\FF_q^{n \times m}$ \\ \hline
$\calT_{\kappa}(R^{n \times \mu})$ & set of matrices in
$R^{n \times \mu}$ with shape $\kappa$ & set of matrices
in $\FF_q^{n \times m}$ with rank $t$ \\ \hline
$\RCF(A)$ & row canonical form of $A$ & reduced row echelon form \\ \hline
\end{tabular}
\else
\begin{tabular}{c|c|c}
\hline
 notation & meaning   & counterpart  \\
 \hline \hline
 $\mu$ & shape & rank \\ \hline
 $R^{\mu}$ & $R$-module & vector space $\FF_q^m$ \\ \hline
  & \raisebox{-0.4em}{set of matrices with} & \\
  \raisebox{1em}{$R^{n \times \mu}$} & \raisebox{0.4em}{rows from
  $R^{\mu}$} & \raisebox{1em}{$\FF_q^{n \times m}$}\\ \hline
   & \raisebox{-0.4em}{set of matrices in} & \raisebox{-0.4em}{set of matrices in} \\
  \raisebox{1em}{$\calT_{\kappa}(R^{n \times \mu})$} & \raisebox{0.4em}{$R^{n \times \mu}$ with shape $\kappa$} & \raisebox{0.4em}{$\FF_q^{n \times m}$ with rank $t$}\\ \hline
  $\RCF(A)$ & row canonical form  & reduced row echelon form \\ \hline
\end{tabular}
\fi
}
\label{table:notations}
\end{table}

\section{Channel Decomposition}\label{sec:channel}

In this section, we introduce a channel decomposition technique that
converts a matrix channel over certain finite rings into a set of
independent parallel matrix channels over finite chain rings.  This
enables us to focus on matrix channels over finite chain rings, thereby
greatly facilitating our study of capacity results and coding schemes in
later sections.

As shown in our previous work \cite{FSK-submitted}, nested-lattice-based
PNC induces a message space of the form $\Omega = T / \langle d_1
\rangle \times \cdots \times T / \langle d_m \rangle$, where $T$ is a
PID and $d_m \mid \cdots \mid d_1$.  Let $R \triangleq T / \langle d_1
\rangle$. (Note that $R$ is a PIR, but not necessarily a finite chain ring.)
We can rewrite $\Omega$ as
\[
\Omega = R \times (d_1/d_2) R \times \cdots \times (d_1/d_m) R;
\]
this expression says that $\Omega$ can be viewed as a collection of
$m$-tuples (over $R$) whose $j$th component is a multiple of $d_1/d_j$.
\begin{exam}
Let $\Omega = \ZZ_{12} \times \ZZ_{6} \times \ZZ_{6} \times \ZZ_{2}$.
Then $\Omega$ can be expressed as $\ZZ_{12} \times 2 \ZZ_{12} \times 2
\ZZ_{12} \times 6 \ZZ_{12}$ via the following map:
\[
(a_1 + (12), a_2 + (6), a_3 + (6), a_4 + (2)) \to (a_1, 2 a_2, 2 a_3, 6 a_4),
\]
where $a_1 \in \{0, \ldots, 11\}$, $a_2, a_3 \in \{0, \ldots, 5 \}$, and
$a_4 \in \{ 0,1 \}$. Clearly, this map is one-to-one.
\end{exam}

With this expression, our matrix channel can be written as
\begin{equation}\label{eq:chain-ring-channel}
    Y = AX + BE
\end{equation}
where $X \in R^{n \times m}$ and $Y \in R^{N \times m}$ are the input
and output matrices whose rows are from $\Omega$, $E \in R^{t \times m}$
is the error matrix whose rows (also from $\Omega$) correspond to
additive (random) error packets.  The transfer matrices $A \in R^{N
\times n}$ and $B \in R^{N \times t}$ are random matrices with some
joint distribution, and $X$, $(A, B)$, $E$ are statistically
independent.  For simplicity of presentation, we sometimes write the
channel model as $Y = AX + Z$, where $Z = BE$ is called the noise
matrix.  Clearly, the channel model is an instance of the discrete
memoryless channel $(\calX, p_{Y|X}, \calY)$ with input alphabet $\calX
= R^{n \times m}$, output alphabet $\calY = R^{N \times m}$ and channel
transition probability $p_{Y|X}$.  The capacity of this channel is given
by
 \[
    C = \max_{p_X} I(X; Y)
 \]
where $p_X$ is the input distribution.

Next, we illustrate how to decompose the matrix channel.  To this end,
we first decompose the message space $\Omega$.  Since $T$ is a PID, $d_1
\in T$ can be factored as $d_1 = u_1 p_1^{t_{1,1}} \cdots
p_L^{t_{L,1}}$, where $u_1$ is a unit in $T$, $p_1, \ldots,
p_L$ are primes in $T$, and $t_{1,1}, \ldots, t_{L,1}$ are
positive integers.  Since  $d_m \mid \cdots \mid d_1$, we have $d_j =
u_j p_1^{t_{1,j}} \cdots p_L^{t_{L,j}}$ ($j = 2, \ldots, m$),
where $u_j$ is a unit, and $t_{1,j}, \ldots, t_{L,j}$ are
non-negative integers. Now, let
\[
\Omega_\ell \triangleq T / \langle p_\ell^{t_{\ell,1}} \rangle \times \cdots \times
T / \langle p_\ell^{t_{\ell,m}} \rangle, \ \ell = 1, \ldots, L.
\]
By the Chinese remainder theorem, we have $\Omega \cong \Omega_1 \times
\cdots \times \Omega_L$. This gives rise to a decomposition of
$\Omega$.
\begin{exam}
Let $\Omega = \ZZ_{12} \times \ZZ_{6} \times \ZZ_{6} \times \ZZ_{2}$. Then
\begin{align*}
\Omega &\cong (\ZZ_{4} \times \ZZ_3) \times( \ZZ_{2} \times \ZZ_3) \times (\ZZ_{2} \times \ZZ_3) \times \ZZ_{2} \\
&\cong (\underbrace{\ZZ_4 \times \ZZ_2 \times \ZZ_2 \times \ZZ_2}_{\Omega_1}) \times (\underbrace{\ZZ_3 \times \ZZ_3 \times \ZZ_3}_{\Omega_2}).
\end{align*}
\end{exam}

Note that $\Omega_\ell$ has an interesting interpretation:
$\Omega_\ell$ is a natural projection of $\Omega$ onto some finite chain
ring. Let $R_\ell \triangleq T / \langle p_\ell^{t_{\ell,1}} \rangle$ (which is a finite chain ring).
It is easy to check that $\Omega_\ell = R_\ell \times p_\ell^{(t_{\ell,1} - t_{\ell,2})} R_\ell \times
\cdots \times p_\ell^{(t_{\ell,1} - t_{\ell,m})} R_\ell$ and that
\[
\Omega_\ell = \{ (r_1, \ldots, r_m) \mbox{ mod } R_\ell \mid (r_1, \ldots, r_m) \in \Omega  \}.
\]

We are now ready to introduce the channel decomposition.  For any matrix
$X \in R^{n \times m}$,
let $X^{[\ell]} \triangleq X \mbox{ mod } R_\ell$, the
projection of every entry of $X$ onto $R_\ell$.
Applying this projection
to the matrix channel, we obtain $L$ sub-channels
\begin{equation}\label{eq:channel3}
Y^{[\ell]} = A^{[\ell]} X^{[\ell]} + Z^{[\ell]},
\end{equation}
for $\ell = 1, \ldots, L$, as illustrated in Fig.~\ref{fig:decomp}.
Clearly, each row of $X^{[\ell]}$ (or, $Y^{[\ell]}$, $Z^{[\ell]}$)
is from $\Omega_\ell$.

\begin{figure}[h]
\begin{center}
\scalebox{0.8}{
\includegraphics{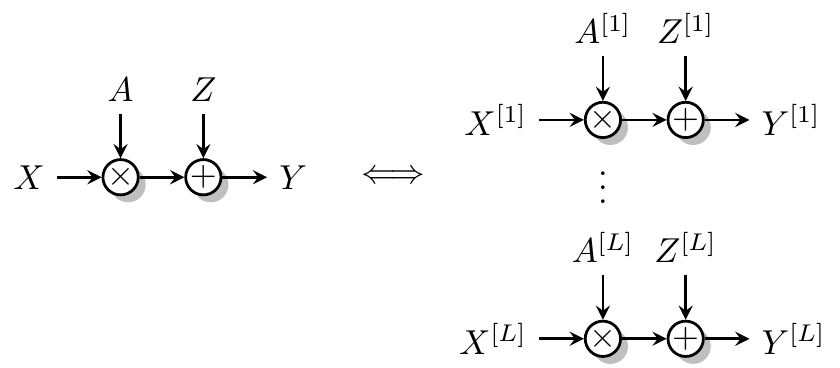}
}
\caption{An illustration of the channel decomposition.}
\label{fig:decomp}
\end{center}
\end{figure}

These sub-channels are, in general, correlated with each other.  Hence,
we have  $C \ge \sum_{\ell = 1}^L C_\ell$, where $C_\ell$ is the capacity of
sub-channel $\ell$. The equality is achieved for certain distributions of
$A$ and $Z$. One such distribution is provided in Theorem~\ref{thm:decomposition}.
We need a few definitions.
We say a matrix $A \in R^{n \times m}$ have \emph{rank} $t$, if for all $\ell$, $A^{[\ell]}$ has rank $t$. A matrix $A \in R^{n \times m}$ is \emph{full rank}
if $\rank A = \min\{n, m\}$.

\begin{thm}\label{thm:decomposition}
Suppose that the transfer matrix $A \in R^{N \times n}$ ($N \ge n$) is
uniform over all full-rank matrices and that the noise matrix $Z \in
R^{N \times m}$ is uniform over all rank-$t$ matrices (whose rows are
from $\Omega$).  Suppose that $A$ and $Z$ are independent of each other.
Then the channel decomposition induces $L$ independent sub-channels
\[
Y^{[\ell]} = A^{[\ell]} X^{[\ell]} + Z^{[\ell]}, \ \ell = 1, \ldots, L,
\]
where $A^{[\ell]} \in R_\ell^{N \times n}$ is uniform over full-rank matrices
(over $R_\ell$), $Z^{[\ell]}$ is uniform over rank-$t$ matrices whose rows are
from $\Omega_\ell$, and $A^{[\ell]}$ is independent of $Z^{[\ell]}$.  Clearly,
these sub-channels form a product discrete memoryless channel (DMC).  In
particular, the capacity of this product DMC is $C = \sum_{\ell = 1}^L C_\ell$.
\end{thm}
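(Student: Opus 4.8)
The plan is to establish the theorem in two stages: first, that each sub-channel has the claimed marginal distribution (i.e., $A^{[\ell]}$ uniform over full-rank matrices over $R_\ell$, $Z^{[\ell]}$ uniform over rank-$t$ matrices with rows from $\Omega_\ell$, and $A^{[\ell]}$ independent of $Z^{[\ell]}$); and second, that the collection $(A^{[1]}, Z^{[1]}, \ldots, A^{[L]}, Z^{[L]})$ is jointly independent across $\ell$. Once both facts are in hand, the sub-channels $Y^{[\ell]} = A^{[\ell]} X^{[\ell]} + Z^{[\ell]}$ are genuinely independent parallel channels (for independent inputs $X^{[\ell]}$), so the overall channel is a product DMC and $C = \sum_\ell C_\ell$ follows from the additivity of capacity over product channels.

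The key algebraic engine is the Chinese remainder theorem. Since $d_1 = u_1 p_1^{t_{1,1}} \cdots p_L^{t_{L,1}}$ with the $p_\ell$ distinct primes, the ring $R = T/\langle d_1 \rangle$ factors as $R \cong R_1 \times \cdots \times R_L$ where $R_\ell = T/\langle p_\ell^{t_{\ell,1}} \rangle$, and this ring isomorphism is compatible with the module structure: $\Omega \cong \Omega_1 \times \cdots \times \Omega_L$ (as already noted in the excerpt). I would first record the matrix-level consequence: for any $A \in R^{N \times n}$, the map $A \mapsto (A^{[1]}, \ldots, A^{[L]})$ is a bijection $R^{N \times n} \to \prod_\ell R_\ell^{N \times n}$ that respects matrix multiplication and addition entrywise, and—crucially—it respects the Smith normal form, since the Smith form over a PIR splits over the CRT factors. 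Hence $\shape A = \kappa$ (meaning $\shape A^{[\ell]} = \kappa$ for all $\ell$, per the definition just before the theorem) corresponds exactly to $\shape A^{[\ell]} = \kappa$ individually; in particular $A$ is full-rank over $R$ iff each $A^{[\ell]}$ is full-rank over $R_\ell$, and $Z$ has rank $t$ (rows from $\Omega$) iff each $Z^{[\ell]}$ has rank $t$ (rows from $\Omega_\ell$). So the bijection carries the set of full-rank matrices over $R$ onto $\prod_\ell \{\text{full-rank matrices over } R_\ell\}$, and likewise for rank-$t$ noise matrices with the prescribed row constraints.

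With this bijection established, the probabilistic part is immediate: a uniform distribution on a finite set pushes forward under a bijection to the uniform distribution on the image, and a uniform distribution on a Cartesian product is the product of the uniform marginals. Thus $A$ uniform over full-rank matrices over $R$ implies $(A^{[1]}, \ldots, A^{[L]})$ is a tuple of independent $A^{[\ell]}$, each uniform over full-rank matrices over $R_\ell$; similarly $Z$ uniform over rank-$t$ matrices (rows from $\Omega$) gives independent, uniformly-distributed $Z^{[\ell]}$ over rank-$t$ matrices with rows from $\Omega_\ell$. Combining with the hypothesis that $A$ and $Z$ are independent, the full vector $(A^{[1]}, Z^{[1]}, \ldots, A^{[L]}, Z^{[L]})$ has independent coordinates. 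Since $Y^{[\ell]}$ is a deterministic function of $(A^{[\ell]}, X^{[\ell]}, Z^{[\ell]})$ and the reduction $X \mapsto (X^{[\ell]})_\ell$ is itself a bijection on input alphabets, the channel $p_{Y|X}$ factors as $\prod_\ell p_{Y^{[\ell]} | X^{[\ell]}}$, which is the definition of a product DMC; the capacity identity $C = \sum_\ell C_\ell$ is then the standard fact that the capacity of a product of DMCs is the sum of the component capacities, achieved by independent (and individually capacity-achieving) inputs.

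I expect the main obstacle to be bookkeeping rather than conceptual depth: one must be careful that the entrywise CRT isomorphism on $R$ really does identify the \emph{row-constrained} module $\Omega$ with $\prod_\ell \Omega_\ell$ in a way compatible with the factorizations $d_j = u_j \prod_\ell p_\ell^{t_{\ell,j}}$, so that the divisibility constraint "$j$th component is a multiple of $d_1/d_j$" over $R$ transports to the constraint "$j$th component is a multiple of $p_\ell^{t_{\ell,1}-t_{\ell,j}}$" over $R_\ell$—this is exactly the decomposition $\Omega_\ell = R_\ell \times p_\ell^{t_{\ell,1}-t_{\ell,2}} R_\ell \times \cdots$ stated just before the theorem, so it can be cited. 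The only other point requiring a line of justification is that the Smith normal form is multiplicative under the CRT splitting; this follows because invertible matrices over $R$ correspond under the bijection to tuples of invertible matrices over the $R_\ell$ (a unit in $R$ is a tuple of units), so an equivalence $A = USV$ over $R$ is the same data as equivalences $A^{[\ell]} = U^{[\ell]} S^{[\ell]} V^{[\ell]}$ over each $R_\ell$, and the normalized (powers-of-generator) diagonal form is preserved coordinatewise.
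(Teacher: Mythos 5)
Your argument is correct and follows the same route the paper takes: identify the constraint sets (full-rank over $R$, rank-$t$ with rows from $\Omega$) with Cartesian products over $\ell$ via the CRT isomorphism, note that the uniform distribution on a product set factors as a product of uniforms, and combine with the hypothesis $A\perp Z$ to obtain joint independence of all components. The excursion through Smith normal forms splitting over CRT factors is a valid way to justify the product structure of the constraint sets, though the paper sidesteps it by having already \emph{defined} $\rank A=t$ over $R$ to mean $\rank A^{[\ell]}=t$ for every $\ell$.
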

\begin{proof}
Note that $A$ is full rank over $R$, if and only if each $A^{[\ell]}$ is
full rank over $R_\ell$.  Hence, the number of full-rank matrices in $R^{N
\times n}$ is equal to the product of the number of full-rank matrices
in $R_\ell^{N \times n}$ ($\ell = 1, \ldots, L$).  In particular,
it follows that when $A$ is uniform
over full-rank matrices, each $A^{[\ell]}$ is also uniform over full-rank
matrices and independent of each other.  Similarly, each $Z^{[\ell]}$ is
uniform over rank-$t$ matrices and independent of each other.  Since
$A^{[\ell]}$ and $Z^{[\ell]}$ are projections of $A$ and $Z$, respectively,
$A^{[\ell]}$ and $Z^{[\ell]}$ are independent.  Therefore, the sub-channels
$Y^{[\ell]} = A^{[\ell]} X^{[\ell]} + Z^{[\ell]}$ are independent of each other. In
particular, $C = \sum_{\ell = 1}^L C_\ell$.
\end{proof}

Theorem~\ref{thm:decomposition} says that
when $A$ and $Z$ follow certain distributions, the channel decomposition
incurs no loss of information.  Hence, in this case, it suffices to
study each sub-channel independently.

Next, we comment on the assumptions in Theorem~\ref{thm:decomposition}.
First, as we will soon see in later sections, these assumptions allow us to
derive clean capacity results and simple coding schemes, based on which
more general distributions can be studied (see Section~\ref{sec:extension}).

Second, we note that the full-rank assumption on $A$
and the rank-$t$ assumption on $Z$ are reasonable,
when the system size is large.
To see this, observe that
the portion of full-rank matrices in $R^{N \times n}$
is lower-bounded by
\[
1 - \sum_{\ell = 1}^L \frac{n}{|p_\ell|^{2(1 + N - n)}}.
\]
Clearly, this lower bound tends to $1$ as $n$ and $N$ grow.
For example, if we set $n = 100$, $N = 110$, and choose
$R = \ZZ_2[i] = \ZZ[i]/\langle (1+i)^2 \rangle$, then the lower bound is around $0.999976$.
Using the same argument, we can show that rank-$t$ matrices make up a significant portion
of all possible noise matrices  $Z = BE$
for large $t$, $m$, and $N$.

Third, we note that the uniformness assumptions
on $A$ and $Z$ provide us with
``worst-case'' scenarios, which will be elaborated in
Section~\ref{sec:extension}.

Without loss of generality, we will focus on the case $L = 1$, and
so $R$ is a finite chain ring for the remainder of the paper.
Suppose that $R$ be a $(q, \len)$ chain ring. Let $\mu$ be the shape of $\Omega$. Then, we can write
$X \in R^{n \times \mu}$ and $Y, Z \in R^{N \times
\mu}$.  That is, we may think of the rows of $X$, $Y$ and
$Z$ as packets over the ambient space $R^\mu$. (To support this
ambient space, the length of a packet, denoted by $m$, is equal to
$\mu_\len$.)

In many situations, it is useful to understand the capacity scaling as
the system size and packet length grow.  For that reason, we introduce a
notion of asymptotic capacity
\[
  \bar{C} = \lim_{m \to \infty} \frac{1}{n|\mu|} C  = \lim_{m \to \infty} \frac{1}{\bar{n}|\bar{\mu}|m^2} C,
\]
where we assume that $\bar{n} = n/m$ and $\bar{\mu} =
(\bar{\mu}_1,\ldots,\bar{\mu}_\len) = \mu/m$ are fixed.  Here,
logarithms are taken to the base $q$, so that the capacity $C$ is
given in $q$-ary units per channel use and that $\bar{C}$ is
normalized such that $\bar{C} = 1$ if the channel is noiseless (i.e., $A
= I$ and $Z = 0$).

\section{The Multiplicative Matrix Channel}\label{sec:mmc}

As a first special case, following \cite{SKK10}, we consider the
\emph{multiplicative matrix channel (MMC)} defined by the law
\[
  Y = AX,
\]
where $A \in R^{N \times n}$ is uniform over all
full-column-rank matrices and independent from $X \in R^{n \times \mu}$.
This model is a special case of the channel model
(\ref{eq:channel3}) with $Z = 0$.

\subsection{Capacity}

The capacity of the  MMC can be obtained by investigating the
channel transition probabilities.
Since full-column-rank matrices preserve the row span, we have
$\row X = \row Y$. It follows that the channel transition probability $p_{Y|X}(Y | X) > 0$ if and only if $\row X = \row Y$.
Moreover, we have the following lemma:
\begin{lem}\label{lem:channel-transition}
The channel transition probabilities satisfy the following two
properties.
\begin{enumerate}
\item $p_{Y|X}(Y_1 | X) = p_{Y|X}(Y_2 | X) > 0$, if $\row X =
\row Y_1 = \row Y_2$.
\item $p_{Y|X}(Y | X_1) = p_{Y|X}(Y | X_2) > 0$, if $\row X_1 =
\row X_2 = \row Y$.
\end{enumerate}
\end{lem}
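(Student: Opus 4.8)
The plan is to express the channel transition probability as a ratio of cardinalities: since $A$ is uniform over the full-column-rank matrices in $R^{N\times n}$ and independent of $X$, we have $p_{Y|X}(Y\mid X) = \Pr[AX = Y]$, which equals the number of full-column-rank $A$ with $AX = Y$ divided by the (fixed) total number of full-column-rank matrices in $R^{N\times n}$. To prove each of the two parts it then suffices to exhibit a cardinality-preserving bijection of the set of full-column-rank matrices in $R^{N\times n}$ that carries solutions of one equation to solutions of the other, together with a verification that at least one solution exists.

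The key structural fact I would isolate first is a converse to the remark preceding the lemma: \emph{two matrices of the same size over a finite chain ring with the same row span are left-equivalent}. Indeed, each such matrix is left-equivalent to its unique row canonical form (Proposition~\ref{prop:uniqueness}), and, by the correspondence of Lemma~\ref{lem:correspondence}, the row canonical form depends only on the row span; hence two matrices sharing a row span share a row canonical form, and are therefore left-equivalent to each other. Concretely: if $\row Y_1 = \row Y_2$ with $Y_1,Y_2\in R^{N\times m}$ then $Y_2 = PY_1$ for some $P\in\GL_N(R)$, and if $\row X_1 = \row X_2$ with $X_1,X_2\in R^{n\times m}$ then $X_2 = QX_1$ for some $Q\in\GL_n(R)$.

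Granting this, part~1 is immediate: fixing $X$ with $\row X = \row Y_1 = \row Y_2$ and choosing $P\in\GL_N(R)$ with $PY_1=Y_2$, the map $A\mapsto PA$ is a bijection of $R^{N\times n}$ that preserves full column rank (left multiplication by an invertible matrix preserves shape, hence rank, by Proposition~\ref{prop:shape-properties}(2)) and satisfies $AX=Y_1 \iff (PA)X=Y_2$, so the two solution sets have equal size. Part~2 is symmetric: fixing $Y$ with $\row X_1=\row X_2=\row Y$ and choosing $Q\in\GL_n(R)$ with $QX_1=X_2$, the map $A\mapsto AQ^{-1}$ preserves full column rank (right multiplication by an invertible matrix preserves shape) and sends solutions of $AX_1=Y$ to solutions of $AX_2=Y$. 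For positivity in both cases I would show that $\row X=\row Y$ forces the existence of a full-column-rank $A$ with $AX=Y$: writing $\RCF(X)=U_XX$ and $\RCF(Y)=U_YY$ with $U_X\in\GL_n(R)$ and $U_Y\in\GL_N(R)$, the forms $\RCF(X)$ and $\RCF(Y)$ have the same nonzero rows — being the canonical generating set of the common row span — and differ only in the number of trailing zero rows, so $A := U_Y^{-1}\mat{I_n \\ 0}\,U_X$, which is full column rank because $N\ge n$, satisfies $AX=U_Y^{-1}\mat{I_n\\0}\RCF(X)=U_Y^{-1}\RCF(Y)=Y$.

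I expect the only real obstacle to be the converse fact (same row span $\Rightarrow$ left-equivalent); once that is in hand, everything reduces to bookkeeping with invertible transformations. The points to handle with care are that $\RCF(X)$ and $\RCF(Y)$ live in different row dimensions and agree only up to trailing zero rows, and that the hypothesis $N\ge n$ — implicit in ``full-column-rank $A\in R^{N\times n}$'' — is exactly what makes the block matrix $\mat{I_n \\ 0}$ full column rank and the construction above legitimate.
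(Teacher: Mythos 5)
Your proof is correct and follows essentially the same route as the paper's: write the transition probability as the ratio of $|\{A \in \calT_n(R^{N\times n}): AX=Y\}|$ to a fixed denominator, exhibit a bijection $A\mapsto PA$ (resp.\ $A\mapsto AQ^{-1}$) between the two solution sets, and note non-emptiness. The paper simply asserts without proof that a common row span yields an invertible $P$ with $PY_1=Y_2$ and that the solution set is non-empty when $\row X=\row Y$; you supply these justifications explicitly (via uniqueness of the row canonical form and the correspondence in Lemma~\ref{lem:correspondence}, and via the explicit construction $A=U_Y^{-1}\mat{I_n\\0}U_X$), which is sound and fills in detail the paper leaves implicit.
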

\begin{proof}
Since $\row Y_1 = \row Y_2$, there exists some invertible matrix $P$
such that $P Y_1 = Y_2$.
Let $\calA_j = \{ A \in \calT_n(R^{N \times n}) \mid A X = Y_j  \}$ be the set of transfer matrices
such that $A X = Y_j$. Then $\calA_1$ and $\calA_2$ have the
same size (i.e., $|\calA_1| = |\calA_2|$),
because $A \in \calA_1$ if and only if $PA \in \calA_2$.
Hence, we have $p_{Y|X}(Y_1 | X) = p_{Y|X}(Y_2 | X)$.
In particular, when $\row X = \row Y_1$, the set $\calA_1$ is non-empty, and so $p_{Y|X}(Y_1 | X) > 0$.
This proves Part 1). Similarly, we can prove Part 2).
\end{proof}

Lemma~\ref{lem:channel-transition} characterizes the structure of the
channel transition probabilities, based on which one can show that
the capacity only depends on the number of all possible submodules
generated by $X$.

\begin{thm}\label{thm:cap-mmc}
The capacity of the MMC, in $q$-ary symbols per channel use,
is given by
  \[
    C_\text{MMC} = \log_q \sum_{\lambda \preceq n, \mu}
    \submodule{\mu}{\lambda}.
  \]
A capacity-achieving code $\calC \subseteq R^{n \times \mu}$
consists of all possible row canonical forms in $R^{n \times
\mu}$.
\end{thm}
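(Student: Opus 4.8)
The plan is to compute $I(X;Y)$ for a judiciously chosen input distribution and show it matches the claimed upper bound on capacity. First I would establish the upper bound: since full-column-rank matrices preserve row span, $\row Y = \row X$ is a deterministic function of $X$; moreover, by Part~1 of Lemma~\ref{lem:channel-transition}, conditioned on $X$ the output $Y$ is uniform over the set of matrices with $\row Y = \row X$, and this conditional distribution depends on $X$ only through $\row X$. Hence the channel factors through the ``row-span'' statistic. By the data-processing structure, $I(X;Y) \le H(\row X)$, and $\row X$ ranges over the submodules of $R^\mu$ that are realizable as the row span of some $X \in R^{n\times\mu}$, i.e.\ exactly those submodules whose shape $\lambda$ satisfies $\lambda \preceq n,\mu$ (Lemma~\ref{lem:correspondence} via the row-canonical-form correspondence). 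Since $H(\row X) \le \log_q(\text{number of such submodules}) = \log_q \sum_{\lambda \preceq n,\mu} \submodule{\mu}{\lambda}$, this gives $C_\text{MMC} \le \log_q \sum_{\lambda \preceq n,\mu}\submodule{\mu}{\lambda}$.

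Next I would show this bound is achieved by the code $\calC$ consisting of all row canonical forms in $R^{n\times\mu}$, equipped with the uniform input distribution. The key point is the bijection: by Lemma~\ref{lem:correspondence}, each submodule $V \le R^\mu$ with $\shape V \preceq n$ is the row span of exactly one element of $\calC$, so $|\calC| = \sum_{\lambda\preceq n,\mu}\submodule{\mu}{\lambda}$ and the map $X \mapsto \row X$ restricted to $\calC$ is a bijection onto the set of realizable row spans. Therefore $H(\row X) = \log_q|\calC|$ when $X$ is uniform on $\calC$. It remains to verify that the decoder can recover $X \in \calC$ from $Y$ with zero error: given $Y$, compute $\row Y$; since $\row X = \row Y$ and $X$ is the unique codeword with that row span (namely $X = \RCF(Y)$, using Proposition~\ref{prop:uniqueness} and the fact that left-equivalent matrices share a row span), decoding is exact. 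Thus $I(X;Y) = H(X) - H(X|Y) = \log_q|\calC| - 0 = \log_q\sum_{\lambda\preceq n,\mu}\submodule{\mu}{\lambda}$, meeting the upper bound.

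A secondary point worth spelling out is \emph{why} the realizable row spans are precisely the submodules of shape $\lambda$ with $\lambda \preceq n,\mu$: any $X \in R^{n\times\mu}$ has $\row X \le R^\mu$ with $\shape X \preceq \mu$ (submodule shape is a subshape) and $\shape X \preceq n$ (since $\row X$ is generated by $n$ rows, equivalently its row canonical form has at most $n$ nonzero rows); conversely any such submodule arises, since it equals the row span of its own row canonical form padded with zero rows to have $n$ rows, and that form lies in $R^{n\times\mu}$ because its rows lie in the submodule $\subseteq R^\mu$. This is exactly the content already used to justify the constraint ``$\kappa \preceq n,\mu$'' in Section~\ref{sec:construction}, so it can be invoked rather than reproved.

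\textbf{Main obstacle.} The only real subtlety is the clean factoring of the channel through $\row X$ and the resulting converse: one must be careful that the conditional law $p_{Y|X}$ genuinely depends only on $\row X$ (both the support, by the row-span-preservation remark, and the values, by Lemma~\ref{lem:channel-transition}), so that $Y - \row X - X$ forms a Markov chain in the reverse direction and $I(X;Y) \le I(\row X; Y) \le H(\row X)$. Everything else — the counting via $\submodule{\mu}{\lambda}$, the bijection $\calC \leftrightarrow$ realizable row spans, and zero-error decodability via $\RCF$ — follows directly from results already established (Lemma~\ref{lem:correspondence}, Proposition~\ref{prop:uniqueness}, Equation~\eqref{eqn:ModuleCount}), so I expect the write-up to be short.
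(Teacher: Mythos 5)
Your proof is correct and follows exactly the route the paper indicates but does not write out: the paper states Lemma~\ref{lem:channel-transition} and then asserts that ``the capacity only depends on the number of all possible submodules generated by $X$,'' which is precisely your converse via the Markov chain $X \to \row X \to Y$, paired with your achievability argument via the row-canonical-form bijection of Lemma~\ref{lem:correspondence} and zero-error decoding by $\RCF(Y)$. No gaps; this is the intended proof, merely spelled out.
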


Theorem~\ref{thm:cap-mmc} suggests that information
should be encoded in the choice of submodules. That is,
``transmission via submodules'' is optimal here. This naturally
generalizes the ``transmission via subspaces'' strategy in
\cite{KK07}.


\begin{cor}\label{cor:cap-mmc}
The capacity $C_\text{MMC}$ is bounded by
  \begin{equation}\label{eq:mmc-bound}
    \sum_{i = 1}^\len \kappa_i (\mu_i - \kappa_i) \le
    C_\text{MMC} \le
    \sum_{i = 1}^\len \kappa_i (\mu_i - \kappa_i)  + \log_q 4^\len \binom{n + \len}{\len}
  \end{equation}
  where $\kappa_i = \min\{ n, \lfloor \mu_i/2 \rfloor \}$ for all $i$.
\end{cor}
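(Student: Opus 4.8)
The plan is to derive both bounds directly from the exact formula $C_\text{MMC} = \log_q \sum_{\lambda \preceq n,\mu} \submodule{\mu}{\lambda}$ of Theorem~\ref{thm:cap-mmc}, using the two-sided estimate \eqref{eq:num-module} for $\submodule{\mu}{\lambda}$ and the fact that the number of $\len$-shapes all of whose components are at most $n$ equals $\binom{n+\len}{\len}$. Throughout I would fix $\kappa = (\kappa_1,\ldots,\kappa_\len)$ with $\kappa_i = \min\{n,\lfloor \mu_i/2\rfloor\}$; since $\mu$ is non-decreasing, so is $i\mapsto\lfloor\mu_i/2\rfloor$, hence $\kappa$ is non-decreasing, and $\kappa_i \le n$, $\kappa_i \le \mu_i$, so $\kappa$ is a genuine $\len$-shape with $\kappa \preceq n,\mu$.

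For the \emph{lower bound} I would simply retain the single term of the sum indexed by $\lambda = \kappa$, which is legitimate because $\kappa \preceq n,\mu$, obtaining $C_\text{MMC} \ge \log_q \submodule{\mu}{\kappa}$, and then apply the left inequality of \eqref{eq:num-module} to conclude $C_\text{MMC} \ge \sum_{i=1}^\len \kappa_i(\mu_i-\kappa_i)$.

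For the \emph{upper bound} I would first observe that $\kappa_i$ maximizes $x\mapsto x(\mu_i - x)$ over the integers $0\le x \le \min\{n,\mu_i\}$: this map is a downward parabola with vertex at $\mu_i/2$, so its integer maximizer on that interval is $\lfloor\mu_i/2\rfloor$ when $\mu_i/2 \le n$ and $n$ otherwise, and in either case this value equals $\kappa_i$. Summing over $i$ gives $\sum_i \lambda_i(\mu_i-\lambda_i) \le \sum_i \kappa_i(\mu_i-\kappa_i)$ for every $\lambda \preceq n,\mu$. Combining this with the right inequality of \eqref{eq:num-module} and with the bound $\binom{n+\len}{\len}$ on the number of shapes $\lambda\preceq n$ (a superset of the index set $\lambda\preceq n,\mu$), I get
\[
\sum_{\lambda\preceq n,\mu}\submodule{\mu}{\lambda} \;\le\; \binom{n+\len}{\len}\, 4^\len\, q^{\sum_{i=1}^\len \kappa_i(\mu_i-\kappa_i)},
\]
and taking $\log_q$ of both sides yields the stated upper bound.

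The only step that is not pure bookkeeping is the componentwise optimality of the choice $\kappa_i=\min\{n,\lfloor\mu_i/2\rfloor\}$ together with the verification that these components assemble into an admissible subshape of both $n$ and $\mu$; I do not expect any real obstacle, since the optimization decouples across coordinates and the monotonicity of $\mu$ is exactly what makes $\kappa$ a shape.
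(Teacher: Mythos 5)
Your proof is correct and follows essentially the same route as the paper's: for the lower bound, retain the single summand $\lambda=\kappa$ and apply the left side of \eqref{eq:num-module} (equivalently \eqref{eq:log-module}); for the upper bound, apply the right side of \eqref{eq:num-module}, then bound each exponent by its componentwise maximum $\sum_i\kappa_i(\mu_i-\kappa_i)$, then bound the number of summands by $\binom{n+\len}{\len}$. The only difference is that you explicitly justify two facts the paper leaves implicit — that $\kappa$ is a genuine subshape of both $n$ and $\mu$, and that each $\kappa_i$ maximizes the separable quadratic $x\mapsto x(\mu_i-x)$ on $[0,\min\{n,\mu_i\}]$ — which is a reasonable bit of added rigor but not a different argument.
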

\begin{proof}
First, since $\kappa = (\kappa_1, \ldots, \kappa_s) \preceq n,
\mu$, we have
\begin{align*}
C_\text{MMC} &= \log_q \sum_{\lambda \preceq n, \mu} \submodule{\mu}{\lambda} \\
&\ge \log_q \submodule{\mu}{\kappa} \\
&\ge  \sum_{i = 1}^\len \kappa_i (\mu_i - \kappa_i),
\end{align*}
where the second inequality follows from
(\ref{eq:log-module}).

Second, we have
\begin{align*}
C_\text{MMC} &= \log_q \sum_{\lambda \preceq n, \mu} \submodule{\mu}{\lambda} \\
  &\le \log_q \sum_{\lambda \preceq n,\mu} 4^\len q^{\sum_i \lambda_i(\mu_i - \lambda_i)} \\
  &\le \log_q \sum_{\lambda \preceq n,\mu} 4^\len q^{\sum_i \kappa_i(\mu_i - \kappa_i)} \\
  &\le \log_q  4^\len \binom{ n + \len}{\len} q^{\sum_i \kappa_i(\mu_i - \kappa_i)} \\
  &= \sum_{i = 1}^\len \kappa_i (\mu_i - \kappa_i)  +  \log_q 4^{\len} \binom{n + \len }{\len}.
\end{align*}
where the first inequality follows from (\ref{eq:log-module}),
the second inequality follows from the fact that $\kappa$
maximizes the quantity $\sum_i \lambda_i(\mu_i - \lambda_i)$
subject to the constraint $\lambda \preceq n, \mu$, and the
third inequality follows from the fact that the number of
shapes satisfying $\lambda \preceq n, \mu$ is upper-bounded by
$\binom{ n + \len}{\len}$.
\end{proof}

We next turn to the asymptotic capacity of the MMC.

\begin{thm}\label{thm:asymp-mmc}
The asymptotic capacity $\bar{C}_\text{MMC}$ is given by
\begin{equation}\label{eq:mmc-asymp}
   \bar{C}_\text{MMC} = \frac{\sum_{i = 1}^{\len}
    \bar{\kappa}_i (\bar{\mu}_i - \bar{\kappa}_i) }
    {\bar{n} |\bar{\mu}|},
\end{equation}
where $\bar{\kappa} = \kappa/m$ with $\kappa_i = \min \{ n,
\lfloor \mu_i / 2 \rfloor \}$ for all $i$.
\end{thm}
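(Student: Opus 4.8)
The plan is to derive Theorem~\ref{thm:asymp-mmc} directly from the non-asymptotic two-sided bound of Corollary~\ref{cor:cap-mmc} by a squeeze argument. Recall that, by definition, $\bar{C}_\text{MMC} = \lim_{m \to \infty} \frac{1}{n|\mu|} C_\text{MMC}$, where $\bar{n} = n/m$ and $\bar{\mu} = \mu/m$ are held fixed as $m$ grows, and where $\len$, being the chain length of the fixed ring $R$, is a constant independent of $m$. So the entire task is to divide the bound \eqref{eq:mmc-bound} by $n|\mu|$ and let $m\to\infty$.

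The key observation is that the leading term of \eqref{eq:mmc-bound} matches the claimed expression \emph{exactly}, not merely asymptotically. Since $\bar{\kappa} = \kappa/m$ and $\bar{\mu} = \mu/m$, we have $\kappa_i = m\bar{\kappa}_i$ and $\mu_i = m\bar{\mu}_i$ for every $i$, hence $\sum_{i=1}^\len \kappa_i(\mu_i - \kappa_i) = m^2 \sum_{i=1}^\len \bar{\kappa}_i(\bar{\mu}_i - \bar{\kappa}_i)$; likewise $n|\mu| = (\bar{n}m)(|\bar{\mu}|m) = \bar{n}|\bar{\mu}|\,m^2$. Dividing \eqref{eq:mmc-bound} through by $n|\mu|$ therefore yields
\[
\frac{\sum_{i=1}^\len \bar{\kappa}_i(\bar{\mu}_i - \bar{\kappa}_i)}{\bar{n}|\bar{\mu}|}
\;\le\; \frac{C_\text{MMC}}{n|\mu|}
\;\le\; \frac{\sum_{i=1}^\len \bar{\kappa}_i(\bar{\mu}_i - \bar{\kappa}_i)}{\bar{n}|\bar{\mu}|}
\;+\; \frac{\log_q\!\big(4^\len \binom{n+\len}{\len}\big)}{\bar{n}|\bar{\mu}|\,m^2}.
\]
Next I would show that the additive correction term vanishes: because $\len$ is constant, $4^\len$ contributes only a constant to the logarithm, and $\binom{n+\len}{\len}\le (n+\len)^\len$ is polynomial in $n=\bar{n}m$, so the numerator is $\calO(\log m)$, while the denominator is $\Theta(m^2)$; hence the term is $\calO\!\big((\log m)/m^2\big)\to 0$. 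Finally, $\bar{\kappa}_i = \min\{\bar{n},\lfloor m\bar{\mu}_i/2\rfloor/m\}\to\min\{\bar{n},\bar{\mu}_i/2\}$ as $m\to\infty$, so the leading term converges, and by the squeeze theorem the limit defining $\bar{C}_\text{MMC}$ exists and equals $\dfrac{\sum_{i=1}^\len \bar{\kappa}_i(\bar{\mu}_i - \bar{\kappa}_i)}{\bar{n}|\bar{\mu}|}$, as claimed.

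I do not expect any genuine obstacle here: given Corollary~\ref{cor:cap-mmc}, this is a routine asymptotic estimate. The only points needing a little care are (i) keeping the powers of $m$ straight in the normalization, so that the leading term is reproduced exactly rather than approximately, and (ii) noting explicitly that the chain length $\len$, and therefore the combinatorial slack factor $4^\len\binom{n+\len}{\len}$, grows at most polynomially in $m$, so that its logarithm is negligible against $n|\mu| = \Theta(m^2)$.
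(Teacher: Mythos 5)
Your proposal is correct and follows essentially the same route as the paper: the paper's own proof is the one-line observation that the result follows from Corollary~\ref{cor:cap-mmc} together with the fact that $\frac{1}{m^2}\log_q 4^{\len}\binom{n+\len}{\len}\to 0$ as $m\to\infty$, which is precisely the squeeze argument you spell out (with a welcome bit of extra care about the $m$-dependence of $\bar{\kappa}_i$ through the floor in $\kappa_i=\min\{n,\lfloor\mu_i/2\rfloor\}$).
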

\begin{proof}
This follows from Corollary~\ref{cor:cap-mmc} and the fact
that $\frac{1}{m^2} \log_q 4^{\len} \binom{n + \len }{\len}
\to 0$, as $m \to \infty$.
\end{proof}

Theorem~\ref{thm:asymp-mmc} implies that the shape $\kappa$
given by $\kappa_i = \min \{ n, \lfloor \mu_i / 2 \rfloor \}$
($1 \leq i \le \len$) is ``typical'' among the shapes of all
possible row canonical forms in $R^{n \times \mu}$.  In other
words, the row canonical forms of shape $\kappa$ make up a
significant portion of all possible row canonical forms.
Hence, the transmitter may encode information in the choice of
row canonical forms of shape $\kappa$ instead of all row
canonical forms.

\subsection{A Simple Coding Scheme}

In this section, we present a simple coding scheme that
achieves the asymptotic capacity in
Theorem~\ref{thm:asymp-mmc}.  The key idea is to make the
codebook the set of all principal row canonical forms for
$\calT_\kappa(R^{n \times \mu})$.  In other words, we employ
two ``reductions'' in the code construction.  First, we move
from all row canonical forms in $R^{n \times \mu}$ to all row
canonical forms in $\calT_\kappa(R^{n \times \mu})$, as
suggested by Theorem~\ref{thm:asymp-mmc}.  Then, we move from
all row canonical forms in $\calT_\kappa(R^{n \times \mu})$ to
all principal row canonical forms in $\calT_\kappa(R^{n \times
\mu})$.  With these two reductions, our coding scheme not only
achieves the asymptotic capacity, but also admits fast
encoding and decoding.

\subsubsection{Encoding}

The input matrix $X$ is chosen from the set of principal row
canonical forms for $\calT_{\kappa}(R^{n \times \mu})$ by
using the construction presented in
Section~\ref{sec:construction}.  Clearly, the encoding rate of
the scheme is $R_\text{MMC} = \sum_{i = 1}^{\len}
\kappa_i(\mu_i - \kappa_i)$.

\subsubsection{Decoding}

Upon receiving $Y = AX$, the decoder simply computes the row
canonical form of $Y$.  The decoding is always correct by the
uniqueness of the row canonical form. By comparing the
encoding rate with the asymptotic capacity, we have the
following theorem.
\begin{thm}
The coding scheme described above achieves the asymptotic
capacity (\ref{eq:mmc-asymp}).
\end{thm}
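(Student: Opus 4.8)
The statement is essentially a bookkeeping check, since reliability of the scheme has already been argued: the decoder outputs $\RCF(Y)$, and because $A$ is full column rank, $\row Y=\row AX=\row X$, so by Lemma~\ref{lem:correspondence}---equivalently, the uniqueness of the row canonical form (Proposition~\ref{prop:uniqueness})---the matrices $\RCF(Y)$ and the transmitted principal row canonical form $X$ share the same nonzero rows, the remaining rows of $\RCF(Y)$ being zero; as $X$ has exactly $\kappa_\len\le n$ nonzero rows, the decoder recovers $X$ as $\RCF(Y)[1{:}n,{:}]$, so the error probability is $0$. The plan is therefore to compare the encoding rate with the asymptotic capacity of Theorem~\ref{thm:asymp-mmc}.

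First I would record the size of the codebook. By Proposition~\ref{prop:principal}, the encoder is a bijection from the set of $q$-ary strings of length $\sum_{i=1}^{\len}\kappa_i(\mu_i-\kappa_i)$ onto the codebook $\calC$ of principal row canonical forms in $\calT_\kappa(R^{n\times\mu})$, so $\log_q|\calC|=R_\text{MMC}=\sum_{i=1}^{\len}\kappa_i(\mu_i-\kappa_i)$. Since decoding is error-free, $R_\text{MMC}$ reliable $q$-ary symbols are conveyed per use of the matrix channel, and the normalized rate of the scheme is $R_\text{MMC}/(n|\mu|)$.

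Then I would match this against \eqref{eq:mmc-asymp}. Writing $\bar\kappa_i=\kappa_i/m$, $\bar\mu_i=\mu_i/m$, $\bar n=n/m$, with the same $\kappa_i=\min\{n,\lfloor\mu_i/2\rfloor\}$ used in Theorem~\ref{thm:asymp-mmc}, we have $\kappa_i(\mu_i-\kappa_i)=m^2\bar\kappa_i(\bar\mu_i-\bar\kappa_i)$ and $n|\mu|=m^2\bar n|\bar\mu|$, hence
\[
\frac{R_\text{MMC}}{n|\mu|}=\frac{\sum_{i=1}^{\len}\bar\kappa_i(\bar\mu_i-\bar\kappa_i)}{\bar n|\bar\mu|}=\bar C_\text{MMC}.
\]
Thus the scheme's normalized rate equals $\bar C_\text{MMC}$ for every $m$, so the limit as $m\to\infty$ is immediate and the scheme attains the asymptotic capacity identified in Theorem~\ref{thm:asymp-mmc}. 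I do not expect any genuine difficulty here; the only points needing care are that the codebook used for encoding and the shape appearing in $\bar C_\text{MMC}$ refer to the same $\kappa$, and that ``rate'' is measured as $\log_q$ of the codebook size per matrix transmission, both of which hold by construction.
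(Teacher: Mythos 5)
Your proposal is correct and follows essentially the same route as the paper: note the decoder is error-free because $\row Y = \row X$ (via full column rank of $A$) and row canonical forms are unique, record that the encoding rate is $R_\text{MMC}=\sum_i\kappa_i(\mu_i-\kappa_i)$, and observe that the normalized rate matches the expression in Theorem~\ref{thm:asymp-mmc}. The only nit is that $\bar\kappa_i=\min\{n,\lfloor\mu_i/2\rfloor\}/m$ depends on $m$, so the normalized rate equals the $m$-dependent expression rather than the limit itself; but the expression converges as $m\to\infty$, so the conclusion is unaffected.
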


\section{The Additive Matrix Channel}\label{sec:amc}

In this section, we consider the \emph{additive matrix channel
(AMC)} defined by the law
\[
  Y = X + Z,
\]
where $Z$ is uniform over $\calT_\tau(R^{n \times \mu})$ and
independent from $X$.  This model is a special case of the
channel model (\ref{eq:channel3}) with $A = I$.

\subsection{Capacity}
\begin{thm}\label{thm:cap-amc}
The capacity of the AMC, in $q$-ary symbols per channel use,
is given by
\[
    C_\text{AMC} = \log_q|R^{n \times \mu}| - \log_q |\calT_\tau(R^{n \times \mu})|,
\]
achieved by the uniform input distribution.
\end{thm}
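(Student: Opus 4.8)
The plan is to treat this as the standard additive-noise channel over a finite abelian group, namely $G = R^{n\times\mu}$ under matrix addition. First I would record the structural facts that make the group argument go through: since $R^\mu$ is an $R$-module it is closed under addition, so $R^{n\times\mu}$ is a finite abelian group; the noise $Z$ takes values in $\calT_\tau(R^{n\times\mu}) \subseteq R^{n\times\mu}$, so $Y = X + Z$ again lies in $R^{n\times\mu}$; and for any fixed $X = x \in R^{n\times\mu}$ the map $z \mapsto x + z$ is a bijection of $G$, whence the conditional output is just a translate of $Z$.

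For the converse, take an arbitrary input distribution $p_X$ and write $I(X;Y) = H(Y) - H(Y\mid X)$. Because translation by $x$ is a bijection, $H(Y \mid X = x) = H(Z)$ for every $x$, hence $H(Y \mid X) = H(Z)$; and since $Z$ is uniform over $\calT_\tau(R^{n\times\mu})$ we have $H(Z) = \log_q |\calT_\tau(R^{n\times\mu})|$. Since $Y$ is supported on $R^{n\times\mu}$, $H(Y) \le \log_q |R^{n\times\mu}|$. Combining these and maximizing over $p_X$ gives $C_\text{AMC} \le \log_q|R^{n\times\mu}| - \log_q|\calT_\tau(R^{n\times\mu})|$.

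For achievability, choose $X$ uniform over $R^{n\times\mu}$, independent of $Z$. Then for each $y$, $\Pr[Y = y] = \sum_z \Pr[Z = z]\,\Pr[X = y - z] = |R^{n\times\mu}|^{-1}$, so $Y$ is uniform and $H(Y) = \log_q|R^{n\times\mu}|$. With $H(Y\mid X) = H(Z) = \log_q|\calT_\tau(R^{n\times\mu})|$ as before, this input attains $I(X;Y) = \log_q|R^{n\times\mu}| - \log_q|\calT_\tau(R^{n\times\mu})|$, matching the converse; hence the uniform input distribution is capacity-achieving.

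I do not expect a genuine obstacle here; the only points that merit a word of care are (i) verifying that $R^{n\times\mu}$ is genuinely closed under addition, so that the output stays in the same alphabet and the finite-group structure is available, and (ii) the implicit nondegeneracy hypothesis $\tau \preceq n,\mu$, which guarantees $\calT_\tau(R^{n\times\mu})$ is nonempty so that $Z$ is well defined. No explicit evaluation of $|\calT_\tau(R^{n\times\mu})|$ is needed for the formula, though Theorem~\ref{thm:MatrixCount} supplies a closed form if desired.
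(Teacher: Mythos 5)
Your proof is correct and uses essentially the same idea as the paper: the paper simply invokes the fact that the AMC is a symmetric DMC (so uniform input is optimal and $Y$ is then uniform), while you unpack exactly that fact by directly verifying $H(Y\mid X)=H(Z)$ via translation-invariance, bounding $H(Y)\le\log_q|R^{n\times\mu}|$, and checking that a uniform $X$ yields a uniform $Y$. The two arguments are the same in substance; yours just spells out the standard additive-channel reasoning that the paper cites implicitly.
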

\begin{proof}
The AMC is an example of a symmetric discrete memoryless
channel, whose capacity is achieved by the uniform input
distribution.  Note that when $X$ is uniform over $R^{n \times
\mu}$, so is $Y$.  Thus, we have
\[
    C_\text{AMC} = H(Y) - H(Z) =  \log_q|R^{n \times \mu}| - \log_q |\calT_\tau(R^{n \times \mu})|.
\]
\end{proof}
\begin{cor}\label{cor:cap-amc}
The capacity $C_\text{AMC}$ is bounded by
\ifCLASSOPTIONonecolumn
\[
        \sum_{i = 1}^\len (n - \tau_i)(\mu_i - \tau_i) - \log_q 4^\len \prod_{i = 0}^{\tau_\len - 1}(1 - q^{i - n})
        < C_\text{AMC} <
        \sum_{i = 1}^\len (n - \tau_i)(\mu_i - \tau_i) - \log_q \prod_{i = 0}^{\tau_\len - 1}(1 - q^{i - n}).
\]
\else
\begin{multline}
        \sum_{i = 1}^\len (n - \tau_i)(\mu_i - \tau_i) - \log_q 4^\len \prod_{i = 0}^{\tau_\len - 1}(1 - q^{i - n}) \\
        < C_\text{AMC} < \\
        \sum_{i = 1}^\len (n - \tau_i)(\mu_i - \tau_i) - \log_q \prod_{i = 0}^{\tau_\len - 1}(1 - q^{i - n}).
\end{multline}
\fi
\end{cor}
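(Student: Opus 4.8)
The plan is to read off both bounds directly from Theorem~\ref{thm:cap-amc} once the two quantities appearing there have been estimated. By Theorem~\ref{thm:cap-amc}, $C_\text{AMC} = \log_q |R^{n \times \mu}| - \log_q |\calT_\tau(R^{n \times \mu})|$; the first term is exactly $n|\mu|$ by \eqref{eq:log-ambient}, and the second term is sandwiched by the two-sided estimate \eqref{eq:log-matrix} applied with $\kappa = \tau$. So nothing new is required: the corollary is just a matter of subtracting \eqref{eq:log-matrix} from $n|\mu|$ and tidying up the result.

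Concretely, I would substitute the \emph{lower} bound of \eqref{eq:log-matrix} for $\log_q |\calT_\tau(R^{n \times \mu})|$ to obtain the \emph{upper} bound on $C_\text{AMC}$, and the upper bound of \eqref{eq:log-matrix} to obtain the lower bound on $C_\text{AMC}$. In both cases the dominant part is $n|\mu| - \sum_{i=1}^{\len} \tau_i(n + \mu_i - \tau_i)$, which I would simplify termwise using the elementary identity $n\mu_i - \tau_i(n + \mu_i - \tau_i) = (n - \tau_i)(\mu_i - \tau_i)$ (legitimate since $\tau_i \le \min\{n,\mu_i\}$), so that summing over $i$ gives $\sum_{i=1}^{\len} (n - \tau_i)(\mu_i - \tau_i)$. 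The leftover correction is $-\log_q \prod_{i=0}^{\tau_\len - 1}(1 - q^{i-n})$ in the case of the upper bound, and the same expression together with an extra $-\len\log_q 4 = -\log_q 4^{\len}$ in the case of the lower bound; absorbing the latter into the logarithm yields $-\log_q\!\bigl(4^{\len}\prod_{i=0}^{\tau_\len - 1}(1 - q^{i-n})\bigr)$, which is precisely the form claimed.

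I do not expect any genuine obstacle here: the argument is essentially a one-line substitution followed by the regrouping above. The two points that deserve a little care are (i) that subtraction reverses the direction of an inequality, so the lower estimate for $\log_q |\calT_\tau(R^{n \times \mu})|$ produces the \emph{upper} bound on $C_\text{AMC}$ and conversely, and (ii) the strictness of the two inequalities, which I would justify by appealing to the strictness of the Gaussian-coefficient estimates underlying \eqref{eq:num-module} (together with $\len \ge 1$, hence $4^{\len} > 1$), so that the sandwich \eqref{eq:log-matrix} is not attained at the relevant ends.
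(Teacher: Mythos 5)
Your proposal reproduces exactly the paper's argument: the paper's own proof reads ``It follows immediately from Theorem~\ref{thm:cap-amc} and (\ref{eq:log-ambient}), (\ref{eq:log-matrix}),'' and your substitution (with $\kappa=\tau$), the sign flip under subtraction, and the termwise regrouping $n\mu_i - \tau_i(n+\mu_i-\tau_i) = (n-\tau_i)(\mu_i-\tau_i)$ are precisely what is needed to unpack that one-liner. The only point you add beyond the paper is an attempt to justify the strictness of the inequalities; the paper itself leaves this implicit, and your appeal to $4^{\len}>1$ does take care of the lower bound, so this is a harmless (arguably welcome) elaboration rather than a deviation.
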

\begin{proof}
It follows immediately from Theorem~\ref{thm:cap-amc}
and (\ref{eq:log-ambient}), (\ref{eq:log-matrix}).
\end{proof}

We next turn to the asymptotic behavior of the AMC.

\begin{thm}\label{thm:amc-asymp}
The asymptotic capacity $\bar{C}_\text{AMC}$ is given by
  \begin{equation}\label{eq:amc-asymp}
    \bar{C}_\text{AMC} = \frac{\sum_{i=1}^{\len} (\bar{n} - \bar{\tau}_i)(\bar{\mu}_i - \bar{\tau}_i)}
    {\bar{n} |\bar{\mu}|}.
  \end{equation}
\end{thm}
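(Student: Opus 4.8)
The plan is to read the asymptotic formula directly off the non-asymptotic bounds of Corollary~\ref{cor:cap-amc} by a squeeze argument: divide those bounds by $n|\mu| = \bar{n}\,|\bar{\mu}|\,m^2$ and let $m \to \infty$ with $\bar{n} = n/m$, $\bar{\mu} = \mu/m$ (and hence $\bar{\tau} = \tau/m$) held fixed. This parallels the proof of Theorem~\ref{thm:asymp-mmc}.

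First I would dispose of the dominant term. Since $n - \tau_i = (\bar{n} - \bar{\tau}_i)m$ and $\mu_i - \tau_i = (\bar{\mu}_i - \bar{\tau}_i)m$, one has, for every $m$,
\[
\frac{1}{n|\mu|}\sum_{i=1}^{\len}(n-\tau_i)(\mu_i-\tau_i) = \frac{\sum_{i=1}^{\len}(\bar{n} - \bar{\tau}_i)(\bar{\mu}_i - \bar{\tau}_i)}{\bar{n}\,|\bar{\mu}|},
\]
so this part already equals the claimed limit exactly, with no passage to a limit required.

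Next I would show that both correction terms appearing in Corollary~\ref{cor:cap-amc} are $O(1)$ in $m$, hence vanish after normalization. The term $\log_q 4^{\len} = \len \log_q 4$ is a fixed constant. For $\log_q \prod_{i=0}^{\tau_\len - 1}(1 - q^{i-n})$, note that $\calT_\tau(R^{n\times\mu}) \neq \emptyset$ forces $\tau \preceq n$, so $\tau_\len \le n$ and every exponent satisfies $i - n \le -1$ for $0 \le i \le \tau_\len - 1$; substituting $j = n - i$ gives $\prod_{i=0}^{\tau_\len-1}(1-q^{i-n}) = \prod_{j = n - \tau_\len + 1}^{n}(1 - q^{-j})$, whence
\[
0 \ge \log_q \prod_{i=0}^{\tau_\len - 1}(1 - q^{i-n}) \ge \log_q \prod_{j=1}^{\infty}(1 - q^{-j}),
\]
and the rightmost quantity is a finite negative constant independent of $n$ because $\sum_{j\ge 1} q^{-j}$ converges. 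Thus this term is bounded uniformly in $m$.

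Finally, dividing both the lower and upper bounds of Corollary~\ref{cor:cap-amc} by $n|\mu| = \bar{n}\,|\bar{\mu}|\,m^2$, the dominant term contributes $\frac{\sum_{i}(\bar{n} - \bar{\tau}_i)(\bar{\mu}_i - \bar{\tau}_i)}{\bar{n}\,|\bar{\mu}|}$ while both correction terms contribute $o(1)$; the squeeze theorem then yields \eqref{eq:amc-asymp}. I do not anticipate a genuine obstacle here; the only step that needs a moment's care is the uniform-in-$n$ boundedness of $\log_q\prod_{i=0}^{\tau_\len-1}(1-q^{i-n})$, which is settled above by comparison with the convergent $q$-Pochhammer-type product $\prod_{j\ge 1}(1-q^{-j})$.
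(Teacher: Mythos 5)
Your proposal is correct and follows essentially the same route as the paper: divide the bounds of Corollary~\ref{cor:cap-amc} by $n|\mu|=\bar{n}|\bar{\mu}|m^2$, observe the dominant term matches the claimed expression exactly, and let the $O(1)$ correction terms vanish in the limit. Your explicit justification that $\log_q\prod_{i=0}^{\tau_\len-1}(1-q^{i-n})$ is bounded uniformly via the convergent $q$-Pochhammer product is a detail the paper silently assumes; otherwise the arguments coincide.
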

\begin{proof}
It follows from Corollary~\ref{cor:cap-amc} and the fact that
\[
\frac{1}{m^2} \log_q 4^\len \prod_{i = 0}^{\tau_\len - 1}(1 - q^{i - n})
 \to 0, \text{ as } m \to \infty.
\]
\end{proof}

\subsection{Coding Scheme}
We focus on a special case when $\tau = t$, and present a
coding scheme based on the idea of error-trapping in
\cite{SKK10}.  This scheme achieves the asymptotic capacity
for this special case.

\subsubsection{Encoding}
Set $v \ge t$. The input matrix $X$ is constructed as
\[
       X = \mat{0 & 0 \\
        0 & U},
\]
where the size of $U$ is $(n - v) \times (m - v)$, and the
sizes of other zero matrices are chosen to make $X$ an $n
\times m$ matrix.  Here, $U$ is chosen from the set $R^{(n -
v) \times (\mu - v)}$ by using the construction in
Section~\ref{sec:matrix-constraint} (as illustrated in
Fig.~\ref{fig:amc-scheme}). Clearly, the encoding rate of the
scheme is $R_\text{AMC} = \sum_{i=1}^{\len} (n - v)(\mu_i -
v)$.

\begin{figure}[t]
  \centering
\ifCLASSOPTIONonecolumn
  \setlength{\tabcolsep}{1pt}
  \begin{tabular}{ccc}
     \includegraphics{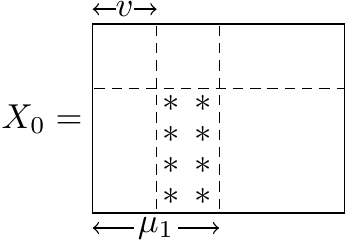} &
     \includegraphics{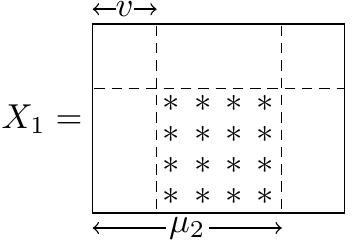} &
     \includegraphics{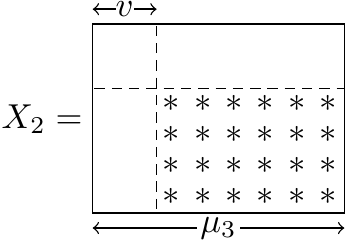}
   \end{tabular}
\else
  \scalebox{0.75}{\setlength{\tabcolsep}{1pt}\begin{tabular}{ccc}
     \includegraphics{fig6a} &
     \includegraphics{fig6b} &
     \includegraphics{fig6c}
   \end{tabular}}
\fi
\caption{Illustration of the AMC encoding scheme for $\len =
3$, $n=6$, $\mu=(4,6,8)$, and $v = 2$.}
  \label{fig:amc-scheme}
\end{figure}

\subsubsection{Decoding}

Following \cite{SKK10}, we write the noise matrix $Z$ as
\[
Z = BE = \mat{B_1 \\ B_2} \mat{E_1 & E_2},
\]
where $B_1 \in R^{v \times t}$, $B_2 \in R^{(n - v)\times t}$,
$E_1 \in R^{t \times v}$ and $E_2 \in R^{t \times (m - v)}$.
The received matrix $Y$ is then given by
\[
    Y = X + Z = \mat{B_1 E_1 & B_1 E_2 \\ B_2 E_1 & U + B_2 E_2}.
\]

Similar to \cite{SKK10}, we define that the error trapping is
successful if $\shape B_1 E_1 = t$. Assume that this is the
case.  Then by Proposition~\ref{prop:shape-properties}.3, we
have $\shape B_1 = \shape E_1 = t$. Consider the submatrix
consisting of the first $v$ columns of $Y$. Since $\shape B_1
E_1 = t$, the rows of $B_2 E_1$ are completely spanned by the
rows of $B_1 E_1$.  That is, $\row B_2 E_1 \subseteq \row B_1
E_1$.  Thus, there exists some matrix $\bar{T}$ such that $B_2
E_1 = \bar{T} B_1 E_1$.  Since $E_1$ is full row rank, by
Lemma~\ref{lem:column-rank}, $B_2 E_1 = \bar{T} B_1 E_1$ implies
$B_2 = \bar{T} B_1$. It follows that
\[
    T \mat{B_1 \\ B_2} = \mat{B_1 \\ 0},
   \mbox{ where } T = \mat{I & 0 \\ -\bar{T} & I}.
\]
Note also that $T X = X$. Thus,
\[
    TY = TX + TZ = \mat{B_1 E_1 & B_1 E_2 \\ 0 & U},
\]
from which the data matrix $U$ is readily obtained.

The decoding is summarized as follows.  The decoder observes
$B_1 E_1$, $B_1 E_2$, and $B_2 E_1$ thanks to the error traps.
The decoder then checks the condition $\shape B_1 E_1 = t$.
If the condition does not hold, the decoder declares a
failure.  Otherwise, the decoder finds a matrix $\bar{T}$ such
that $B_2 E_1 = \bar{T} B_1 E_1$ (which means $B_2 = \bar{T}
B_1$).  Since $B_2 = \bar{T} B_1$, the decoder can recover
$B_2 E_2$ by using the relation $B_2 E_2 = \bar{T} B_1E_2$.
Clearly, the error probability of the scheme is zero.  The
failure probability of the scheme is
\[
P_f =\Pr[\shape B_1 E_1~\ne~t].
\]

\begin{lem}\label{lem:amc-prob}
The failure probability $P_f$ of the above scheme is
upper-bounded by $P_f < \frac{2 t}{q^{1 + v - t}}$.
\end{lem}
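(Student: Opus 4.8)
The plan is to estimate $P_f = \Pr[\shape B_1 E_1 \ne t]$ by writing the event as a union over the possible "defects." Recall that $B_1 \in R^{v \times t}$ and $E_1 \in R^{t \times v}$, and that $Z = BE$ is uniform over $\calT_t(R^{N \times \mu})$, which forces $B$ to have full column rank $t$ and $E$ to have full row rank $t$ (both over the chain ring); in particular $\shape E_1 \preceq \shape E_1$'s ambient and $E$ is full row rank, so $E_1$ inherits strong rank properties. The key observation is that $\shape B_1 E_1 = t$ fails exactly when either $\shape B_1 \ne t$ or, given $\shape B_1 = t$, the product drops shape; by Proposition~\ref{prop:shape-properties}.3 both $\shape B_1 \preceq t$ and $\shape E_1 \preceq t$ always, so the bad event is $\{\shape B_1 \ne t\} \cup \{\shape E_1 \ne t\} \cup \{\shape B_1 E_1 \ne t \text{ while both factors have shape } t\}$.

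First I would argue that it suffices to control the probability that a uniformly random $t \times v$ (or $v \times t$) matrix over $R$, subject to being an admissible "row block" or "column block" of a full-rank $BE$, fails to have shape $t$; since $v \ge t$, a uniform $v \times t$ matrix over $R$ has shape $t$ with high probability, and the relevant correction factor from conditioning on $B,E$ being full-rank is close to $1$. The cleanest route is: condition on $B$ and $E$; for fixed $E$ of full row rank $t$, the matrix $B_1 E_1$ has shape $t$ iff $\row(B_1 E_1) = \row(E_1[{:},1{:}v]) $ has full shape $t$, which (since $E_1$ full rank) happens iff $B_1$ has shape $t$. So $P_f = \Pr[\shape B_1 \ne t]$ where $B_1$ is the top $v$ rows of a uniformly random full-column-rank $B \in R^{N \times t}$ — or, by the analogous column argument, $\Pr[\shape E_1 \ne t]$, whichever is easier; I would take a union bound over both the "$B$ side" and the "$E$ side" to be safe, which is where the factor $2$ in $\frac{2t}{q^{1+v-t}}$ comes from.

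Next I would estimate $\Pr[\shape B_1 \ne t]$ for $B_1$ the top $v \times t$ block of a uniform full-column-rank $B$. Using the enumeration of full-row-rank / full-rank matrices stated in Section~\ref{sec:matrix} (the count $q^{snm}\prod_{i=0}^{n-1}(1-q^{i-m})$ and its shape-refinement via Theorem~\ref{thm:MatrixCount}), the conditional distribution of $B_1$ is a reweighting of the uniform distribution on $R^{v \times t}$ by a bounded factor. A uniformly random $B_1 \in R^{v\times t}$ fails to be full column rank with probability at most $\sum_{i=0}^{t-1} q^{i-v} < \frac{t}{q^{v-t+1}}\cdot\frac{q}{q-1}$ — more carefully $\sum_{i=0}^{t-1}q^{-(v-i)} \le t q^{-(v-t+1)}$ — and the full-column-rank constraint on the whole of $B$ only makes this smaller (it removes some of the bad $B$'s). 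Symmetrically for $E_1$. Adding the two contributions gives $P_f < \frac{2t}{q^{1+v-t}}$.

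The main obstacle I expect is getting the conditioning clean: one must verify that conditioning $B$ on being full column rank does not \emph{increase} $\Pr[\shape B_1 \ne t]$, and that the reduction "$\shape B_1 E_1 = t \iff \shape B_1 = t$" really holds when $E_1$ is full row rank over the chain ring. The latter is where Lemma~\ref{lem:column-rank} and Proposition~\ref{prop:shape-properties}.3 do the work: $\shape B_1 E_1 \preceq \shape B_1 \preceq t$, and if $\shape B_1 = t$ then $B_1$ has a full-rank $t\times t$ submatrix / Smith form $I_t$ padded with zeros, so $B_1 E_1$ is left-equivalent to $E_1$ up to padding, forcing $\shape B_1 E_1 = \shape E_1 = t$. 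Once that equivalence is nailed down, the remaining estimate is the routine geometric-series bound on ranks of random matrices over a chain ring, and the stated bound $P_f < \frac{2t}{q^{1+v-t}}$ falls out.
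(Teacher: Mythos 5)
Your proposal matches the paper's argument in all essentials: the union bound over $\{B_1 \text{ not full rank}\}$ and $\{E_1 \text{ not full rank}\}$, the observation that conditioning $B$ (resp.\ $E$) on full-rank-ness can only increase the probability that $B_1$ (resp.\ $E_1$) is full rank since $\{B_1 \text{ full rank}\}\subseteq\{B \text{ full rank}\}$, and the term-by-term bound $\sum_{i=0}^{t-1}q^{i-v}\le t\,q^{t-1-v}$. Your extra detour establishing the equivalence $\shape B_1E_1=t \iff \shape B_1=t$ (for fixed full-rank $E_1$) is correct but unnecessary, since the paper only needs the one-directional implication that both $B_1$ and $E_1$ full rank forces $\shape B_1E_1=t$.
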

\begin{proof}
If $B_1$ and $E_1$ are full rank, then $\shape B_1
E_1 = t$.  Hence, by the union bound, the failure probability
\[
P_f \le \Pr[E_1 \mbox{ is not full rank}]
+ \Pr[B_1 \mbox{ is not full rank}].
\]
Now consider the probability that $E_1$ is full rank.  Recall
that $E \in R^{t \times \mu}$ is a full-rank matrix chosen
uniformly at random.  An equivalent way of generating $E$ is
to first generate the entries of a matrix $E' \in R^{t \times
\mu}$ uniformly at random, and then discard $E'$ if it is not
full rank. This suggests that
\begin{align*}
\Pr[E_1 \mbox{ is full rank}] &= \Pr[E_1'  \mbox{ is full rank}
\mid E'  \mbox{ is full rank}] \\
&> \Pr[E_1'  \mbox{ is full rank}],
\end{align*}
where $E_1'$ consists of the first $v$ columns of $E'$.  Thus,
\begin{align*}
\Pr[E_1 \mbox{ is full rank}]
        &> |\calT_t(R^{t \times v})|/|R^{t \times v}|\\
        &= q^{\len t v} \prod_{i = 0}^{t - 1}
           (1 - q^{i - v}) / q^{\len t v}\\
        &= \prod_{i = 0}^{t - 1} (1 - q^{i - v})\\
        &> 1 - \frac{t}{q^{1 + v - t}}.
\end{align*}
Similarly, we can show that
\[
\Pr[B_1 \mbox{ is full rank}] > 1 - \frac{t}{q^{1 + v - t}}.
\]
Therefore, the failure probability
$P_f < \frac{2 t}{q^{1 + v - t}}$.
\end{proof}

Recall that the encoding rate of the scheme is $R_\text{AMC} =
\sum_{i=1}^{\len} (n - v) (\mu_i - v)$.  Thus, if we set $v$
such that
\[
v - t \to \infty, \mbox{ and } \frac{v - t}{m} \to 0,
\]
as $m \to \infty$, then we have $P_f \to 0$ and
$\bar{R}_\text{AMC} = \frac{R_\text{AMC}}{n |\mu |} \to
\bar{C}_\text{AMC}$.  Therefore, we have the following
theorem.

\begin{thm}
The coding scheme described above can achieve the capacity
expression (\ref{eq:amc-asymp}) for the special case when
$\tau = t$.
\end{thm}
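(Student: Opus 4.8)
The plan is to assemble the pieces already in place: the scheme never makes a \emph{decoding error} (it only ever \emph{declares a failure}), its failure probability is bounded by Lemma~\ref{lem:amc-prob}, and its normalized encoding rate is an explicit function of $v$. The whole argument then reduces to choosing $v = v(m)$ in a ``sweet spot'' and letting $m \to \infty$.

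First I would recap why the error probability is zero. When the error trap succeeds, i.e.\ $\shape B_1 E_1 = t$, Proposition~\ref{prop:shape-properties}.3 and .5 (using $\shape B_1 \preceq \min\{v,t\} = t$, $\shape E_1 \preceq t$, and $t = \shape B_1 E_1 \preceq \shape B_1, \shape E_1$) force $\shape B_1 = \shape E_1 = t$; in particular $E_1$ is full row rank, so Lemma~\ref{lem:column-rank} converts $B_2 E_1 = \bar{T} B_1 E_1$ into $B_2 = \bar{T} B_1$, and then $T = \mat{I & 0 \\ -\bar{T} & I}$ satisfies $TX = X$ while clearing the bottom-left block of $Z$, so that $TY$ exposes $U$ exactly. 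Hence the only ``bad'' event is the declared failure $\{\shape B_1 E_1 \ne t\}$, whose probability is $P_f$.

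Next I would fix $v$ with $v \ge t$, $v - t \to \infty$, and $(v-t)/m \to 0$ as $m \to \infty$; a convenient explicit choice is $v = t + \lceil \sqrt{m}\, \rceil$. By Lemma~\ref{lem:amc-prob}, $P_f < 2t/q^{1+v-t} \to 0$. For the rate, write $n = \bar{n} m$, $\mu_i = \bar{\mu}_i m$, and note that $v/m \to \bar{\tau}_i$ for every $i$ (in the special case $\tau = t$ one has $\bar{\tau} = (\bar{t},\ldots,\bar{t})$ and $v/m = t/m + (v-t)/m \to \bar{t}$). Then
\[
\bar{R}_\text{AMC} = \frac{1}{\bar{n}|\bar{\mu}|\, m^2}\sum_{i=1}^{\len}(n-v)(\mu_i - v) \;\longrightarrow\; \frac{\sum_{i=1}^{\len}(\bar{n} - \bar{\tau}_i)(\bar{\mu}_i - \bar{\tau}_i)}{\bar{n}|\bar{\mu}|} = \bar{C}_\text{AMC},
\]
the last equality being Theorem~\ref{thm:amc-asymp}. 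Thus the scheme operates at a normalized rate tending to $\bar{C}_\text{AMC}$ with zero error probability and vanishing failure probability, which is the assertion.

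I do not anticipate a genuine obstacle: the two substantive ingredients---zero decoding error and the bound $P_f < 2t/q^{1+v-t}$---are already established, so only routine limit bookkeeping remains. The one point deserving a little care is that $v - t$ must grow faster than $\log_q t$ (hence the concrete choice $v - t = \lceil \sqrt{m}\, \rceil$ rather than merely $v - t \to \infty$) in order that $2t/q^{1+v-t}$ actually vanish once $t$ scales linearly with $m$; one should also note that for $m$ large the feasibility constraints $v < n$ and $v < \mu_1$ hold automatically under the standing assumption $\bar{\tau} \preceq \bar{n}, \bar{\mu}$.
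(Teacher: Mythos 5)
Your proof matches the paper's own argument step for step: the scheme never makes an undetected error (only a declared failure), Lemma~\ref{lem:amc-prob} bounds the failure probability by $2t/q^{1+v-t}$, and choosing $v$ in the ``sweet spot'' between $v-t\to\infty$ and $(v-t)/m\to 0$ drives both $P_f\to 0$ and $\bar{R}_\text{AMC}\to\bar{C}_\text{AMC}$. Your explicit choice $v-t=\lceil\sqrt{m}\,\rceil$ is a sensible sharpening of the paper's stated condition: when $t$ scales linearly in $m$ (as it must for $\bar{\tau}$ to be a nontrivial constant in the asymptotic capacity formula), the bare requirement $v-t\to\infty$ does not by itself force $2t/q^{1+v-t}\to 0$; one actually needs $v-t-\log_q t\to\infty$, which your choice guarantees while still satisfying $(v-t)/m\to 0$.
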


{\bf Remark:} The general case can also be handled by combining the above scheme
with the successive cancellation technique.

\section{The Multiplicative-Additive Matrix Channel}\label{sec:ammc}

In this section, we consider the \emph{multiplicative-additive
matrix channel (MAMC)} defined by the law
\[
  Y = AX + Z,
\]
where $A \in \calT_n(R^{N \times n})$ and $Z \in \calT_\tau(R^{N \times
\mu})$ are uniformly distributed and independent from any
other variables.

\subsection{Capacity Bounds}

Since $A$ is uniform over $\calT_n(R^{N \times n})$, $A$ is statistically equivalent to $P \mat{0 \\ I_n}$,
where $P \in R^{N \times N}$ is uniform over $\GL_N(R)$,
$I_n \in R^{n \times n}$ is an identity matrix, and
$0 \in R^{(N-n) \times n}$ is a zero matrix. Hence, we have
\[
Y = P \mat{0 \\ I_n} X + Z = P \mat{0 \\ X} + Z = P\left( \mat{0 \\ X} + W \right),
\]
where $W = P^{-1}Z$ is uniform over $\calT_\tau(R^{N \times
\mu})$ and independent of $X$.

\begin{thm}\label{thm:cap-ammc}
The capacity of the MAMC, in $q$-ary symbols per channel
use, is upper-bounded by
\ifCLASSOPTIONonecolumn
\begin{equation}\label{eq:cap-ammc}
    C_\text{AMMC} \leq  \log_q \sum_{\lambda \preceq N,n + \tau, \mu}
        \submodule{\mu}{\lambda} - \log_q |\calT_\tau(R^{N \times \mu})|
   + \log_q \sum_{\tau' \preceq \tau} |\calT_{\tau'}(R^{N \times \min\{n+\tau_s, N\}})|.
\end{equation}
\else
\begin{multline}\label{eq:cap-ammc}
    C_\text{AMMC} \leq  \log_q \sum_{\lambda \preceq N,n + \tau, \mu}
        \submodule{\mu}{\lambda} - \log_q |\calT_\tau(R^{N \times \mu})|\\
   + \log_q \sum_{\tau' \preceq \tau} |\calT_{\tau'}(R^{N \times \min\{n+\tau_s, N\}})|.
\end{multline}
\fi
\end{thm}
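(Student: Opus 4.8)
The plan is to establish that $\RCF(Y)$ is a sufficient statistic for $X$ and then to bound $I(X;\RCF(Y))$ while carefully accounting for the entropy that the noise forces into $\RCF(Y)$.

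First I would exploit the representation $Y=P\bigl(\mat{0\\X}+W\bigr)$ derived above, in which $P$ is uniform over $\GL_N(R)$, $W$ is uniform over $\calT_\tau(R^{N\times\mu})$, and $W$, $P$, $X$ are independent. For any fixed $Q\in\GL_N(R)$ the conditional law of $QY$ given $X$ equals that of $Y$ given $X$ (since $QP$ is again uniform over $\GL_N(R)$ and independent of $W$), so $p_{Y|X}(\cdot\mid x)$ is constant on each left-equivalence class of $R^{N\times\mu}$. Because two matrices are left-equivalent precisely when they share the same row canonical form (Section~\ref{sec:canonical}), this means $Y$ is conditionally independent of $X$ given $\RCF(Y)$; hence $X\to\RCF(Y)\to Y$ is a Markov chain, $I(X;Y)=I(X;\RCF(Y))$, and $C_\text{AMMC}=\max_{p_X}I(X;\RCF(Y))$.

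Next I would split $I(X;\RCF(Y))=H(\RCF(Y))-H(\RCF(Y)\mid X)$ and bound the two terms separately. Since left multiplication by $P$ does not alter the row canonical form, $\RCF(Y)=\RCF\bigl(\mat{0\\X}+W\bigr)$, whose row span is a submodule of $R^\mu$ of some shape $\lambda$ with $\lambda\preceq\mu$, $\lambda\preceq N$ (at most $N$ rows), and $\lambda\preceq n+\tau$ (because $\shape\bigl(\mat{0\\X}+W\bigr)\preceq\shape\mat{0\\X}+\shape W\preceq n+\tau$, using $\shape X\preceq n$ and the fact that the shape of a sum of submodules is at most the sum of the shapes). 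By Lemma~\ref{lem:correspondence} and \eqref{eqn:ModuleCount} the number of possible values of $\RCF(Y)$ is $\sum_{\lambda\preceq N,n+\tau,\mu}\submodule{\mu}{\lambda}$, giving the first term of \eqref{eq:cap-ammc} as an upper bound on $H(\RCF(Y))$. For the subtracted term, fixing $X=x$ (write $\bar x=\mat{0\\x}$) makes $\RCF(\bar x+W)$ a deterministic function of a uniform $W\in\calT_\tau(R^{N\times\mu})$, so
\[
H(\RCF(Y)\mid X=x)=H(W)-H\bigl(W\mid\RCF(\bar x+W)\bigr)\ge\log_q|\calT_\tau(R^{N\times\mu})|-\log_q\max_M|F_{x,M}|,
\]
where $F_{x,M}=\{W\in\calT_\tau(R^{N\times\mu}):\RCF(\bar x+W)=M\}$. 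Combining these bounds reduces the theorem to the uniform fiber estimate $|F_{x,M}|\le\sum_{\tau'\preceq\tau}|\calT_{\tau'}(R^{N\times c})|$ with $c=\min\{n+\tau_\len,N\}$.

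The hard part will be this fiber estimate. I would use that $\RCF(\bar x+W)=M$ is equivalent to $\row(\bar x+W)=V:=\row M$, a submodule of shape $\lambda$ with every $\lambda_i\le c$; hence $W$ must lie in the coset $-\bar x+V^{N}$, i.e.\ $\bar x+W=GM_V$ for a fixed list $M_V\in R^{\lambda_\len\times\mu}$ of canonical generators of $V$ and some $G\in R^{N\times\lambda_\len}$, with $\shape W=\tau$. When $\row X\subseteq V$ this already yields the bound: $W$ then has all rows in $V\cong R^\lambda$, so $|F_{x,M}|\le|\calT_\tau(R^{N\times\lambda})|$ by Theorem~\ref{thm:MatrixCount} applied inside $V^{N}$, and $|\calT_\tau(R^{N\times\lambda})|\le|\calT_\tau(R^{N\times c})|\le\sum_{\tau'\preceq\tau}|\calT_{\tau'}(R^{N\times c})|$ via the shape-preserving inclusion $R^\lambda\hookrightarrow R^c$ (valid since $\lambda_i\le c$). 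The general case $\row X\not\subseteq V$ is the main obstacle: one must then sum, over shapes $\tau'\preceq\tau$, the number of submodules $U$ of $V+\row X$ with $\shape U=\tau'$ and $U+V=V+\row X$ (the number of generators of $V+\row X$ modulo $V$ being at most $\tau_\len$, since $W$ has shape $\tau$) times the number of matrices $W$ with $\row W=U$ in the relevant coset, and check that Theorem~\ref{thm:MatrixCount} keeps the total below $\sum_{\tau'\preceq\tau}|\calT_{\tau'}(R^{N\times c})|$. Assembling the three steps then gives \eqref{eq:cap-ammc}.
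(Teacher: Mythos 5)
Your reformulation of the problem is sound, and it is a genuinely different packaging from the paper's: you reduce to the sufficient statistic $\RCF(Y)$ and write $I(X;Y)=H(\RCF(Y))-H(\RCF(Y)\mid X)$, whereas the paper introduces the intermediate variable $U=\mat{0\\X}+W$ and applies the chain rule to get $I(X;Y)=I(U;Y)-H(W)+H(W\mid X,Y)$. These are formally equivalent (indeed $I(U;Y)=H(\row Y)=H(\RCF(Y))$, and one can check $H(W\mid X,Y)=H(W\mid X,\RCF(Y))$ since $Y$ is conditionally independent of $W$ given $X$ and $\RCF(Y)$), and your bounds on the first two terms are correct. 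The claim $\shape(\mat{0\\X}+W)\preceq n+\tau$ also holds, since $\row(A+B)\subseteq\row A+\row B$ and the shape of a quotient of $R^{\mu_1}\times R^{\mu_2}$ is a subshape of $\mu_1+\mu_2$.

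The gap is precisely where you flag it: the fiber estimate $|F_{x,M}|\le\sum_{\tau'\preceq\tau}|\calT_{\tau'}(R^{N\times c})|$ with $c=\min\{n+\tau_\len,N\}$. Your argument for the case $\row X\subseteq V$ is fine, but that case is not where the content lies, and the sketch for $\row X\not\subseteq V$ (enumerating submodules $U$ of $V+\row X$ with $U+V=V+\row X$, then multiplying by matrix counts from Theorem~\ref{thm:MatrixCount}) is neither carried out nor obviously closeable within the stated budget of $\sum_{\tau'\preceq\tau}|\calT_{\tau'}(R^{N\times c})|$; it would require controlling the interaction between the coset constraint $\bar x+W\in V^N$, the span constraint $\row(\bar x+W)=V$, and the shape constraint $\shape W=\tau$ simultaneously, and you give no reason the cross-terms cancel. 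The paper closes exactly this gap by a different and more direct device: it writes the fixed $Y$ (any representative with $\RCF(Y)=M$, say $M$ itself) in Smith form as $P_1S_{11}Q_1$ with $Q_1\in R^{\lambda_\len\times m}$ full row rank, observes that $\bar x+W=P^*Q_1$ for a unique $P^*$, partitions columns so that the left $\lambda_\len\times\lambda_\len$ block $Q_{11}$ of $Q_1$ is invertible, and shows $W_2=(\bar x_1+W_1)Q_{11}^{-1}Q_{12}-\bar x_2$. Hence $W$ is determined by its block $W_1\in R^{N\times\lambda_\len}$, which satisfies $\shape W_1\preceq\tau$ and $\lambda_\len\le\min\{n+\tau_\len,N\}$, giving precisely the bound $\sum_{\tau'\preceq\tau}|\calT_{\tau'}(R^{N\times c})|$ with no case split on whether $\row X\subseteq V$. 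If you substitute that parameterization argument for your unfinished submodule enumeration, your proof goes through.
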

\begin{proof}
Let $U = \mat{0 \\ X} + W$.
Then $Y = P U$, and $X$, $U$, $Y$ form a Markov chain. Hence,
$I(X; Y | U) = 0$. Using the chain rules, we have
\begin{align*}
I(X;Y) &= I(U; Y) - I(U; Y | X) + \underbrace{I(X; Y|U)}_{= 0}\\
&= I(U; Y) - H(U|X) + H(U | X, Y)\\
&= I(U; Y) - H(W) + H(W | X, Y)\\
&= I(U; Y) - \log_q |\calT_\tau(R^{N \times \mu})|
+ H(W | X, Y)
\end{align*}

Next, we upper bound the terms $I(U; Y)$ and $H(W|X,Y)$.
Since $\shape U \preceq N, n + \tau$, the row span $\row U$
has at most $\sum_{\lambda \preceq N,n + \tau, \mu}\submodule{\mu}{\lambda}$ choices. Hence,
$I(U; Y) \le \log_q \sum_{\lambda \preceq N,n + \tau, \mu}
        \submodule{\mu}{\lambda}$.

Let $\kappa =
\shape Y$. Let $S$ be the Smith normal form of $Y$.  Then $S$
contains $\kappa_\len$ nonzero diagonal entries. Thus, $Y$ can
be expressed as
\[
  Y = \mat{P_1 & P_2} \mat{S_{11} & 0 \\ 0 & 0} \mat{Q_1 \\ Q_2} = P_1 S_{11} Q_1,
\]
where $P_1 \in R^{N \times \kappa_\len}$, $Q_1 \in R^{\kappa_\len \times m}$,
and $S_{11} \in R^{\kappa_\len \times \kappa_\len}$.

Note that
  \[
  \mat{0 \\ X} + W = P^{-1} Y = P^{*} Q_1,
  \]
where $P^{*} = P^{-1} P_1 S_{11}$.  Since $Q_1$ consists of
the first $\kappa_\len$ rows of an invertible matrix $Q$,
$Q_1$ is a full-rank matrix. In particular, $Q_1$ contains an
invertible $\kappa_\len \times \kappa_\len$ submatrix.  By
reordering columns if necessary, we can assume that the left
$\kappa_\len \times \kappa_\len$ submatrix of $Q_1$ is
invertible.  Write $Q_1 = \mat{Q_{11} & Q_{12}}$, $X =
\mat{X_1 & X_2}$ and $W = \mat{W_1 & W_2}$, where $Q_{11}$,
$X_1$, and $W_1$ have $\kappa_\len$ columns.  We have
\[
\mat{0 & 0 \\ X_1 & X_2} + \mat{W_1 & W_2} = \mat{P^* Q_{11} & P^*
Q_{12}}.
\]
It follows that
\[
  P^{*} = \left(\mat{0 \\ X_1} + W_1 \right)Q_{11}^{-1} \mbox{ and }  W_2 = P^{*} Q_{12} - \mat{0 \\ X_2}.
\]
This suggests that $W_2$ can be computed from $W_1$ if $X$ and
$Y$ are known.  Thus,
\[
  H(W|X, Y) = H(W_1 | X, Y) \le H(W_1 | \shape Y).
\]
Since $W_1$ is an $N \times \kappa_\len$ matrix with $\shape
W_1 \preceq \tau$, we have
\[
  H(W_1 | \shape Y = \kappa) \le \log_q \sum_{\tau' \preceq \tau} |\calT_{\tau'}(R^{N \times \kappa_\len})|,
\]
which is maximized when $\kappa_\len = \min\{N, n + \tau_\len \}$. Hence,
\[
  H(W_1 | \shape Y) \le \log_q \sum_{\tau' \preceq \tau} |\calT_{\tau'}(R^{N \times \min\{n + \tau_\len, N \}})|.
\]
So, $H(W|X,Y) \le  \log_q \sum_{\tau' \preceq
\tau} |\calT_{\tau'}(R^{N \times \min\{n + \tau_\len, N \}})|$, which completes the
proof.
\end{proof}

\begin{cor}\label{cor:cap-ammc}
The capacity $C_\text{MAMC}$ is upper-bounded by
\ifCLASSOPTIONonecolumn
\[
    C_\text{MAMC} \le \sum_{i = 1}^{\len} (\mu_i - \xi_i) \xi_i +
    \sum_{i = 1}^{\len} (\min\{n+\tau_s, N \} - \mu_i) \tau_i + 2\len \log_q 4
     + \log_q\mbox{$\binom{N+\len}{\len}$}
    + \log_q \mbox{$\binom{\tau_\len+\len}{\len}$} -
       \log_q \prod_{i = 0}^{\tau_\len - 1}(1 - q^{i-N}),
\]
\else 
\begin{multline*}
    C_\text{MAMC} \le \sum_{i = 1}^{\len} (\mu_i - \xi_i) \xi_i +
    \sum_{i = 1}^{\len} (\min\{n+\tau_s, N \} - \mu_i) \tau_i \\
     + \log_q\mbox{$\binom{N+\len}{\len}$}
    + \log_q \mbox{$\binom{\tau_\len+\len}{\len}$} -
       \log_q \prod_{i = 0}^{\tau_\len - 1}(1 - q^{i-N}) + 2\len \log_q 4,
\end{multline*}
\fi
where $\xi_i = \min\{ N, n + \tau_i, \lfloor \mu_i/2 \rfloor \}$ for all
$i$.  In particular, when $\mu \succeq 2N$ and $\tau = t$, the upper bound
reduces to
\ifCLASSOPTIONonecolumn
\[
    C_\text{MAMC} \le \sum_{i = 1}^{\len} (\min\{n+t, N \} - t) (\mu_i - \min\{n+t, N \})
    + 2\len \log_q 4 \\ + \log_q \mbox{$\binom{N+\len}{\len}$}
    + \log_q \mbox{$\binom{t+\len}{\len}$} - \log_q \prod_{i = 0}^{t - 1}(1 - q^{i-N}).
\]
\else 
\begin{multline*}
    C_\text{MAMC} \le \sum_{i = 1}^{\len} (\min\{n+t, N \} - t) (\mu_i - \min\{n+t, N \}) \\
    + 2\len \log_q 4 \\ + \log_q \mbox{$\binom{N+\len}{\len}$}
    + \log_q \mbox{$\binom{t+\len}{\len}$} - \log_q \prod_{i = 0}^{t - 1}(1 - q^{i-N}).
\end{multline*}
\fi
\end{cor}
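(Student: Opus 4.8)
The plan is to take the upper bound of Theorem~\ref{thm:cap-ammc} and estimate each of its three summands separately. Write that bound as $C_\text{MAMC} \le T_1 - T_2 + T_3$, where $T_1 = \log_q \sum_{\lambda \preceq N, n+\tau, \mu}\submodule{\mu}{\lambda}$, $T_2 = \log_q|\calT_\tau(R^{N\times\mu})|$, and $T_3 = \log_q\sum_{\tau'\preceq\tau}|\calT_{\tau'}(R^{N\times m'})|$ with $m' \triangleq \min\{n+\tau_\len,N\}$. Because $T_2$ appears with a minus sign, I will need an \emph{upper} bound on $T_1$ and $T_3$ but a \emph{lower} bound on $T_2$, and these come from (\ref{eq:num-module}) and the two inequalities in (\ref{eq:log-matrix}).

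For $T_1$: by (\ref{eq:num-module}) each term is $\submodule{\mu}{\lambda} \le 4^\len q^{\sum_i \lambda_i(\mu_i-\lambda_i)}$; the number of shapes obeying $\lambda \preceq N$ (a fortiori $\lambda \preceq N,n+\tau,\mu$) is at most $\binom{N+\len}{\len}$; and for each $i$ the parabola $\lambda_i \mapsto \lambda_i(\mu_i-\lambda_i)$ on the integer range $0\le\lambda_i\le\min\{N,n+\tau_i,\mu_i\}$ is maximized at $\lambda_i=\xi_i=\min\{N,n+\tau_i,\lfloor\mu_i/2\rfloor\}$. Hence $T_1 \le \sum_i \xi_i(\mu_i-\xi_i) + \len\log_q 4 + \log_q\binom{N+\len}{\len}$. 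For $T_2$: the left inequality of (\ref{eq:log-matrix}) with $(n,\mu,\kappa)\to(N,\mu,\tau)$ gives $T_2 \ge \sum_i\tau_i(N+\mu_i-\tau_i) + \log_q\prod_{i=0}^{\tau_\len-1}(1-q^{i-N})$.

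For $T_3$: apply the right inequality of (\ref{eq:log-matrix}) with $(n,\mu,\kappa)\to(N,m',\tau')$ to each $\tau'\preceq\tau$ (legitimate since $\tau'\preceq\tau\preceq N$ and $\tau'\preceq\tau\preceq m'$), drop the nonpositive term $\log_q\prod_{i=0}^{\tau'_\len-1}(1-q^{i-N})$, and use $m'\le N$ together with $\tau'_i\le\tau_i\le m'$ so that monotonicity of $x\mapsto x(N+m'-x)$ on $[0,(N+m')/2]$ gives $\tau'_i(N+m'-\tau'_i)\le\tau_i(N+m'-\tau_i)$; bounding the number of subshapes $\tau'\preceq\tau$ by $\binom{\tau_\len+\len}{\len}$ then yields $T_3 \le \sum_i\tau_i(N+m'-\tau_i) + \len\log_q 4 + \log_q\binom{\tau_\len+\len}{\len}$. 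Assembling $T_1-T_2+T_3$, the crucial cancellation is $\sum_i\tau_i(N+m'-\tau_i) - \sum_i\tau_i(N+\mu_i-\tau_i) = \sum_i(m'-\mu_i)\tau_i = \sum_i(\min\{n+\tau_\len,N\}-\mu_i)\tau_i$, and collecting the remaining terms reproduces the claimed general bound.

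For the specialization $\mu\succeq 2N$, $\tau=t$: here $\lfloor\mu_i/2\rfloor\ge N$, so $\xi_i=\min\{N,n+t\}=m'$ for every $i$, whence $\sum_i\xi_i(\mu_i-\xi_i)+\sum_i(m'-\mu_i)t = \sum_i(\mu_i-m')(m'-t) = \sum_i(\min\{n+t,N\}-t)(\mu_i-\min\{n+t,N\})$; substituting $\tau_\len=t$ in the leftover terms gives the reduced bound. I expect the only real pitfall to be sign bookkeeping --- making sure $T_1,T_3$ are bounded above and $T_2$ below so that $T_1-T_2+T_3$ is a genuine upper bound --- together with the two elementary monotonicity checks (the parabola maximum at $\xi_i$, and $\tau'_i(N+m'-\tau'_i)\le\tau_i(N+m'-\tau_i)$, both resting on $m'\le N$). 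Everything else is routine collection of the $\len\log_q 4$, binomial, and $\prod_i(1-q^{i-N})$ terms.
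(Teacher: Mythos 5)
Your proposal is correct and follows essentially the same route as the paper: bound $T_1$ above via (\ref{eq:num-module}) plus the $\binom{N+\len}{\len}$ shape count and the parabola maximum at $\xi_i$; bound $T_2$ below via the left inequality of (\ref{eq:log-matrix}); bound $T_3$ above via (\ref{eqn:MatrixCount})/(\ref{eq:log-matrix}), the monotonicity $\tau'_i(N+m'-\tau'_i)\le\tau_i(N+m'-\tau_i)$ (which the paper phrases as ``$\tau$ maximizes the quantity''), and the $\binom{\tau_\len+\len}{\len}$ subshape count; then combine and, in the specialization $\mu\succeq 2N$, $\tau=t$, use $\xi_i=\min\{n+t,N\}$. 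You merely spell out the sign bookkeeping and the final algebraic cancellation a bit more explicitly than the paper does, but the ingredients and the decomposition are the same.
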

\begin{proof}
By (\ref{eq:mmc-bound}), we have
\[
    \log_q \sum_{\lambda \preceq N,n + \tau, \mu}
            \submodule{\mu}{\lambda} \le \sum_{i = 1}^\len (\mu_i - \xi_i) \xi_i + \len \log_q 4 +
    \log_q \binom{N + \len }{\len}.
\]

By (\ref{eq:log-matrix}), we have
\ifCLASSOPTIONonecolumn
\[
    - \log_q |\calT_\tau(R^{N \times \mu})| \le - \sum_{i = 1}^\len (N + \mu_i - \tau_i) \tau_i
    - \log_q \prod_{i = 0}^{\tau_\len - 1} (1 - q^{i - N}).
\]
\else 
\begin{multline*}
    - \log_q |\calT_\tau(R^{n \times \mu})| \le - \sum_{i = 1}^\len (n + \mu_i - \tau_i) \tau_i \\
    - \log_q \prod_{i = 0}^{\tau_\len - 1} (1 - q^{i - n}).
\end{multline*}
\fi
Note that
\ifCLASSOPTIONonecolumn
\[
    |\calT_{\tau'}(R^{N \times \min\{n + \tau_\len, N \}}| \le  |R^{N \times \tau'}| \submodule{\min\{n + \tau_\len, N\}}{\tau'} \le
    4^{\len} q^{\sum_{i=1}^\len (N + \min\{n + \tau_\len, N\} - \tau'_i)\tau'_i},
\]
\else 
\begin{multline*}
    |\calT_{\tau'}(R^{N \times \min\{n + \tau_\len, N \}}| \le  |R^{N \times \tau'}| \submodule{\min\{n + \tau_\len, N\}}{\tau'}\\
    \le
        4^{\len} q^{\sum_{i=1}^\len (N + \min\{n + \tau_\len, N\} - \tau'_i)\tau'_i},
\end{multline*}
\fi
where the first inequality comes from (\ref{eqn:MatrixCount}),
and the second inequality comes from (\ref{eq:log-ambient})
and (\ref{eq:log-module}).  Hence,
\ifCLASSOPTIONonecolumn
\begin{align*}
\sum_{\tau' \preceq \tau} |\calT_{\tau'}(R^{N \times \min\{n + \tau_\len, N \}}| &\le \sum_{\tau' \preceq \tau}
4^{\len} q^{\sum_{i=1}^\len (N + \min\{n + \tau_\len, N\} - \tau'_i)\tau'_i}\\
&\le {\binom{\tau_\len+\len}{\len}} 4^\len
q^{\sum_{i=1}^\len (N + \min\{n + \tau_\len, N\} - \tau_i)\tau_i}
\end{align*}
\else
\begin{align*}
&\quad \sum_{\tau' \preceq \tau} |\calT_{\tau'}(R^{N \times \min\{n + \tau_\len, N \}}| \\
&\le \sum_{\tau' \preceq \tau}
4^{\len} q^{\sum_{i=1}^\len (N + \min\{n + \tau_\len, N\} - \tau'_i)\tau'_i}\\
&\le {\binom{\tau_\len+\len}{\len}} 4^\len
q^{\sum_{i=1}^\len (N + \min\{n + \tau_\len, N\} - \tau_i)\tau_i}
\end{align*}
\fi
where the second inequality comes from the fact that $\tau$
maximizes the quantity $q^{\sum_{i=1}^\len (N + \min\{n + \tau_\len, N\} - \tau'_i)\tau'_i}$
and the fact that the number of shapes
$\tau'$ with $\tau'  \preceq \tau$ is upper-bounded by
$\binom{\tau_\len+\len}{\len}$.  Therefore, we have
\ifCLASSOPTIONonecolumn
\[
\log_q \sum_{\tau' \preceq \tau} |\calT_{\tau'}(R^{N \times \min\{n + \tau_\len, N\}})| \!\le\!
\sum_{i=1}^\len (N + \min\{n + \tau_\len, N\} - \tau_i)\tau_i + \len \log_q 4 + \log_q \mbox{$\binom{\tau_\len+\len}{\len}$}.
\]
\else
\begin{multline*}
\log_q \sum_{\tau' \preceq \tau} |\calT_{\tau'}(R^{N \times \min\{n + \tau_\len, N\}})| \!\le\!  \len \log_q 4 + \log_q \mbox{$\binom{\tau_\len+\len}{\len}$}\\
+\sum_{i=1}^\len (N + \min\{n + \tau_\len, N\} - \tau_i)\tau_i.
\end{multline*}
\fi
Combining all the above results, we have obtained the upper
bound.  In particular, when $\mu \succeq 2N$ and $\tau = t$,
we have $\xi_i =
\min\{ n+t, N \}$ for all $i$. Substituting this into the upper bound
completes the proof.
\end{proof}

We next study the asymptotic behavior of ${C}_{\text{AMMC}}$.
\begin{thm}\label{thm:asym-cap-ammc}
When $\mu \succeq 2N$ and $\tau = t$, the asymptotic capacity
$\bar{C}_{\text{MAMC}}$ is upper-bounded by
\begin{equation}\label{eq:asym-cap-ammc}
    \bar{C}_{\text{MAMC}}\le
    \begin{cases}
    \frac{\sum_{i = 1}^\len \bar{n} (\bar{\mu}_i - \bar{n} - \bar{t})}{\bar{n} |\bar{\mu}|} &\mbox{if } n+t \le N\\
    \frac{\sum_{i = 1}^\len (\bar{N} - \bar{t}) (\bar{\mu}_i - \bar{N})}{\bar{n} |\bar{\mu}|} &\mbox{if } n+t > N.
    \end{cases}
\end{equation}
\end{thm}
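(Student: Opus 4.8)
The plan is to derive the asymptotic bound directly from the finite-size converse bound of Corollary~\ref{cor:cap-ammc}, exactly as Theorem~\ref{thm:asymp-mmc} was deduced from Corollary~\ref{cor:cap-mmc} and Theorem~\ref{thm:amc-asymp} from Corollary~\ref{cor:cap-amc}. Specialized to the regime $\mu\succeq 2N$, $\tau=t$, that corollary reads
\[
C_\text{MAMC}\le\sum_{i=1}^{\len}(\min\{n+t,N\}-t)(\mu_i-\min\{n+t,N\})+g(n,N,t,\len),
\]
where $g$ denotes the leftover term $2\len\log_q4+\log_q\binom{N+\len}{\len}+\log_q\binom{t+\len}{\len}-\log_q\prod_{i=0}^{t-1}(1-q^{i-N})$. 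I would then divide through by $n|\mu|=\bar n|\bar\mu|m^2$ and let $m\to\infty$ with $\bar n,\bar\mu,\bar t,\bar N$ held fixed, showing that the $g$-term washes out and that the main sum rescales to the claimed expression.

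For the $g$-term the point is that each of its four pieces is $o(m^2)$: the first is a constant; the two Gaussian-binomial logarithms are $O(\log m)$ because $N$ and $t$ scale linearly in $m$; and $-\log_q\prod_{i=0}^{t-1}(1-q^{i-N})$ is even $O(1)$. For this last estimate I would use that $N\ge t$ in the present regime (the channel requires $t\preceq N$ and $t\preceq\mu$, and $\mu\succeq 2N$ already forces $\mu_1\ge 2N$), so that $\prod_{i=0}^{t-1}(1-q^{i-N})=\prod_{j=N-t+1}^{N}(1-q^{-j})\ge\prod_{j=1}^{\infty}(1-q^{-j})>0$, a positive constant independent of $m$. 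Hence $\frac{1}{\bar n|\bar\mu|m^2}\,g\to0$.

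For the main sum I would use that $\min$ commutes with positive scaling, so $\frac1m\min\{n+t,N\}=\min\{\bar n+\bar t,\bar N\}$ exactly, giving
\[
\frac{1}{m^2}\sum_{i=1}^{\len}(\min\{n+t,N\}-t)(\mu_i-\min\{n+t,N\})=\sum_{i=1}^{\len}\bigl(\min\{\bar n+\bar t,\bar N\}-\bar t\bigr)\bigl(\bar\mu_i-\min\{\bar n+\bar t,\bar N\}\bigr),
\]
and then split into cases: if $n+t\le N$ then $\min\{\bar n+\bar t,\bar N\}-\bar t=\bar n$ and each summand is $\bar n(\bar\mu_i-\bar n-\bar t)$; if $n+t>N$ then $\min\{\bar n+\bar t,\bar N\}-\bar t=\bar N-\bar t$ and $\bar\mu_i-\min\{\bar n+\bar t,\bar N\}=\bar\mu_i-\bar N$, giving $(\bar N-\bar t)(\bar\mu_i-\bar N)$. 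Dividing by $\bar n|\bar\mu|$ produces the two branches of (\ref{eq:asym-cap-ammc}). The only step requiring genuine care is the $O(1)$ bound on $-\log_q\prod_{i=0}^{t-1}(1-q^{i-N})$ via the convergent $q$-Pochhammer product; everything else is bookkeeping, and no coding construction is needed since only the converse is asserted.
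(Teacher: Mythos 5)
Your proposal is correct and matches the paper's own proof, which simply states that the theorem ``follows directly from Corollary~\ref{cor:cap-ammc}''; you have filled in the bookkeeping the paper leaves implicit, including the $q$-Pochhammer observation that $-\log_q\prod_{i=0}^{t-1}(1-q^{i-N})$ stays bounded because $t\le N$ forces the product to exceed $\prod_{j\ge1}(1-q^{-j})>0$.
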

\begin{proof}
This follows directly from Corollary~\ref{cor:cap-ammc}.
\end{proof}

\subsection{A Coding Scheme}

We again focus on the special case when $\mu \succeq 2N$ and $\tau = t$.
We describe a coding scheme that achieves the asymptotic bound in
Theorem~\ref{thm:asym-cap-ammc}.

\subsubsection{Encoding}
The encoding is a combination of the encoding
strategies for the MMC and the AMC. We first consider
the case when $n+t > N$.
Set $v \ge
t$. We construct the input matrix $X$ as
\[
        {X} = \mat{0 & 0 \\
        0 & \bar{X}},
\]
where the size of $\bar{X}$ is $(N - v) \times (m - v)$, and
the sizes of other zero matrices are readily available.  Here,
$\bar{X}$ is chosen from the set of principal row canonical
forms for $\calT_{\kappa}(R^{(N - v) \times (\mu - v)})$ by
using the construction in Section~\ref{sec:construction},
where $\kappa_i = \min\{ N - v, \lfloor (\mu_i - v) /2 \rfloor
\}$ for all $i$.  The encoding is illustrated in
Fig.~\ref{fig:ammc-scheme}.  Clearly, the encoding rate of the
scheme is $R_\text{MAMC} = \sum_{i = 1}^\len \kappa_{i}(\mu_i
- v - \kappa_i)$.  In particular, when $\mu \succeq 2N$, we
  have $\lfloor (\mu_i - v)/2 \rfloor \ge n- v$ for all $i$.
Thus, $\kappa_i = N - v$ for all~$i$, and  the encoding rate
is $R_\text{MAMC} = \sum_{i = 1}^\len (N - v) (\mu_i - N)$.

\begin{figure}[thb]
  \centering
\ifCLASSOPTIONonecolumn
  \setlength{\tabcolsep}{1pt}
  \begin{tabular}{ccc}
     \includegraphics{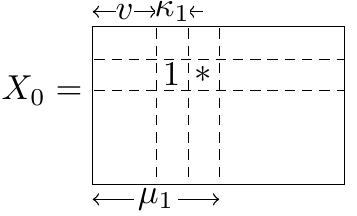} &
     \includegraphics{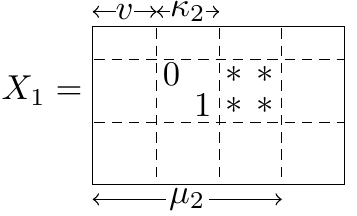} &
     \includegraphics{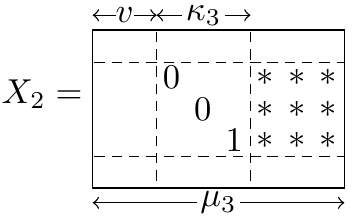}
   \end{tabular}
\else
  \scalebox{0.75}{\setlength{\tabcolsep}{1pt}\begin{tabular}{ccc}
     \includegraphics{fig7a} &
     \includegraphics{fig7b} &
     \includegraphics{fig7c}
   \end{tabular}}
\fi
  \caption{Illustration of the MAMC encoding scheme for $\len = 3$, $N=6$, $n = 5$, $v=2$, $\mu=(4,6,8)$, so that $\kappa = (1,2,3)$.}
  \label{fig:ammc-scheme}
\end{figure}

We then consider the case when $n + t \le N$. Similarly, set $v \ge t$.
We construct the input matrix $X$ as
\[
X = \mat{0 & \bar{X}},
\]
where the size of $\bar{X}$ is $n \times (m - v)$. Again, $\bar{X}$
is chosen from the set of principal row canonical forms for
$\calT_\kappa(R^{n \times (m - v)})$, where $\kappa_i = \min\{
n, \lfloor (\mu_i - v) \rfloor \}$ for all $i$.
Clearly, the encoding rate is $R_\text{MAMC} = \sum_{i= 1}^\len
\kappa_i(\mu_i - v - \kappa_i)$. In particular, when $\mu
\succeq 2N$, we have $\kappa_i = n$ for all $i$, and the encoding
rate $R_\text{MAMC} = \sum_{i= 1}^\len n(\mu_i -n - v)$.

\subsubsection{Decoding}

The decoder receives $Y = P\left(\mat{0\\X} + W \right)$ and attempts to recover
$\bar{X}$ from the row canonical form of $Y$.  We decompose
the noise matrix $W$ as
\[
W = BE = \mat{B_1 \\ B_2} \mat{E_1 & E_2},
\]
as we did in Section~\ref{sec:amc}. Clearly, we have
\[
\mat{0\\X} + W = \mat{B_1 E_1 & B_1 E_2 \\ B_2 E_1 & \bar{X} + B_2 E_2}.
\]
Following \cite{SKK10}, we define error trapping to be
successful if $\shape B_1 E_1 = t$. Assume that this is the
case. From Section~\ref{sec:amc}, there exists some matrix $T
\in \GL_N(R)$ such that
\[
    T\left(\mat{0\\X} + W \right) = \mat{B_1 E_1 & B_1 E_2 \\ 0 & \bar{X}} = \mat{B_1 & 0 \\ 0 & I}
    \mat{E_1 & E_2 \\ 0 & \bar{X}}.
\]
Note that
\[
    \RCF \left( \mat{E_1 & E_2 \\ 0 & \bar{X}} \right)
    = \mat{\tilde{Z}_1 & \tilde{Z}_2 \\ 0 & \bar{X}}
\]
for some $\tilde{Z}_1 \in R^{t \times v}$ in row canonical
form and some $\tilde{Z}_2 \in R^{t \times (m - v)}$. It
follows that
\begin{align*}
    \RCF\left(\mat{0\\X} + W \right) &= \RCF\left( \mat{B_1 & 0 \\ 0 & I}
    \mat{E_1 & E_2 \\ 0 & \bar{X}} \right) \\
    &= \mat{\tilde{Z}_1 & \tilde{Z}_2 \\ 0 & \bar{X} \\ 0 & 0}.
\end{align*}

Since $P$ is invertible, $\RCF(Y) = \RCF\left(\mbox{$\mat{0\\X}$} + W \right)$, from
which $\bar{X}$ can be readily obtained.  Hence, decoding
amounts to computing the row canonical form, whose complexity
is $\calO(nm \min\{ n, m \})$ basic operations over $R$.

The decoding can be summarized as follows.  First,  the
decoder computes $\RCF(Y)$.  Second, the decoder checks the
condition $\shape B_1 E_1 = t$.  If the condition does not
hold, the decoder declares a failure.  Otherwise, the decoder
outputs $\bar{X}$ from $\RCF(Y)$.

Let $n' = \min\{n+v, N\}$.
Let $\hat{Y}$ denote the left-most $n'$ columns of $\RCF(Y)$,
i.e., $\hat{Y} = \RCF(Y)[1{:} N, 1 {:} n']$. We note
that $\shape B_1 E_1 = t$ if and only if $\shape \hat{Y} = t +
\kappa$.  Hence, the error probability of the scheme is zero,
and the failure probability $P_f$ of the scheme is bounded by
$P_f < \frac{2t}{q^{1 + v - t}}$ (as shown in
Section~\ref{sec:amc}).

Finally, if
we set $v$ such that $v - t \to \infty$ and $\frac{v - t}{m}
\to 0$, as $m \to \infty$, we have $P_f \to 0$, and
$\bar{R}_\text{MAMC} = \frac{R_\text{MAMC}}{n | \mu |}$
approaches the upper bound of the asymptotic capacity
in Theorem~\ref{thm:asym-cap-ammc}.

\begin{thm}
When $\tau = t$ and $\mu \succeq 2N$, the coding scheme
described above can achieve the upper bound
(\ref{eq:asym-cap-ammc}).
\end{thm}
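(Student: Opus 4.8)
The plan is to wrap up, drawing only on facts already established: the scheme makes no undetected errors, its failure probability satisfies $P_f < 2t/q^{1+v-t}$ (by the analysis recalled from Lemma~\ref{lem:amc-prob}), and its encoding rate is $R_\text{MAMC} = \sum_{i=1}^\len (N-v)(\mu_i - N)$ when $n+t>N$ and $R_\text{MAMC} = \sum_{i=1}^\len n(\mu_i - n - v)$ when $n+t\le N$.

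First I would fix the error-trap width as a slowly growing function of the packet length, say $v = t + \lceil\sqrt{m}\,\rceil$, so that $v-t\to\infty$ while $(v-t)/m\to 0$ as $m\to\infty$. The first property forces $P_f\to 0$; the second makes the overhead of the trap asymptotically negligible relative to $m$. For all sufficiently large $m$ one also has $v < N$ and $v < m$, and, because $\mu\succeq 2N$ and $v=o(m)$, the truncation thresholds collapse as claimed --- $\lfloor(\mu_i-v)/2\rfloor \ge N-v$ in the first case and $\mu_i - v \ge n$ in the second --- so that $\kappa_i = N-v$, resp.\ $\kappa_i = n$, and the two displayed rate formulas are indeed the operative ones. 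Verifying these inequalities for large $m$ is the one genuinely required piece of bookkeeping.

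Next I would turn the vanishing $P_f$ into a vanishing end-to-end error probability. Since a decoding failure is a \emph{detectable} erasure of an entire codeword block and the scheme never errs otherwise, concatenating the inner scheme with an outer erasure-correcting code over a growing number of blocks (exactly as in \cite{SKK10}) yields arbitrarily small overall error probability at the cost of an $o(1)$ fraction of the rate. This reduction --- standard, but the only non-mechanical step --- is where I expect the write-up to need the most care.

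Finally I would take the normalized limit. With $N=\bar N m$, $n=\bar n m$, $\mu_i=\bar\mu_i m$, and $v/m\to\bar t$ (since $v=t+o(m)$ and $t=\bar t m$), in the case $n+t>N$
\[
\bar R_\text{MAMC}=\frac{1}{\bar n|\bar\mu|}\sum_{i=1}^\len\frac{N-v}{m}\cdot\frac{\mu_i-N}{m}\;\longrightarrow\;\frac{\sum_{i=1}^\len(\bar N-\bar t)(\bar\mu_i-\bar N)}{\bar n|\bar\mu|},
\]
and in the case $n+t\le N$
\[
\bar R_\text{MAMC}=\frac{1}{\bar n|\bar\mu|}\sum_{i=1}^\len\frac{n}{m}\cdot\frac{\mu_i-n-v}{m}\;\longrightarrow\;\frac{\sum_{i=1}^\len\bar n(\bar\mu_i-\bar n-\bar t)}{\bar n|\bar\mu|},
\]
which are precisely the two branches of \eqref{eq:asym-cap-ammc}. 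Since these rates are reliably achievable and match the upper bound of Theorem~\ref{thm:asym-cap-ammc}, that bound is attained with equality, which proves the theorem.
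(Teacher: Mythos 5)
Your proposal is essentially the paper's argument: pick $v$ with $v-t\to\infty$ and $(v-t)/m\to 0$, observe $P_f<2t/q^{1+v-t}\to 0$, check that $\kappa_i$ collapses to $N-v$ (resp.\ $n$) under $\mu\succeq 2N$, and pass $\bar R_{\text{MAMC}}=R_{\text{MAMC}}/(n|\mu|)$ to the limit to recover the two branches of \eqref{eq:asym-cap-ammc}; the explicit choice $v=t+\lceil\sqrt{m}\,\rceil$ is a fine instantiation. The one departure is the concatenation with an outer erasure code, which is superfluous here: the theorem asserts achievability of the \emph{asymptotic} capacity, i.e.\ in the regime $m\to\infty$, and in that regime the one-shot scheme already has failure (hence error) probability tending to zero while its normalized rate tends to the bound, so no outer code is needed. (The outer-code device in \cite{SKK10} serves the different goal of achieving capacity at \emph{fixed} matrix dimensions by coding over many independent channel uses; it is not part of this paper's argument for the asymptotic statement.) Apart from that extra, harmless step, your write-up matches the paper.
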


\section{Extensions}\label{sec:extension}

Previously, we assume that the transfer matrix
$A \in R^{N \times n}$ is uniform over all full-rank matrices,
and the noise matrix $Z \in R^{N \times m}$ is uniform over
all rank-$t$ matrices. In this section, we discuss possible extensions of
our previous channel models.

\subsection{Non-Uniform Transfer Matrices}

We note that the uniformness assumption on $A$ leads to a
``worst-case'' scenario. To see this, let us
consider a model identical to the MAMC
except for the fact that the transfer matrix $A$
is chosen according to an arbitrary probability
distribution on all full-rank matrices in $R^{N \times n}$.
It should be clear that the capacity of this channel
cannot be smaller than that of the MAMC.
This is because
our coding scheme does not rely on any particular
distribution of $A$ (as long as $A$ is full-column-rank and $Z$
is uniform over all rank-$t$ matrices),
and therefore still works for non-uniform distributions.
Hence, we have the following lower bound on the asymptotic
capacity $\bar{C}$:
\begin{equation}\label{eq:lower-bound}
    \bar{C} \ge
    \begin{cases}
    \frac{\sum_{i = 1}^\len \bar{n} (\bar{\mu}_i - \bar{n} - \bar{t})}{\bar{n} |\bar{\mu}|} &\mbox{if } n+t \le N\\
    \frac{\sum_{i = 1}^\len (\bar{N} - \bar{t}) (\bar{\mu}_i - \bar{N})}{\bar{n} |\bar{\mu}|} &\mbox{if } n+t > N.
    \end{cases}
\end{equation}

On the other hand, the capacity of the channel $Y = AX + Z$
can be upper-bounded by assuming that the transfer matrix $A$
is known at the receiver. One can show that the
asymptotic capacity is upper-bounded by
\begin{equation}\label{eq:upper-bound}
    \bar{C}\le
    \begin{cases}
    \frac{\sum_{i = 1}^\len \bar{n} (\bar{\mu}_i - \bar{t})}{\bar{n} |\bar{\mu}|} &\mbox{if } n+t \le N\\
    \frac{\sum_{i = 1}^\len (\bar{N} - \bar{t}) (\bar{\mu}_i - \bar{t})}{\bar{n} |\bar{\mu}|} &\mbox{if } n+t > N.
    \end{cases}
\end{equation}
Note that when $\mu_1$ is much larger than $N$, the difference
between the lower bound \eqref{eq:lower-bound} and the upper bound
\eqref{eq:upper-bound} is small. In this case,
our coding scheme is close to the capacity.

\subsection{Noise Matrix with Variable Rank}

We consider a more general case where the number of error
packets is allowed to vary, while still bounded by $t$.
More precisely, we assume that $Z$ is chosen uniform at
random from rank-$T$ matrices, where $T \in \{0, \ldots, t\}$
is a random variable with an arbitrary probability distribution
$\Pr[T = k] = p_k$.
Note that
\begin{align*}
H(Z) &= H(Z, T) = H(T) + H(Z|T)\\
&= H(T) + \sum_k p_k H(Z | T = k) \\
&= H(T) + \sum_k p_k \log_q |\calT_k(R^{N \times \mu})| \\
&\le H(T) + \log_q |\calT_t(R^{N \times \mu})|.
\end{align*}
Hence, the capacity may be reduced by at most
$H(T) \le \log_q (t+1)$ compared to the MAMC.
This loss is asymptotically
negligible for large $n$ and $N$.

The coding scheme remains the same. The only difference is
that now decoding errors may occur, because the condition
$\shape B_1 E_1 = t$ becomes $\shape B_1 E_1 = T$, which is,
in general, impossible to check.
Yet, the analysis of decoding is still
applicable, and the error probability is bounded by
$P_e < \frac{2t}{q^{1 + v - t}}$, which goes to $0$ as
$v - t \to \infty$.

\subsection{Non-uniform Noise Matrices}

We note that the uniformness assumption on $Z$ again gives
a ``worst-case'' scenario. To see this,
consider a model identical to the MAMC except for the fact
that the noise matrix $Z$ is chosen according to some non-uniform
probability distribution on $\calT_t(R^{N \times m})$.
It should be clear that the capacity can only increase, since
the entropy $H(Z)$ always decreases.

To apply our coding scheme in this more general
case, we need some transformation.
At the transmitter side,
let $X = X' Q$, where
$Q \in R^{m \times m}$ is chosen uniformly at random
(and independent of any other variables)
from the set of matrices of the form
\[
Q = \mat{Q'_{\mu_1 \times \mu_1} & 0 \\ 0 & I_{m - \mu_1}}.
\]
Here, $Q'$ is an invertible matrix (of size $\mu_1 \times \mu_1$)
and $I$ is an identify matrix (of size $(m-\mu_1) \times (m-\mu_1)$).
Clearly, $Q$ is invertible by construction.
At the receiver side, let $Y' = P Y Q^{-1}$, where $P \in R^{N \times N}$
is chosen uniformly at random (and independent of any other variables)
from all invertible matrices. Then
\begin{align*}
Y' &= P Y Q^{-1} = P (A X' Q + Z) Q^{-1} \\
&= (PA) X' + P Z Q^{-1}.
\end{align*}
After this transformation, our coding scheme can be applied directly.
Moreover, our error analysis still holds, and the failure probability
is again bounded by $P_f < \frac{2t}{q^{1 + v - t}}$.

\section{Conclusions}\label{sec:conclusion}

In this work, we have studied the matrix channel $Y = AX + BE$
where the packets are from the ambient space $\Omega$ of form
\eqref{eq:ambient1}.
Under the assumption that $A$ is
uniform over all full-rank matrices and $BE$ is uniform over
all rank-$t$ matrices, we have derived tight capacity
results and provided polynomial-complexity capacity-achieving
coding schemes, which naturally extend the work of
\cite{SKK10} from finite fields to certain finite rings. Our
extension is based on several new enumeration results and construction methods, for
matrices over finite chain rings, which may be of independent
interest.

We believe that there is still much work to be done in this
area.  One direction would be to further relax the assumptions on $A$
and $BE$.  Following this direction, we have explored a
particular case when $A$ can be any matrix and $BE = 0$ in
\cite{NFSU-submitted}.  Another direction would be to find
other applications of the algebraic tools developed in this paper,
especially the row canonical form.

\appendix

\subsection{Rings and Ideals}\label{sec:rings}

Let $R$ be a ring.  We will let $R^*$ denote the
nonzero elements of $R$, i.e., $R^* = R \setminus \{ 0 \}$.
An element $a$ in $R$ is called a \emph{unit} if $ab = 1$ for
some $b \in R$.  We will let $U(R)$ denote the units in $R$.
Two elements $a, b \in R$ are said to be \emph{associates} if
$a = ub$ for some $u \in U(R)$.  Associatedness is an
equivalence relation on $R$.

Suppose $a, b \in R$. The element $a$ \emph{divides} $b$,
written $a \mid b$, if $ac = b$ for some $c \in R$. Let $d \in
R^*$ be a nonzero element in $R$.  Two elements $a, b$ are
said to be \emph{congruent modulo $d$} if $d$ divides $a - b$.
Congruence modulo $d$ is an equivalence relation on $R$.  A
set containing exactly one element from each equivalence class
is called a \emph{complete set of residues} with respect to
$d$, and is denoted by $\calR(R, d)$.  Note that the
difference $a-b$ between distinct elements $a,b \in
\calR(R,d)$, $a\neq b$, can never be a multiple of $d$.

An element $a$ of $R^*$ is a called a \emph{zero-divisor} if
$ab = 0$ for some $b \in R^*$.  If $R$ contains no
zero-divisors, then $R$ is an \emph{integral domain}.  If $R$
is finite and an integral domain, then $R$ is, in fact, a
finite field.  This latter case is not of central interest in
this paper;  almost all of the rings considered here will have
zero divisors.

\begin{exam}
Let $R = \ZZ_8 \triangleq \{ 0, \ldots, 7 \}$, under integer
addition and multiplication modulo 8.  Then $U(\ZZ_8) = \{1,
3, 5, 7\}$.  There are four equivalent classes induced by
congruence modulo $4$, namely, $\{ 0, 4 \}$, $\{ 1, 5 \}$, $\{
2, 6 \}$, and $\{3, 7\}$. An example of a complete set of
residues with respect to the element $4$ in $\calR(\ZZ_8)$ is
$\calR(\ZZ_8, 4)= \{ 0, 1, 2, 3 \}$.  The zero-divisors of
$\ZZ_8$ form the set $\{ 2, 4, 6 \}$.
\end{exam}

A nonempty subset $I$ of $R$ that is closed under subtraction,
i.e., $a,b \in I$ implies $a-b\in I$, and closed under
inside-outside multiplication, i.e., $a\in I$ and $r \in R$
implies $ar \in I$, is called an \emph{ideal} of $R$.  If $A =
\{ a_1, \ldots, a_m \}$ is a finite nonempty subset of $R$, we
will use $\langle a_1, \ldots, a_m \rangle$ to denote the
ideal generated by $A$, i.e.,
\[
 \langle a_1, \ldots, a_m  \rangle =
 \{ a_1 c_1 + \cdots + a_m c_m \colon
   c_1, \ldots, c_m \in R \}.
\]
An ideal $I$ of $R$ is said to be \emph{principal} if $I$ is
generated by a single element in $I$, i.e., $I = \langle a
\rangle$ for some $a \in I$.  A ring $R$ is called a
\emph{principal ideal ring} (PIR) if every ideal $I$ of $R$ is
principal.  If $R$ is a PIR and also an integral domain, then
$R$ is called a \emph{principal ideal domain} (PID).

An ideal $N$ is said to be \emph{maximal} if $N \neq R$ and
the only ideals containing $N$ are $N$ and $R$ (in other
words, $N$ is ``maximal'' with respect to set inclusion among
all proper ideals). If $N$ is a maximal ideal, then the
quotient $R/N$ is a field, called a \emph{residue field}.  A
ring with a unique maximal ideal is said to be \emph{local}.

\begin{exam}
The ideals of $\ZZ_8$ are $\{ 0 \} = \langle 0 \rangle$, $\{
0, 4 \} = \langle 4 \rangle$, $\{0, 2, 4, 6 \} = \langle 2
\rangle$, and $R = \langle 1 \rangle$.  Thus, $\ZZ_8$ is a
PIR, and has a unique maximal ideal $\langle 2 \rangle$.  The
residue field $\ZZ_8 / \langle 2 \rangle$ is isomorphic to the
finite field $\mathbb{F}_2$ of two elements.
\end{exam}

\subsection{Proofs for Section~\ref{sec:canonical}}\label{sec:canonical-proofs}

\subsubsection{Proof of Proposition~\ref{prop:rcfprops}}
We prove the claims one by one.
\begin{enumerate}
\item The presence of a pivot $p$ in a column rules out the
possibility of another pivot in the same column and below $p$,
since all entries in the same column below $p$ must be zero
and hence cannot be pivots.

\item Deleting a row of $A$ does not influence the value or the
position of the pivots in the other rows; thus it easy to
verify that the modified matrix satisfies the four conditions
required for a matrix to be in row canonical form.

\item By definition $p_k$ has degree smaller than or equal to
that of any element in its row.  If $A$ contained an element
in a row below row $k$ of degree smaller than $d_k$, then the
pivot of that row would have degree smaller than $d_k$,
contradicting the property that pivots of smaller degree must
occur above pivots of larger degree.

\item By definition $p_k$ is the earliest element having minimum
degree in row $k$, so every element in row $k$ occurring
earlier than $p_k$ has degree strictly larger than $d_k$.  We
know from 3) that $A$ contains no element in a row below $k$
of degree smaller than $d_k$.  If such a row contains an
element of degree equal to $d_k$, then the pivot of that row
must occur later than $p_k$, which implies that every element
occurring in that row occurring in column $c_k$ or earlier has
degree strictly larger than $d_k$.

\item Consider $w_j$.  From 3) we know that $p_1$ divides every
element of $A$; in particular, $p_1$ divides every element of
column $j$ of $A$.  Since $w_j$ is a linear combination of
these elements, it must be that $p_1$ divides $w_j$.

\item If $j < c_1$, we know from 4) that every element in column
$j$ of $A$ has degree strictly greater than $d_1$ and so does
every linear combination of these elements, in particular
$w_j$.

\end{enumerate}

\subsubsection{Proof of Proposition~\ref{prop:uniqueness}}

If $A$ is the zero matrix, then its row canonical form must
also be the zero matrix, which is therefore unique.  Thus let
us assume that $A$ is nonzero.

We will proceed by induction on $n$. For $n = 1$, the proof is
obvious.  Thus suppose that $n > 1$, and let $B$ and $C$ be
two row canonical forms of $A$.  Clearly, $\row B = \row C$,
and each row of $B$ and $C$ are elements of $\row A$.  Let
$B[1, j_1]$ and $C[1, j_2]$ be the pivots in the first row of
$B$ and $C$, respectively.  From
Proposition~\ref{prop:rcfprops}--5 we have that $B[1,j_1] \mid
C[1,j_2]$ and $C[1,j_2] \mid B[1,j_1]$; thus $B[1,j_1]$ and
$C[1,j_2]$ are associates.  However, since pivot elements must
take the form $\pi^l$ for some $l$, we conclude that
$B[1,j_1]=C[1,j_2]$.  Suppose $j_1 < j_2$. By
Proposition~\ref{prop:rcfprops}--6 we have $\deg(B[1,j_1]) >
\deg(C[1,j_2])$, contradicting the fact that
$B[1,j_1]=C[1,j_2]$.  A similar contradiction arises if $j_1 >
j_2$.  We conclude that $j_1=j_2$, i.e., both $B$ and $C$ must
have exactly the same pivot element in exactly the same
position in their first row.

Now let $j_1 = j_2 = j$.  Consider the submodule of $\row A$
in which every element has zero in its $j$th component.  Every
element $a$ of this submodule is a linear combination
\[
a = \sum_{i=1}^n b_i B[i,{:}];
\]
for some choice of coefficients $b_1, \ldots, b_n$.  However,
since $a_j = 0$, and $B[i,j] = 0$ for $i > 2$, we must have
$b_1 B[1,j] = 0$.  Since $B[1,j]$ is the pivot element of the
first row of $B$, it divides every element of that row; thus
if $b_1 B[1,j]=0$, then $b_1 B[1,{:}]=0$, i.e., the first row
can only contribute 0 to $a$.   This means that the given
submodule is equal to $\row B[2{:}n,1{:}m]$.  Similarly, the given
submodule is also equal to $\row C[2{:}n,1{:}m]$.  By
Proposition~\ref{prop:rcfprops}--2, both $B[2{:}n, 1{:}m]$ and
$C[2{:}n, 1{:}m]$ are in row canonical form.  Thus by induction,
we have $B[2{:}n, 1{:}m] = C[2{:}n, 1{:}m]$.  This implies that
$B$ and $C$ can differ in their first row only.

Let us assume that $B[1,{:}] \ne C[1,{:}]$, i.e., that the first
rows of $B$ and $C$ are not equal, so that $\Delta =
(\delta_1, \ldots, \delta_m)= B[1,{:}] - C[1,{:}]$ is nonzero.
Since $\Delta$ is an element of $\row A$ with zero in its
$j$th component, we have $\Delta \in \row B[2{:}n, 1{:}m]$, from
which it follows that
\[
   \Delta = \sum_{i = 2}^{n} c_i B[i,{:}],
\]
for some $c_2, \ldots, c_{n} \in R$.  If $B[2{:}n, 1{:}m]$ is
the zero matrix, then $\Delta = 0$, which is a contradiction.
Otherwise, let $B[2, j_3]$ be the pivot of $B[2,{:}]$.  Note,
on the one hand, that $B[i,j_3]=0$ for all $i > 2$; thus
$\delta_{j_3} =  c_2 B[2, j_3]$, i.e., $\delta_{j_3}$ must be
a multiple of $B[2,j_3]$.  On the other hand, because $B[2,
j_3]$ and $C[2, j_3]$ are (identical) pivots, $B[1, j_3]$,
$C[1, j_3] \in \calR(R, B[2, j_3])$.  If $B[1,j_3]$ and
$C[1,j_3]$ are distinct, their difference, $\delta_{j_3}$,
cannot be a multiple of $B[2,j_3]$.  We conclude that
$\delta_{j_3} = 0$, i.e., $B[1,j_3]$ and $C[1,j_3]$ are not
distinct.  Since $B[2, j_3]$ is the pivot of $B[2,{:}]$ it
divides every element of $B[2,{:}]$; thus if $c_2 B[2, j_3] =
0$, then $c_2 B[2,{:}] = 0$.  Continuing this argument, we have
$c_i B[i,{:}] = 0$ for all $i \geq 2$.  Therefore, we have
$\Delta=0$, which is a contradiction.  This establishes
uniqueness.

\subsection{Proofs for Section~\ref{sec:matrix-constraint}}\label{sec:appendix-proofs}

\subsubsection{Proof of Lemma~\ref{lem:correspondence}}
Let $\calS$
denote the set of row canonical forms in $\calT_{\kappa}(R^{n
\times \mu})$, and let $\calG$ denote the set of submodules of
$R^{\mu}$ with shape $\kappa$.  Let $\phi: \calS \to \calG$ be
the map that takes a matrix $B \in \calS$ to its row module
$\row B$.  We will show that $\phi$ is a one-to-one
correspondence.

If $\phi(B_1) = \phi(B_2)$ then $B_1$ and $B_2$ are
left-equivalent, and so $B_2$ is a row canonical form of $B_1$
and vice-versa.  By the uniqueness of the row canonical form,
we have $B_1 = B_2$; thus $\phi$ is injective.

Now let $M$ be a submodule of $R^{\mu}$ with $\shape M =
\kappa$, and construct a matrix $A$ such that every element in
$M$ is a row of $A$.  Clearly, $\row A = M$ and $\shape A =
\kappa$. Since $\kappa \preceq n$, $\RCF(A)$ has at most
$n$ nonzero rows.
Let $B$ be the submatrix of $\RCF(A)$ consisting of the top
$n$ rows.  Then we have $\shape B = \shape A = \kappa$.
Hence, $B \in \calT_{\kappa}(R^{n \times \mu})$, and the map
$\phi$ is surjective.

\subsubsection{Proof of Proposition~\ref{prop:principal}}
We will show that (i) every $X$ constructed as above is a
principal row canonical form, and (ii) every principal row
canonical form has a $\pi$-adic decomposition following the
above conditions.

We begin with Claim~(i).  First, we track the diagonal entries
in $X$. Clearly, by construction, the first $\kappa_1$
diagonal entries in $X$ are $1$; they are contributed by
$X_0$.  The next $\kappa_2 - \kappa_1$ diagonal entries in $X$
are $\pi$; they are contributed by $X_1$.  Continuing this
argument, we conclude that the diagonal entries in $X$ are
indeed of the form (\ref{eq:principal-form}).

Second, we show that $X$ satisfies all the four conditions for
row canonical forms.
\begin{enumerate}
\item By construction, the first $\kappa_\len$ rows of $X$ are
the only nonzero rows. Hence, $X$ satisfies Condition~1.
\item It suffices to show that the nonzero diagonal entries
are precisely the pivots in $X$.  Suppose that the $i$th
diagonal entry $X[i, i] = \pi^{l}$.  Then by construction,
$\pi^l$ is contributed by $X_{l}$ and $\kappa_l < i \le
\kappa_{l+1}$.  Note that for each auxiliary matrix
$X_{l\rq{}}$, only the first $\kappa_{l\rq{}+1}$ rows are
nonzero.  Thus, the $i$th row in $X_{l\rq{}}$ is zero for all
$l\rq{} = 0, \ldots, l-1$. In particular, $X_{l\rq{}}[i, j] =
0$, for all $l\rq{} = 0, \ldots, l-1$ and for all $j > i$.
Therefore, we have, for all $j > i$,
\begin{align*}
    X[i, j] &= \sum_{l\rq{} = 0}^{\len - 1} \pi^{l\rq{}} X_{l\rq{}}[i, j] \\
      &= \sum_{l\rq{} = l}^{\len - 1} \pi^{l\rq{}} X_{l\rq{}}[i, j] \\
      &= \pi^l \sum_{l\rq{} = l}^{\len - 1} \pi^{l\rq{}-l} X_{l\rq{}}[i, j].
\end{align*}
That is, every $X[i, j]$ is a multiple of $\pi^l$ whenever
$j > i$.  On the other hand, by construction, $X[i, j] = 0$
whenever $j < i$.  It follows that $X[i, i]$ is indeed the
pivot of row $i$.  Hence, $X$ satisfies Condition~2.

\item Since the nonzero diagonal entries are the pivots, $X$
satisfies Condition~3.

\item Suppose that the $i$th pivot $X[i, i] = \pi^{l}$.
Then, we have $\kappa_l < i \le \kappa_{l+1}$.  Note that for
each auxiliary matrix $X_{l\rq{}}$, all other entries in
column $i$ are zero as long as $l\rq{} \ge l$. Thus, we have,
for all $j \ne i$,
\begin{align*}
    X[j, i] &= \sum_{l\rq{} = 0}^{\len - 1} \pi^{l\rq{}} X_{l\rq{}}[j, i] \\
      &= \sum_{l\rq{} = 0}^{l - 1} \pi^{l\rq{}} X_{l\rq{}}[j, i].
\end{align*}
It follows that $X[j, i] \in \calR(R, \pi^l)$ for all $j \ne
i$.  Hence, $X$ satisfies Condition~4.
\end{enumerate}

We turn now to Claim~(ii).  Let $X$ be a principal row
canonical form in $\calT_\kappa(R^{n \times \mu})$.  Then the
diagonal entries in each $X_i$ must satisfy
\[
X_i[1,1], \ldots, X_i[\kappa_{i+1}, \kappa_{i+1}] =
\underbrace{0, \ldots, 0}_{\kappa_i}, \underbrace{1, \ldots, 1}_{\kappa_{i+1} - \kappa_i}.
\]
Moreover, since $X$ satisfies Condition~4, it follows that
each $X_i$ satisfies the first condition described above.
Since $X$ satisfies Condition~2, it follows that
$X_i[\kappa_{i+1}+1{:} n, 1 {:} m]$ is a zero matrix.
Finally, due to the constraints imposed by $\mu$, $X_i[1{:}
n, \mu_{i+1}+1 {:} m]$ is a zero matrix for all $i$.
Therefore, each $X_i$ satisfies the second and third
conditions.  This completes the proof.

\subsubsection{Proof of Theorem~\ref{thm:MatrixCount}}

We need two technical lemmas.
The first lemma is a natural extension of the
well-known rank decomposition.

\begin{lem}\label{lem:decomp1}
Let $B$ be the row canonical form of $A \in R^{n \times m}$.
Let $\tilde{B}$ be the submatrix of $B$ consisting of only
nonzero rows. Then $A$ can be decomposed as a product $P_1
\tilde{B}$ of some full-column-rank matrix $P_1$ and the
matrix $\tilde{B}$.  Moreover, the number of $P_1$ producing
such a decomposition is $q^{n \sum_{i = 1}^{\len-1}
i(\kappa_{i+1} - \kappa_i)}$, where $\kappa = \shape A$.
\end{lem}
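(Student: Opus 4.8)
The plan is to exhibit one decomposition $A=P_1\tilde B$ with $P_1$ full column rank, to observe that all $P_1$ satisfying $P_1\tilde B=A$ form a coset of $\{D:D\tilde B=0\}$, and to count that set by a single short exact sequence. The first point to settle is that $\tilde B\in R^{\kappa_\len\times m}$, i.e.\ that $\RCF(A)$ has exactly $\kappa_\len$ nonzero rows. These rows $b_1,\dots,b_k$ generate $\row A\cong R^\kappa$; peeling pivots off the top exactly as in the proof of Proposition~\ref{prop:uniqueness} (if $\sum_i d_ib_i=0$, the column containing the pivot $p_1$ of $b_1$ forces $d_1p_1=0$, hence $d_1b_1=0$ since $p_1$ divides $b_1$ entrywise, and one recurses by Proposition~\ref{prop:rcfprops}.2), the kernel of the surjection $R^{1\times k}\twoheadrightarrow\row A$, $d\mapsto\sum_i d_ib_i$, equals $\prod_{i=1}^k\{x\in R:xp_i=0\}$, where $p_i$ is the pivot of $b_i$. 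Hence $\row A$ is a product of $k$ nonzero cyclic $R$-modules, and comparison with the decomposition of $R^\kappa$ (a product of $\kappa_\len$ nonzero cyclic modules) via the uniqueness of the shape in Theorem~\ref{thm:ModuleShape} forces $k=\kappa_\len$.

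For existence, note that $B=\RCF(A)$ is left-equivalent to $A$, so $B=UA$ for some $U\in\GL_n(R)$, and therefore $A=U^{-1}B=U^{-1}\mat{\tilde B\\0}=V_1\tilde B$, where $V_1\in R^{n\times\kappa_\len}$ is formed by the first $\kappa_\len$ columns of $U^{-1}$. Since $UV_1=\mat{I_{\kappa_\len}\\0}$, part~3 of Proposition~\ref{prop:shape-properties} gives $(\kappa_\len,\dots,\kappa_\len)=\shape\mat{I_{\kappa_\len}\\0}\preceq\shape V_1$, while part~5 gives the reverse inequality; hence $\shape V_1=(\kappa_\len,\dots,\kappa_\len)$, i.e.\ $V_1$ is full column rank, and we may take $P_1=V_1$.

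For the count, observe that for $P_1\in R^{n\times\kappa_\len}$ one has $P_1\tilde B=A$ if and only if $(P_1-V_1)\tilde B=0$, so the number of admissible $P_1$ equals $|\{D\in R^{n\times\kappa_\len}:D\tilde B=0\}|$; since this constraint decouples across the $n$ rows of $D$, it equals $|\calK|^n$, where $\calK=\{d\in R^{1\times\kappa_\len}:d\tilde B=0\}$ is the kernel of the surjection $R^{1\times\kappa_\len}\twoheadrightarrow\row\tilde B=\row A$. As $\shape A=\kappa$, we get $|\row A|=q^{|\kappa|}$, so $|\calK|=q^{\len\kappa_\len}/q^{|\kappa|}=q^{\len\kappa_\len-|\kappa|}$. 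Finally the identity $\len\kappa_\len-|\kappa|=\sum_{i=1}^{\len-1}i(\kappa_{i+1}-\kappa_i)$---obtained by rewriting $\kappa_\len-\kappa_i=\sum_{j=i}^{\len-1}(\kappa_{j+1}-\kappa_j)$ and swapping the order of summation, using $\kappa_0=0$---yields the claimed count $q^{n\sum_{i=1}^{\len-1}i(\kappa_{i+1}-\kappa_i)}$.

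The step I expect to demand the most care is the opening structural claim that $\RCF(A)$ has exactly $\kappa_\len$ nonzero rows; once it is granted, the rest is shape bookkeeping for $V_1$, one exact-sequence count, and an elementary summation identity. I note in passing that the same peeling argument shows $\calK\subseteq\langle\pi\rangle R^{1\times\kappa_\len}$ (no $d\in\calK$ has a unit coordinate, else the corresponding pivot would be zero), so by Nakayama's lemma every solution $P_1$ of $P_1\tilde B=A$ is automatically full column rank; thus the count above is simultaneously the number of \emph{all} such $P_1$ and the number of full-column-rank ones.
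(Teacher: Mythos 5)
Your proof is correct; the existence step agrees with the paper's (split $U^{-1}$ into column blocks, then argue shape via Proposition~\ref{prop:shape-properties}), but your counting step takes a genuinely different, more module-theoretic route. The paper factors $\tilde{B}=DB'$ where $D$ is the diagonal matrix of pivots and $B'$ is full row rank, invokes Lemma~\ref{lem:column-rank} to reduce $X\tilde{B}=A$ to $XD=P_1D$, and counts solutions entry by entry, obtaining $q^{n\sum_k l_k}$ with $p_k=\pi^{l_k}$; the identity $\sum_k l_k=\sum_i i(\kappa_{i+1}-\kappa_i)$ is then invoked. You instead compute $|\calK|$ for the kernel $\calK$ of the surjection $R^{1\times\kappa_\len}\twoheadrightarrow\row A$ as $q^{\len\kappa_\len-|\kappa|}$, avoiding the pivot bookkeeping entirely, and verify the summation identity by swapping sums. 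Your writeup is also more complete than the paper's in two respects: you prove, rather than assert, that $\RCF(A)$ has exactly $\kappa_\len$ nonzero rows; and you observe via Nakayama that every solution of $X\tilde{B}=A$ is automatically full column rank---a fact the paper needs (it later treats the solution count as the count of full-column-rank decompositions in Lemma~\ref{lem:fixed-RCF}) but never checks.
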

\begin{proof}
Since $B$ is the row canonical form of $A$, $A = PB$ for some
invertible matrix $P \in \GL_n(R)$.  Since $\kappa = \shape A
= \shape B$, $B$ has $\kappa_\len$ nonzero rows, and $\tilde{B}
\in R^{\kappa_\len \times m}$.  Let $P = \mat{P_1 & P_2}$, where
$P_1 \in R^{n \times \kappa_\len}$ and $P_2 \in R^{n \times (n
- \kappa_\len)}$. Then we have
\[
   A = PB = \mat{P_1 & P_2} \mat{\tilde{B} \\ 0} = P_1 \tilde{B}.
\]
Since $P$ is invertible,  $P_1$ is full column rank.

Next, we count the number of such decompositions.  Consider
the matrix equation $X \tilde{B} = P_1 \tilde{B}$, in unknown
$X$.  Clearly, the number of decompositions of $A$ is equal to
the number of solutions to this matrix equation.  Let
$\tilde{B}[i, j_i]$ be the pivot of the $i$th row of
$\tilde{B}$, for all $i = 1, \ldots, \kappa_\len$. Then
$\tilde{B}[i, j_i]$ divides the $i$th row of $\tilde{B}$. It
follows that $\tilde{B} = D B'$, where $D =
\diag\left(\tilde{B}[1, j_1], \ldots, \tilde{B}[\kappa_\len,
j_{\kappa_\len}] \right)$, and the $i$th row of $B'$ is equal to
the $i$th row of $\tilde{B}$ divided by $\tilde{B}[i, j_i]$.
Clearly, $B'[i, j_i] = 1$ for all $i = 1, \ldots, \kappa_\len$.
Since $j_1, \ldots, j_{\kappa_\len}$ are all distinct,
$\shape B' = (\kappa_\len, \ldots, \kappa_\len)$, which implies that
$B'$ is full row rank.
By Lemma~\ref{lem:column-rank}, $(X D
- P_1 D)B' = 0$ if and only if $X D - P_1 D = 0$. Hence, $X
\tilde{B} = P_1 \tilde{B}$ if and only if $X D = P_1 D$.
Thus, it suffices to count the number of solutions to $X D =
P_1 D$. Note that $X D = P_1 D$ is equivalent to the following
system of equations
\begin{equation}\label{eq:entry-eq}
        X[i, k] \tilde{B}[k, j_k] = P_1[i, k] \tilde{B}[k, j_k],
        i=1,\ldots,n,~k=1,\ldots \kappa_\len.
\end{equation}
Suppose that $\tilde{B}[k, j_k] = \pi^{l_k}$ for some $0 \le k
< \len$.  Then it is easy to check that the equation $X[i, k]
\pi^{l_k} = P_1[i, k] \pi^{l_k}$ has exactly $q^{l_k}$
solutions for $X[i, k]$.  It follows that (\ref{eq:entry-eq})
has exactly $q^{n(l_1 + \cdots + l_{\kappa_\len})}$ solutions.
Finally, by using the fact that $\sum_{k = 1}^{\kappa_\len} l_k =
\sum_{i = 1}^{\len-1} i(\kappa_{i+1} - \kappa_i)$, we complete
the proof.
\end{proof}

\begin{lem}\label{lem:fixed-RCF}
The number of matrices in $R^{n \times \mu}$ having a given
row canonical form in $\calT_{\kappa}(R^{n \times \mu})$ is
equal to
\[
|R^{n \times \kappa}| \prod_{i = 0}^{\kappa_\len -
1} (1 - q^{i - n}).
\]
\end{lem}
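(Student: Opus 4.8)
The plan is to count matrices $A \in R^{n \times \mu}$ whose row canonical form is a \emph{fixed} matrix $B$ (with $B$ in row canonical form and $\shape B = \kappa$), and show this count does not depend on the choice of $B$. Write $\tilde B \in R^{\kappa_\len \times \mu}$ for the submatrix of $B$ formed by its nonzero rows. By Lemma~\ref{lem:decomp1}, every $A$ with $\RCF(A) = B$ can be written uniquely as $A = P_1 \tilde B$ for some full-column-rank $P_1 \in R^{n \times \kappa_\len}$, but not every such $P_1$ is admissible: in the proof of Lemma~\ref{lem:decomp1} it is shown that exactly $q^{n \sum_{i=1}^{\len-1} i(\kappa_{i+1}-\kappa_i)}$ matrices $P_1$ yield each fixed product $P_1 \tilde B$. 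So the number of $A$ with $\RCF(A) = B$ equals (number of admissible full-column-rank $P_1 \in R^{n \times \kappa_\len}$ such that $P_1 \tilde B \in R^{n \times \mu}$), divided by $q^{n \sum_{i=1}^{\len-1} i(\kappa_{i+1}-\kappa_i)}$.

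First I would pin down which $P_1$ give $P_1 \tilde B \in R^{n \times \mu}$. Since $B$ is in row canonical form with pivots $\pi^{l_k}$ in columns $j_k$, writing $\tilde B = D B'$ with $D = \diag(\pi^{l_1}, \ldots, \pi^{l_{\kappa_\len}})$ and $B'$ full row rank and $B'[k,j_k] = 1$, we have $P_1 \tilde B = (P_1 D) B'$. Because $B'$ has a pivot equal to $1$ in each of its $\kappa_\len$ rows (at distinct columns $j_k$), column $j_k$ of $P_1 \tilde B$ is exactly column $k$ of $P_1 D$, i.e.\ $\pi^{l_k}$ times column $k$ of $P_1$. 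The membership constraint $P_1 \tilde B \in R^{n \times \mu}$ (rows in $R^\mu$) restricts, for each column $c$, the degree of every entry according to which block of $\mu$ the column index lies in; reading this off at columns $j_1, \ldots, j_{\kappa_\len}$ forces column $k$ of $P_1$ to have entries divisible by $\pi^{\max\{0,\,(\text{constraint at }j_k) - l_k\}}$. The key observation — which is where the $\kappa$ vs $\mu$ bookkeeping must be handled carefully — is that because $B \in \calT_\kappa(R^{n\times\mu})$ (so $\row B \subseteq R^\mu$ and $\shape B = \kappa$), the pivot degrees $l_k$ and pivot columns $j_k$ interlock with $\mu$ in precisely the way needed for the set of admissible $P_1$ to be exactly $\{P_1 : \text{column } k \text{ of } P_1 \text{ lies in } \langle\pi^{i_k}\rangle^n\}$ for a multiset of exponents $i_k$ realizing the tuple $(\kappa_1, \kappa_2-\kappa_1, \ldots)$ of multiplicities — equivalently, the admissible $P_1$ range over a copy of $R^{n \times \kappa}$. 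I expect this identification to be the main obstacle: verifying that the column-degree constraints on $P_1$ inherited from $P_1\tilde B \in R^{n\times\mu}$, together with the full-column-rank requirement, produce exactly the shape $\kappa$ (and not some other subshape), using that $\RCF(A)$ has shape $\kappa$.

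Granting that, I would finish as follows. Among all matrices $P_1 \in R^{n \times \kappa}$ (i.e.\ $n\times\kappa_\len$ matrices with the prescribed column-divisibility), I count those that are full column rank, equivalently (by Proposition~\ref{prop:shape-properties}) those of shape $(\kappa_\len, \ldots, \kappa_\len)$. A clean way is to strip off the $\pi$-powers column-by-column: $P_1$ full column rank iff the associated $n \times \kappa_\len$ matrix over $R$ obtained by dividing each column by its forced $\pi$-power is full column rank, and the number of full-column-rank $n \times \kappa_\len$ matrices over $R$ is $q^{\len n \kappa_\len}\prod_{i=0}^{\kappa_\len - 1}(1 - q^{i-n})$ by transposing the full-row-rank count quoted in Section~\ref{sec:matrix}. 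Accounting for the $\len n \kappa_\len$ free symbols replaced by the smaller number of free symbols in $R^{n\times\kappa}$ (which contributes the factor $|R^{n\times\kappa}| / q^{\len n\kappa_\len}$ after also dividing by $q^{n\sum_i i(\kappa_{i+1}-\kappa_i)}$), the powers of $q$ telescope — using $|R^{n\times\kappa}| = q^{n|\kappa|}$ from \eqref{eq:count-all-matrices}, $|\kappa| = \sum_{i=1}^\len \kappa_i$, and $\sum_{i=1}^{\len-1} i(\kappa_{i+1}-\kappa_i) = \len\kappa_\len - |\kappa|$ — leaving exactly $|R^{n \times \kappa}| \prod_{i=0}^{\kappa_\len - 1}(1 - q^{i-n})$, as claimed. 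Since this count is the same for every row canonical form $B \in \calT_\kappa(R^{n\times\mu})$, the lemma follows.
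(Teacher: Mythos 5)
Your overall plan — reduce to counting full\mbox{-}column\mbox{-}rank matrices $P_1$ with $P_1\tilde B = A$ via Lemma~\ref{lem:decomp1} and then divide by the fiber size — is exactly the paper's argument. But the middle of your proposal inserts a constraint that is not there, misidentifies its solution set, and then uses a false equivalence to paper over the resulting mismatch.

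Concretely: you treat ``$P_1 \tilde B \in R^{n\times\mu}$'' as a nontrivial restriction on $P_1$. It is not. The rows of $\tilde B$ lie in $R^\mu$, and $R^\mu$ is a submodule of $R^m$, hence closed under $R$-linear combinations; therefore every row of $P_1\tilde B$ lies in $R^\mu$ for \emph{every} $P_1 \in R^{n\times\kappa_\len}$. (Indeed, in your own formula the exponent $\max\{0, r_k - l_k\}$ is always $0$, because $\tilde B$ has rows in $R^\mu$, which forces the pivot degree $l_k$ to be at least the degree constraint $r_k$ at column $j_k$.) So the admissible $P_1$ are simply \emph{all} full\mbox{-}column\mbox{-}rank matrices in $R^{n\times\kappa_\len}$, not ``a copy of $R^{n\times\kappa}$.'' Your identification is wrong, and if you pushed it through you would get a contradiction: whenever $\kappa_1 < \kappa_\len$, a matrix in $R^{n\times\kappa}$ has at least one column whose entries are all multiples of $\pi$, and such a matrix cannot have shape $(\kappa_\len,\ldots,\kappa_\len)$, so $R^{n\times\kappa}$ contains \emph{no} full\mbox{-}column\mbox{-}rank matrices at all. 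The assertion ``$P_1$ is full column rank iff the matrix obtained by dividing each column by its forced $\pi$-power is full column rank'' is likewise false (the forward direction fails for exactly this reason, and ``dividing by a $\pi$-power'' is not a well-defined operation over $R$ anyway). You arrive at the correct final expression only because the arithmetic in your last paragraph silently reverts to counting \emph{all} full\mbox{-}column\mbox{-}rank matrices in $R^{n\times\kappa_\len}$ and dividing by the Lemma~\ref{lem:decomp1} fiber size $q^{n\sum_{i=1}^{\len-1}i(\kappa_{i+1}-\kappa_i)}$ — which is precisely the paper's (correct) computation, with your detour playing no role. To repair the write\mbox{-}up: drop the ``admissible $P_1$'' discussion entirely, note that $P_1\tilde B \in R^{n\times\mu}$ is automatic and that $\row(P_1\tilde B)=\row\tilde B$ for full\mbox{-}column\mbox{-}rank $P_1$ (so $\RCF(P_1\tilde B)=B$), and then the quotient count $q^{\len n\kappa_\len}\prod_{i=0}^{\kappa_\len-1}(1-q^{i-n})\big/q^{n(\len\kappa_\len-|\kappa|)} = q^{n|\kappa|}\prod_{i=0}^{\kappa_\len-1}(1-q^{i-n}) = |R^{n\times\kappa}|\prod_{i=0}^{\kappa_\len-1}(1-q^{i-n})$ gives the claim.
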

\begin{proof}
Let $B$ be a row canonical form in $\calT_{\kappa}(R^{n \times
\mu})$.  Let $\tilde{B}$ be the submatrix of $B$ consisting of
only nonzero rows.  Clearly, $\tilde{B} \in R^{\kappa_s \times
\mu}$.  We would like to count the number of matrices in $R^{n
\times \mu}$ having the row canonical form $B$.

By Lemma~\ref{lem:decomp1}, every matrix $A$ with
$\RCF(A) = B$ has $q^{n \sum_{i = 1}^{\len - 1} i(\kappa_{i+1}
- \kappa_i)}$ decompositions of the form $A = C \tilde{B}$ for
some full-column-rank $C \in R^{n \times \kappa_s}$.  Hence,
the number of matrices in $R^{n \times \mu}$ having the row
canonical form $B$ is equal to the number of full-column-rank
matrices of size $n \times \kappa_\len$ divided by $q^{n
\sum_{i = 1}^{\len - 1} i(\kappa_{i+1} - \kappa_i)}$, which
can be simplified to $|R^{n \times \kappa}| \prod_{i =
0}^{\kappa_\len - 1} (1 - q^{i - n})$.
\end{proof}

We can partition all the matrices in $\calT_{\kappa}(R^{n
\times \mu})$ based on their row canonical forms: two matrices
belong to the same class if and only if they have the same row
canonical form.  By Lemma~\ref{lem:correspondence},  the
number of such classes is $\submodule{\mu}{\kappa}$.  By
Lemma~\ref{lem:fixed-RCF}, the number of matrices in each
class is $|R^{n \times \kappa}| \prod_{i = 0}^{\kappa_\len -
1} (1 - q^{i - n})$.  Combining these two results gives us
Theorem~\ref{thm:MatrixCount}.

\section*{Acknowledgment}
The authors would like to thank Michael Kiermaier for useful
discussions on the topic of row canonical forms for matrices
over finite chain rings.


\end{document}